%
\documentclass[runningheads]{llncs}
\usepackage[T1]{fontenc}
%
\usepackage{graphicx}
%
\usepackage{hyperref}
\usepackage{color}

\urlstyle{rm}
\usepackage{comment}

\usepackage{amsfonts}
\usepackage{mathtools}
\usepackage[style=alphabetic,natbib=true,maxcitenames=3]{biblatex}

\addbibresource{sagt24/bib.bib}

\newcommand{\e}{\varepsilon}

\newcommand{\N}{\mathbb{N}}
\newcommand{\Z}{\mathbb{Z}}
\newcommand{\Q}{\mathbb{Q}}
\newcommand{\R}{\mathbb{R}}
\newcommand{\C}{\mathbb{C}}
\newcommand{\logn}{\log n}
\newcommand{\loglogn}{\log\logn}
\DeclareMathOperator{\supp}{supp}
\DeclareMathOperator*{\argmax}{arg\,max}

\DeclareMathOperator*{\E}{\mathbb{E}}
\DeclareMathOperator{\sign}{sign}

\DeclarePairedDelimiter{\iop}{(}{)}

\DeclarePairedDelimiter{\ios}{\{}{\}}
\DeclarePairedDelimiter{\absolute}{|}{|}
\DeclarePairedDelimiter{\normm}{\lVert}{\rVert}
\DeclarePairedDelimiter{\inprod}{\langle}{\rangle}
\newcommand{\norm}{\normm*}
\newcommand{\p}{\iop*}

\newcommand{\s}{\ios*}
\newcommand{\abs}{\absolute*}

\newcommand{\inp}{\inprod*}

\newcommand{\sym}{\text{Sym}}
\DeclareMathOperator{\NE}{NE}
\DeclareMathOperator{\PNE}{PNE}

\DeclareMathOperator{\br}{BR}

\spnewtheorem*{theorem*}{Theorem}{\normalshape\bfseries}{\itshape}

\begin{document}
\title{Randomness Requirements and Asymmetries in Nash Equilibria}
%
%
\author{Edan Orzech\and Martin Rinard}
%
\authorrunning{E. Orzech and M. Rinard}
%
\institute{MIT, Cambridge MA 02139, USA\\
\email{\{iorzech,rinard\}@csail.mit.edu}}
%
\maketitle              
\begin{abstract}
In general, Nash equilibria in normal-form games may require players to play (probabilistically) mixed strategies. We define a measure of the complexity of finite probability distributions and study the complexity required to play Nash equilibria in finite two player $n\times n$ games with rational payoffs.  
Our central results show that there exist games in which there is an exponential vs. linear gap in the complexity of the mixed distributions that the two players play in the (unique) Nash equilibrium of these games. This gap induces asymmetries in the amounts of space required by the players to represent and sample from the corresponding distributions using known state-of-the-art sampling algorithms. We also establish exponential upper and lower bounds on the complexity of Nash equilibria in normal-form games. These results highlight (i) the nontriviality of the assumption that players can play any mixed strategy and (ii) the disparity in resources that players may require to play Nash equilibria in normal-form games.

\keywords{Nash equilibrium  \and Bounded capabilities \and Randomness}
\end{abstract}
\section{Introduction}
A well-known result in game theory is the existence of Nash equilibria (NEs) in any finite normal-form game~\cite{nash1951non}. In general, NEs may require players to play (probabilistically) mixed strategies. This is required for reducing the existence of NEs to the existence of fixed points of a continuous function (where the continuum is the product set of mixed strategies of all players).
To establish the existence of a NE, players must be able to sample from the probability distribution corresponding to their strategy. 

We define a measure of complexity of finite probability distributions and study the 
associated randomness requirements and asymmetries in NEs in two-player $n\times n$ normal-form games
with rational payoffs. For a given finite distribution, this measure quantifies the amount of space required to represent the coefficients of the distribution assuming,
as is the case for known state-of-the-art sampling algorithms~\cite{knuth1976complexity,Devr86,walker1977efficient,vose1991linear,smith2002sample,bringmann2017efficient,saad2020fast,bringmann2013succinct}, that a distribution $\p{\frac{p_1}{q},...,\frac{p_n}{q}}$, where $p_1,...,p_n,q\in\N$, is stored as the binary representations of $p_1,...,p_n$ (the denominator is then the sum of the numerators $p_i$). It is important to understand this space requirement because, to sample from their NE mixed strategy distributions (and therefore play their corresponding mixed NEs), players using these sampling algorithms must store the coefficients of these distributions.

Our central results show that there exist games in which there is an exponential vs. linear gap in the complexity of the mixed distributions that the two players play at (in these games the unique) NE. This gap induces corresponding asymmetries in the amount of space required by the players to represent and sample from the corresponding distributions using known state-of-the-art sampling algorithms~\cite{knuth1976complexity,Devr86,walker1977efficient,vose1991linear,smith2002sample,bringmann2017efficient,saad2020fast,bringmann2013succinct}. We also establish exponential upper and lower bounds on the complexity of any NE in the games that we study. These results highlight (i) the nontriviality of the assumption that players can play any mixed strategy and (ii) the disparities in resources that players may require to play NEs in the games that we study.
We now discuss the motivation to our model, lay out the model, and then discuss our results and proof methods in greater detail.

\subsection{Model and motivation}\label{subsec:model}
To play a mixed strategy NE, players use exact sampling algorithms to implement their mixed strategies (and therefore require space sufficient to run these sampling algorithms). We work with finite probability distributions with rational coefficients (we call them {\it rational} distributions\footnote{As stated in \citet{avis2010enumeration}, for 2-player games with rational payoffs, every NE can be described as a convex combination of {\it extreme NEs}, and those constitute NE strategy profiles where all the mixed strategies are rational (as the extreme NEs are vertices of rational polytopes). In particular, if the game is nondegenerate, then the extreme NEs are the only NEs of the game.}) and consider a complexity measure $C$ that is the least common multiple of the denominators of the coefficients of the probability distribution. Working with the standard binary encoding of integers, $C$ bounds the amount of space required to store the coefficients: a distribution $x$ over $[n]$ with $C(x)=q$ requires at least $\Omega(n+\log q)$ bits (when $x$ is unbalanced) and at most $O(n\log q)$ bits (when $x$ is balanced).\footnote{See Theorem \ref{thm:space reqs} for a formal statement.}

Because $C$ bounds the number of bits required to store the coefficients of finite rational distributions, it also bounds space requirements for known state-of-the-art and commonly used exact sampling algorithms~\cite{knuth1976complexity,walker1977efficient,smith2002sample,vose1991linear,saad2020fast,bringmann2017efficient,Devr86,bringmann2013succinct}, which store the numerators of the coefficients of these distributions. Therefore, our results have direct implications on the space complexity of these algorithms, in the context of playing NEs in $n\times n$ games.

We now define $C$ formally.
\begin{definition}
\label{def:complexity}
Let $x=(x_1,...,x_n)$ be a distribution where $x_1,...,x_n\in\Q$. Represent $x$ uniquely as $\p{\frac{p_1}{q},...,\frac{p_n}{q}}$ where $p_1,...,p_n,q\in\N$, $q>0$ and $\gcd(p_1,...,p_n)=1$.\footnote{Here $0\in\N$.} Define the {\it complexity} of $x$ to be $C(x)=q$.
\end{definition}
For instance, $C\p{\p{\frac{25}{100},\frac{51}{100},\frac{24}{100}}}=100$.

In this paper we only consider rational distributions.
However, for completeness, if $x$ is a distribution that does not have this form, define $C(x)=\infty$. Each player $i$ has a capability $c_i$ that characterizes the amount of storage the player has available to store the coefficients. 

We define the capabilities of the players.
\begin{definition}\label{def:capability}
For player $i$, we say that $i$ has capability $c_i$ if they can play only strategies from the set $\s{x\in\Delta[n]\mid C(x)\le c_i}$, where $\Delta[n]$ is the set of distributions over $[n]$.
\end{definition}

We consider games where the players have different capabilities. This motivates our definition of the {\it bounded-randomness game}, as follows:
\begin{definition}\label{def:game}
In a $(c_1,c_2)$-bounded-randomness game, the two players have (integral) randomness capabilities $c_1,c_2\ge1$. The set of all strategies (pure and mixed) available to player $i$ is given by $\s{x\in\Delta[n]\mid C(x)\le c_i}$.
\end{definition}

The game is defined by two payoff matrices $A,B\in\Z^{n\times n}$ (see Section \ref{sec:defs and lemmas} for further details on this choice of games). For the set of NEs of the game, $\NE(A,B)$, let $C_1(A,B)=\min_{(x,y)\in\NE}C(x)$ and $C_2(A,B)=\min_{(x,y)\in\NE}C(y)$. I.e., $C_i(A,B)$ is the minimal capability required from player $i$ to play any NE strategy. By the best-response condition (see \cite{roughgarden2010algorithmic}, Chapter 3), and because pure strategy distributions have complexity 1, it follows that having players with capabilities at least $C_i(A,B)$ is a necessary condition for the existence of a NE in the corresponding bounded-randomness game (see Theorem \ref{thm:no ne without caps}).

\subsection{Our results}\label{subsec:results}
We study upper and lower bounds on the minimum capabilities $C_i(A,B)$ required to play NEs. Our main results are as follows. First, we prove an upper bound on $C_i(A,B)$ 
as a function of $n$ and the payoff matrices $A,B$.

\begin{theorem}\label{thm:2.2}
For every $A,B\in\Z^{n\times n}$ there holds
\begin{align*}
    C_1(A,B)\le\p{\max_{i,j}\abs{B_{i,j}}}^n\cdot2^{O(n\logn)},~
    C_2(A,B)\le\p{\max_{i,j}\abs{A_{i,j}}}^n\cdot2^{O(n\logn)}
\end{align*}
In particular, if $A,B$ have entries which are $poly(n)$, then $C_i(A,B)=2^{O(n\logn)}$.
\end{theorem}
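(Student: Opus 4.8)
The plan is to bound $C_1(A,B)$ by exhibiting a single NE strategy $x$ for player~1 whose complexity is controlled, and symmetrically for $C_2(A,B)$; I will describe the argument for $C_1$, since the other case is identical with the roles of $A$ and $B$ swapped. The standard structural fact (cf.\ the footnote citing \citet{avis2010enumeration}) is that every two-player game with rational payoffs has a Nash equilibrium $(x,y)$ in which the supports $S=\supp(x)$, $T=\supp(y)$ have equal size $k\le n$, and on these supports $(x,y)$ is determined by the \emph{indifference conditions}: $y$ makes player~1 indifferent across all pure strategies in $S$, and $x$ makes player~2 indifferent across all pure strategies in $T$. Since the complexity of $x$ is what we want to bound, and $x$ is pinned down by player~2's indifference conditions, only the matrix $B$ enters the bound on $C_1$ — this is why $B$, not $A$, appears on the right-hand side.

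The key step is to write $x$ restricted to $S$ as the solution of a linear system with integer coefficients coming from $B$, and then apply Cramer's rule. Concretely, let $k=\abs{S}=\abs{T}$; the indifference conditions for player~2 say that $x^\top B$ is constant on the coordinates indexed by $T$, i.e.\ $(B_{T,S})^\top x_S = v\cdot\mathbf{1}$ for the (unknown) equilibrium value $v$, together with the normalization $\mathbf{1}^\top x_S = 1$. This is a system of $k+1$ linear equations in the $k+1$ unknowns $(x_S, v)$, with a coefficient matrix $M$ whose entries are either entries of $B$ or $0,1$; $M$ is a $(k+1)\times(k+1)$ integer matrix. By Cramer's rule each coordinate $x_i$, $i\in S$, equals $\det(M_i)/\det(M)$ where $M_i$ is $M$ with one column replaced by $(\mathbf{0},1)^\top$. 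Hence $C(x)$ divides $\abs{\det(M)}$ (after clearing the gcd the denominator only shrinks). It remains to bound $\abs{\det(M)}$: expanding along the all-ones row/column reduces it to a $k\times k$ determinant of a submatrix of $B$ (up to sign and a bounded number of terms), so by Hadamard's inequality $\abs{\det(M)} \le \p{\sqrt{k}\cdot\max_{i,j}\abs{B_{i,j}}}^{k} \le \p{\max_{i,j}\abs{B_{i,j}}}^{n}\cdot 2^{O(n\logn)}$, using $k\le n$ and $k^{k/2}=2^{O(n\logn)}$.

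The main obstacle — and the reason the $2^{O(n\logn)}$ slack is there rather than a clean $\p{\max\abs{B_{i,j}}}^n$ — is handling the bordering by the normalization constraint and the unknown value $v$ cleanly: the determinant one actually needs is not literally a minor of $B$ but a $(k{+}1)\times(k{+}1)$ matrix with an extra all-ones row and an extra column, so one must argue (e.g.\ by cofactor expansion along that row) that its absolute value is at most a sum of $k$ many $k\times k$ minors of $B$, each bounded by Hadamard. A secondary technical point is to make sure a \emph{rational} equilibrium with $\abs{S}=\abs{T}$ actually exists and that $\det(M)\ne 0$ for it (nondegeneracy, or passing to an extreme equilibrium via the cited polytope-vertex characterization); once that is in place, the divisibility $C(x)\mid\abs{\det(M)}$ and the Hadamard bound give the stated inequality, and the ``in particular'' claim follows by substituting $\max_{i,j}\abs{B_{i,j}}=poly(n)$, which contributes another $2^{O(n\logn)}$ factor.
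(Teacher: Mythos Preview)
Your approach is correct and lands on the same endgame (Hadamard's inequality applied to a matrix built from $B$), but the paper reaches it by a different device. Instead of setting up a square $(k{+}1)\times(k{+}1)$ indifference system and invoking Cramer's rule, the paper writes the equilibrium conditions for player~1's strategy as a mixed system of integer linear equalities \emph{and inequalities} in the unknowns $(p_I,u)$---namely $(B_{I,J})^\top p_I = u\mathbf{1}$, $(B_{I,[n]\setminus J})^\top p_I \le u\mathbf{1}$, $p_I \ge \mathbf{1}$---and then appeals to the integer-programming bound of \citet{von1978bound}: any such system that admits an integer solution admits one whose coordinate sum is at most $n(n{+}1)M$, where $M$ is the largest absolute subdeterminant of the constraint matrix. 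Hadamard then bounds $M$ exactly as in your sketch.

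The practical difference is that the paper's route never needs $|S|=|T|$ or $\det M\neq 0$: it works for \emph{any} NE, degenerate or not, by simply replacing the given integer numerator vector with a small one satisfying the same best-response constraints (which, paired with the original $y$, is still an NE). Your Cramer's-rule argument is more self-contained---no black-box IP lemma---but, as you yourself flag, it needs the extra step of passing to an extreme equilibrium (a vertex of the best-response polytope) to guarantee an invertible square subsystem in the degenerate case. Once that is done, both arguments yield the same $\p{\max_{i,j}\abs{B_{i,j}}}^n\cdot 2^{O(n\log n)}$ bound.
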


We then prove lower bounds via explicit constructions. Each construction is also slightly modified to produce lower bounds for constant-sum games. The first bound is a $2^{\Tilde{\Omega}(\sqrt n)}$ lower bound. Let $\Z_2=\s{0,1}$.

\begin{theorem}\label{thm:2.3}
There is a function $f(n)=\Tilde\Theta(\sqrt n)$ such that
\begin{enumerate}
    \item\label{item:thm 2.3.1} for every $n$ there are $A,B\in\Z_2^{n\times n}$ such that $C_1(A,B)=2^{f(n)}$ and $C_2(A,B)=n$;
    \item\label{item:thm 2.3.2} for every $n$ there are $A,B\in\Z_2^{n\times n}$ such that the game $(A,B)$ is a constant-sum game and $C_1(A,B)=C_2(A,B)=2^{f(n)}$.
\end{enumerate}
\end{theorem}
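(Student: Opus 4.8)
I will prove both items by explicit constructions; item~\ref{item:thm 2.3.2} is a symmetric zero-sum variant of item~\ref{item:thm 2.3.1}, so I describe the latter in detail. \textbf{The target equilibrium.} Choose pairwise coprime integers $s_1,\dots,s_k\ge 2$ (for concreteness, the first $k$ primes) with $\sum_i s_i\le n$ and $k$ maximal; then $k=\Tilde\Theta(\sqrt n)$ and $\lcm(s_1,\dots,s_k)=\prod_i s_i=2^{\Tilde\Theta(\sqrt n)}$ by Chebyshev's prime estimates, and only $O(\sqrt n\log n)$ rows are left over. Partition the $n$ rows of the game into blocks $R_1,\dots,R_k$ with $\abs{R_i}=s_i$, plus the leftover rows. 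I will construct $A,B\in\Z_2^{n\times n}$ whose unique Nash equilibrium $(x^\star,y^\star)$ has $x^\star$ uniform on each block $R_i$ with some positive total block mass $w_i$ (so each coordinate of $R_i$ equals $w_i/s_i$) and small positive masses on leftover coordinates, while $y^\star$ is the uniform distribution on $[n]$. Then $C(y^\star)=n$; and, after arranging $\gcd(\mathrm{num}(w_i),s_i)=1$, the coordinate $w_i/s_i$ has denominator divisible by $s_i$, so $C(x^\star)$ is a multiple of $\lcm(s_1,\dots,s_k)$, while a direct computation gives $C(x^\star)\le 2^{\Tilde O(\sqrt n)}$; hence $C(x^\star)=2^{f(n)}$ with $f(n):=\log_2 C(x^\star)=\Tilde\Theta(\sqrt n)$.

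\textbf{The two matrices.} The columns of $B$ (which govern player~$2$'s incentives and therefore pin down $x$) come in two kinds. For each block $R_i$ I include $s_i$ \emph{window columns}: the $c$-th one is the indicator of the cyclic window of length $\tau_i$ in $R_i$ starting at $c$, where $1\le\tau_i\le s_i-1$ and $\gcd(\tau_i,s_i)=1$, together with a fixed pad set $Q_i$ that meets every \emph{other} block in a proper nonempty subset. Since the pad is common to all $s_i$ windows of the block, differencing consecutive window-indifference equations forces $x_c=x_{c+\tau_i}$ inside $R_i$, and iterating along the cycle $c\mapsto c+\tau_i$ (of length $s_i$, since $\gcd(\tau_i,s_i)=1$) forces $x$ to be constant on $R_i$. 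The remaining columns---one per leftover coordinate, again meeting every block properly---are chosen so that the full linear system ``all $n$ column payoffs against $x$ are equal, and $\mathbf 1^T x=1$'' has the block-uniform $x^\star$ as its unique solution; the pad intersection numbers $\abs{Q_i\cap R_j}$ and the $\tau_i$ are precisely the tunable parameters of this system, so one solves for them to land on the desired $w_i$. The matrix $A$ (governing player~$1$) is a $0/1$ matrix with constant row sums, shaped as a ``generalized matching pennies'': its indifference conditions $Ay=\mu\mathbf 1$, $\mathbf 1^T y=1$ force $y=y^\star$ uniform, and its cyclic structure guarantees that against any distribution of proper support, player~$1$'s best-response set is too small to sustain an equilibrium on a proper sub-support.

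\textbf{Verification, uniqueness, and the constant-sum case.} That $(x^\star,y^\star)$ is a Nash equilibrium is immediate: against $y^\star$ every row of $A$ pays the same (constant row sums), and against $x^\star$ every column of $B$ pays the common value $\lambda$ by construction. To obtain the lower bounds $C_1(A,B)\ge 2^{f(n)}$ and $C_2(A,B)\ge n$ one must rule out every other equilibrium; the plan is the standard one---show the game is nondegenerate and admits no equilibrium supported on a proper subset of rows or columns, so that $(x^\star,y^\star)$ is the unique Nash equilibrium---after which the two bounds follow from the $\gcd$/valuation computation giving $s_i\mid\text{denominator}(w_i/s_i)$ and from $C(y^\star)=n$. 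For item~\ref{item:thm 2.3.2} I instead take $B:=J-A$, where $J$ is the all-ones matrix, so that $A+B=J$ makes the game constant-sum (hence strategically zero-sum), and I place the window/pad block structure on \emph{both} the rows and the columns of $A$ symmetrically; the classical characterization of completely mixed zero-sum games (Kaplansky) then yields a unique pair of optimal strategies, each of the block-uniform form above, whence $C_1(A,B)=C_2(A,B)=2^{f(n)}$.

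\textbf{Main obstacle.} The delicate part is the simultaneous design of $B$ (and, in the constant-sum case, of $A$): choosing the block sizes $s_i$, the window lengths $\tau_i$, and the pad sets $Q_i$ so that---using only entries in $\{0,1\}$ and exactly $n$ columns that all pay the common value $\lambda$ against $x^\star$, which (by a $p$-adic argument for each prime $p\mid s_i$) forces every such column to meet $R_i$ in a proper nonempty set---the unique probability solution of the equilibrium equations is exactly the block-uniform $x^\star$ whose coordinate denominators inject the full product $\prod_i s_i$ into $C(x^\star)$, all while keeping $y^\star$ uniform so that $C_2(A,B)=n$. The accompanying no-proper-sub-support (nondegeneracy) argument needed for uniqueness is the other technical point. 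Once this gadget and those conditions are in place, the remaining verifications reduce to elementary linear algebra over $\Q$ together with Chebyshev's prime bounds.
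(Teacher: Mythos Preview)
Your high-level plan---partition $[n]$ into blocks whose sizes are the first $k\approx\sqrt n$ primes and force the equilibrium denominator to pick up $\prod_i s_i$---is exactly the idea the paper uses. But the paper's implementation is far simpler than yours and avoids precisely the obstacle you flag as unresolved. The paper takes $A=I_N$ (an imitation game), so player~1's best-response condition automatically forces $\supp x\subseteq\supp y$ and $y=U_{\supp y}$; there is no separate ``cyclic $A$ with constant row sums'' to design. For $B$, the paper uses explicit circulant-type blocks $B^{p_k}$ (all-ones minus a single shifted diagonal) placed on a shifted block diagonal; there are no pad sets $Q_i$, no window lengths $\tau_i$, and no system of intersection numbers to tune. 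The equilibrium $x$ is then computed in closed form via Cramer's rule: $x_i=\frac{u}{p_k}$ for $i$ in block $k$, and a short number-theoretic check shows $C(x)=\bigl(\prod_k p_k\bigr)\bigl(n+1+\sum_k\frac{1}{p_k}\bigr)$. Uniqueness is a two-page case analysis using only the specific pattern of $B$. Padding to arbitrary $n$ is done by a dummy-strategy lemma.

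In your write-up, the sentence ``the pad intersection numbers $\abs{Q_i\cap R_j}$ and the $\tau_i$ are precisely the tunable parameters \dots\ so one solves for them to land on the desired $w_i$'' is where the proof actually lives, and you do not carry it out. You would need to exhibit $0/1$ pad sets meeting every block properly, with exactly $n$ columns total, such that the resulting linear system has a \emph{unique} probability solution whose block weights $w_i$ satisfy $\gcd(\mathrm{num}(w_i),s_i)=1$, and then prove no proper-support equilibrium exists. None of these steps is routine; your ``Main obstacle'' paragraph correctly identifies them as the heart of the matter but does not resolve them. Likewise, for item~\ref{item:thm 2.3.2} you change the construction (``place the window/pad block structure on both rows and columns of $A$ symmetrically'') without specifying it, and invoking Kaplansky does not by itself give the denominator lower bound. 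The paper instead observes that its $B$ is $(\pi,\tau)$-symmetric for an explicit permutation, and a short lemma shows that whenever $(I_n,B)$ has a unique fully-mixed NE $(\tilde x,U_n)$, the constant-sum game $({\bf 1}-B,B)$ has the unique NE $(\tilde x,\pi^{-1}(\tilde x))$; this recycles item~\ref{item:thm 2.3.1} with no new gadget needed.
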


To prove these bounds, we construct block matrices in which the block sizes are prime numbers and the lower bound is the product of those sizes.
In each block on its own, the players need to fully mix their strategies so as to play an equilibrium (i.e., pick each pure strategy with probability $>0$). We show that this property is preserved in the large game with these blocks as `subgames', so there is a unique NE and it is fully mixed.
In addition, the proof uses some well-known properties of prime numbers. Afterwards, using a different idea, we strengthen the result with a $2^{\Omega(n)}$ lower bound:

\begin{theorem}\label{thm:2.4}
There is a function $g(n)=\Theta(n)$ such that
\begin{enumerate}
    \item\label{item:thm 2.4.1} for every $n$ there are $A,B\in\Z_2^{n\times n}$ such that $C_1(A,B)=2^{g(n)}$ and $C_2(A,B)=n$;
    \item\label{item:thm 2.4.2} for every $n$ there are $A,B\in\Z_2^{n\times n}$ such that the game $(A,B)$ is a constant-sum game and $C_1(A,B)=C_2(A,B)=2^{g(n)}$.
\end{enumerate}
\end{theorem}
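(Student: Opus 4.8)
The plan is to prove both parts by explicit constructions over $\Z_2$ that \emph{force} one player's unique equilibrium strategy to have an exponentially large denominator. The block construction behind Theorem~\ref{thm:2.3} is bottlenecked because it builds the complexity as a product of pairwise‑coprime pieces living in disjoint blocks whose sizes must add up to $n$, which caps the product at $2^{\Tilde{\Theta}(\sqrt n)}$. To do better I would make the \emph{single} forced distribution already have an exponential denominator. The target shape for player~$1$ is a ``doubling'' distribution $x^{*}\propto(1,1,2,2,4,4,\dots,2^{k-1},2^{k-1})$ with $k=\Theta(n)$: its numerators are coprime, so $C(x^{*})=2(2^{k}-1)=2^{g(n)}$ for a suitable $g(n)=\Theta(n)$. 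The repeated entries are not an accident: a bare geometric vector $(1,2,4,\dots)$ has unique subset sums, which would force player~$2$ to have at most two active pure strategies and hence leave $x^{*}$ under‑determined; the repetitions are exactly what let player~$2$ keep many pure strategies simultaneously indifferent against $x^{*}$ while the denominator stays exponential. On the other side I would choose player~$1$'s payoff matrix so that the only strategy of player~$2$ keeping player~$1$ indifferent among all of its active rows is uniform, $y^{*}=(1/n,\dots,1/n)$, which has complexity exactly $n$; a nonsingular $0/1$ matrix with constant row sums (e.g.\ the all‑ones matrix minus the identity) does this, since against any non‑uniform $y$ player~$1$ strictly prefers to put all mass on the coordinates where $y$ is smallest. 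This gives the first (asymmetric) statement.

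The mechanism producing $x^{*}$ is a grouping‑and‑doubling gadget. I would partition player~$1$'s pure strategies into consecutive pairs and arrange the relevant sub‑interactions of $B$ so that (i) within each pair the induced small subgame is matching‑pennies‑like — the canonical situation in which a player must mix fully — which forces equal weight on the two members of each pair; and (ii) among the columns player~$2$ uses with positive probability there are, for each $j$, a column ``indicator of one member of pair $j$'' and a column ``indicator of both members of pair $j-1$'', and these two have equal payoff against $x^{*}$, forcing (weight on pair $j$) $=2\cdot$(weight on pair $j-1$); the free factor of two comes from a single $0/1$ column that sees both members of a pair at once. All the columns needed for (ii) can be made to carry the same payoff against $x^{*}$ — consistent precisely because of the redundancy above — and chaining the doubling relations pins $x^{*}$ down up to normalization. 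Then $(x^{*},y^{*})$ is checked to be a Nash equilibrium directly: both strategies are interior, so the best‑response inequalities collapse to the indifference equations, which hold by construction.

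The main obstacle, and the step only sketched for Theorem~\ref{thm:2.3}, is \emph{uniqueness}: since $C_{i}(A,B)$ is a minimum over all equilibria, I must rule out every other Nash equilibrium, in particular any of smaller support (which could have tiny complexity). The argument I would run is ``full mixing propagates'': show that in any Nash equilibrium no pure strategy is dropped, because dropping one lets the opponent profitably deviate inside the corresponding sub‑gadget; once both players are known to have full support, the best‑response conditions become a linear system whose unique solution is $(x^{*},y^{*})$. The technical heart is choosing the off‑gadget entries of $A$ and $B$ so that they simultaneously (a) realize the chaining/doubling relations, (b) make each sub‑gadget strictly punishing enough to force full support, and (c) stay $0/1$; verifying that these demands are mutually consistent, and that $x^{*}$ and $y^{*}$ really have the claimed supports and payoffs, is where the work lies.

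For the constant‑sum part I would symmetrize: replace $(A,B)$ by $(M,\overline{M})$, where $\overline{M}$ is the entrywise $0/1$ complement of $M$ so the game is constant‑sum with row‑plus‑column payoff $1$, and build $M$ so that the zero‑sum game it defines has a unique pair of optimal strategies that are \emph{both} of the doubling shape — e.g.\ by taking $M$ symmetric so that the maximin strategies of the two players coincide, or by running the grouping‑and‑doubling gadget on both sides of $M$. Then the value is attained only at these strategies, each equalizes the opponent across its support, the resulting linear systems again have unique solutions, and $C_{1}=C_{2}=2^{g(n)}$ for the same $g(n)=\Theta(n)$. The one extra check is that the equalizing strategy is genuinely optimal (not merely an equalizer), which again reduces to a nondegeneracy condition on $M$.
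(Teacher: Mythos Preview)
Your construction of $B$ has a concrete flaw. For each pair index $j$ you describe a column ``indicator of one member of pair $j$'' and a column ``indicator of both members of pair $j-1$'', and note that against $x^{*}\propto(1,1,2,2,\dots,2^{k-1},2^{k-1})$ these two carry the same payoff, namely $2^{j-1}$ up to normalization. But the best-response condition requires \emph{every} column in player~$2$'s support to carry the \emph{same} payoff $u$ against $x^{*}$, not merely pairwise equality within each fixed $j$. Across different $j$ your indicator columns have payoffs $2^{0},2^{1},\dots,2^{k-1}$, which are all distinct; so either player~$2$'s support spans several pairs and indifference fails outright, or it is confined to the top pair and then only two or three columns are active, leaving $B^{T}x=u\mathbf{1}$ far too underdetermined to force $x^{*}$. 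Your assertion that ``all the columns needed for (ii) can be made to carry the same payoff against $x^{*}$ --- consistent precisely because of the redundancy above'' is exactly the step that does not go through: duplicating each weight equalizes payoffs \emph{within} a scale $j$, not across scales. A repair would need each column to hit several pairs so that all $n$ subset sums coincide, but then the columns are no longer the simple indicators you describe, and showing that $n$ such $0/1$ columns exist, are linearly independent, force $x^{*}$ uniquely, and admit no smaller-support equilibrium is the entire difficulty --- which you have only deferred, along with the uniqueness argument you already flag as the main obstacle.

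The paper sidesteps this by not targeting a geometric vector at all. It takes $A=I_{n}$ and a fixed banded $0/1$ matrix $\beta_{n}$ (three nonzero diagonals, cyclically shifted), so that the indifference equations $(x^{T}\beta_{n})_{j}=u$ unwind into a single linear recurrence $x_{n-k}=x_{n-k+2}-x_{n-k+3}+x_{n-k+4}$ with dominant root $\rho\approx-1.466$; the equilibrium weights grow like $|\rho|^{k}$, and one gets $C(x)=|K(\beta_{n})|/g_{n}$ with $g_{n}=\gcd(b_{n},b_{n+1})$ the gcd of two consecutive recurrence terms. The hard step is then not designing $B$ but bounding $g_{n}\le\gamma^{n}$ for some $\gamma<|\rho|$, which the paper does by tracking the Euclidean algorithm on $(b_{n},b_{n+1})$ and using that $\rho\notin\Q$. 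Uniqueness and full support are obtained by a direct case analysis on the banded structure of $\beta_{n}$, not by local matching-pennies gadgets. For the constant-sum item the paper does not build a fresh symmetric $M$; it observes that $\beta_{n}$ is already $(\pi,\pi)$-symmetric for the reversal $\pi(i)=n-i+1$, and a general lemma then transfers the unique fully-mixed equilibrium of $(I_{n},\beta_{n})$ to the constant-sum game $(\mathbf{1}_{n\times n}-\beta_{n},\beta_{n})$ with strategies $(\tilde{x},\pi^{-1}(\tilde{x}))$.
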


To prove these bounds, we study a pair of linear recurrences. These recurrences have appealing properties which align with our goal of proving the lower bound. For example, the set of NEs of this construction is characterized by linear dependencies in these recurrences.
More specifically, we show that the players need to fully mix their equilibrium strategies when the particular linear dependencies do not exist, and these dependencies indeed do not appear when the game matrices are of size $n\times n$ for $n\ge8$.
The lower bound constructions in Theorem \ref{thm:2.3}, Item \ref{item:thm 2.3.1} and Theorem \ref{thm:2.4}, Item \ref{item:thm 2.4.1} are instances of what are sometimes called {\it imitation games} \cite{mclennan2005imitation,mclennan2010simple},
since player 1 gains a payoff of 1 iff its pure strategy matches that of player 2.
These results also reveal an exponential vs. linear asymmetry in the randomness requirements from the two players.

Our proofs for Theorem \ref{thm:2.3}, Item \ref{item:thm 2.3.2} and Theorem \ref{thm:2.4}, Item \ref{item:thm 2.4.2} require an additional step. There, we exploit the high symmetry in the lower bound constructions of the corresponding Item 1s. We also use the high symmetry in the set of NEs of zero-sum games.

\begin{example}\label{example:1}
Consider the following $8\times8$ game game defined by the following payoff matrices:
\begin{align}\label{eq:example1}
    A=\left(\begin{array}{cccccccccccccccc}
    1&&0&&0&&0&&0&&0&&0&&0\\
    0&&1&&0&&0&&0&&0&&0&&0\\
    0&&0&&1&&0&&0&&0&&0&&0\\
    0&&0&&0&&1&&0&&0&&0&&0\\
    0&&0&&0&&0&&1&&0&&0&&0\\
    0&&0&&0&&0&&0&&1&&0&&0\\
    0&&0&&0&&0&&0&&0&&1&&0\\
    0&&0&&0&&0&&0&&0&&0&&1
    \end{array}\right),~~~
    B=\left(\begin{array}{cccccccccccccccc}
    0&&1&&1&&0&&0&&0&&0&&0\\
    0&&0&&1&&1&&0&&0&&0&&0\\
    0&&1&&0&&1&&1&&0&&0&&0\\
    0&&0&&1&&0&&1&&1&&0&&0\\
    0&&0&&0&&1&&0&&1&&1&&0\\
    0&&0&&0&&0&&1&&0&&1&&1\\
    0&&0&&0&&0&&0&&1&&0&&1\\
    1&&0&&0&&0&&0&&0&&0&&0\end{array}\right)
\end{align}
Then the unique NE is given by the pair of distributions $(x,y)$ of players 1, 2 respectively, where
\begin{align*}
    x=\p{\frac{6}{34},\frac{2}{34},\frac{3}{34},\frac{1}{34},\frac{4}{34},\frac{5}{34},\frac{4}{34},\frac{9}{34}},~y=\p{\frac{1}{8},\frac{1}{8},\frac{1}{8},\frac{1}{8},\frac{1}{8},\frac{1}{8},\frac{1}{8},\frac{1}{8}}
\end{align*}
$C(x)=34,~C(y)=8$, therefore $C_1(A,B)=34,~C_2(A,B)=8$. 
\end{example}

The lower bound constructions in Theorem \ref{thm:2.3}, Item \ref{item:thm 2.3.2} and Theorem \ref{thm:2.4}, Item \ref{item:thm 2.4.2} are the corresponding constant-sum versions of the games that we use to prove the lower bounds in 
Theorem \ref{thm:2.3}, Item \ref{item:thm 2.3.1} and Theorem \ref{thm:2.4}, Item \ref{item:thm 2.4.1}.

\begin{example}\label{example:2}
Consider the $8\times8$ constant-sum game defined by the following payoff matrices: 
\begin{align}\label{eq:example2}
    A'=\left(\begin{array}{cccccccccccccccc}
    1&&0&&0&&1&&1&&1&&1&&1\\
    1&&1&&0&&0&&1&&1&&1&&1\\
    1&&0&&1&&0&&0&&1&&1&&1\\
    1&&1&&0&&1&&0&&0&&1&&1\\
    1&&1&&1&&0&&1&&0&&0&&1\\
    1&&1&&1&&1&&0&&1&&0&&0\\
    1&&1&&1&&1&&1&&0&&1&&0\\
    0&&1&&1&&1&&1&&1&&1&&1
    \end{array}\right),~~~
    B'=\left(\begin{array}{cccccccccccccccc}
    0&&1&&1&&0&&0&&0&&0&&0\\
    0&&0&&1&&1&&0&&0&&0&&0\\
    0&&1&&0&&1&&1&&0&&0&&0\\
    0&&0&&1&&0&&1&&1&&0&&0\\
    0&&0&&0&&1&&0&&1&&1&&0\\
    0&&0&&0&&0&&1&&0&&1&&1\\
    0&&0&&0&&0&&0&&1&&0&&1\\
    1&&0&&0&&0&&0&&0&&0&&0\end{array}\right)
\end{align}
the unique NE is given by the pair of distributions $(x',y')$ of players 1, 2 respectively, where
\begin{align*}
    x'=\p{\frac{6}{34},\frac{2}{34},\frac{3}{34},\frac{1}{34},\frac{4}{34},\frac{5}{34},\frac{4}{34},\frac{9}{34}},~y'=\p{\frac{9}{34},\frac{4}{34},\frac{5}{34},\frac{4}{34},\frac{1}{34},\frac{3}{34},\frac{2}{34},\frac{6}{34}}
\end{align*}
$C(x')=C(y')=34$, therefore $C_1(A',B')=C_2(A',B')=34$.
\end{example}

\paragraph{Relationship with entropy} The entropy is often used to bound players' randomness capabilities \cite{halpern2015algorithmic,neyman2000repeated} (see Section \ref{sec:papers} for more details). The entropy of a distribution tells us exactly (up to an additive constant) how many random bits are needed to be sampled, in order to simulate the distribution exactly \cite{knuth1976complexity,saad2020fast}. Compared to our measure, the entropy does not address the question of space complexity. While the entropy provides a lower bound on the amount of space required, optimal sampling algorithms (in the number of random bits needed) can require much more space than this lower bound~\cite{knuth1976complexity,saad2020fast}. Indeed, our measure of the complexity of a distribution is independent, in some sense, of the entropy of the distribution: for example, the distributions $\p{\frac{1}{2},\frac{1}{2}}$ and $\p{\frac{1}{2}+\delta,\frac{1}{2}-\delta}$ have entropies $1$ and $1-\e$ respectively (for some small $\e>0$), but the latter distribution requires arbitrarily large space as $\delta\to0$, with respect to $C$, while the former requires only constant space.

\paragraph{Relationship with approximate NEs}
In general, a distribution might need arbitrarily large space to represent it accurately, for example, the distribution $\p{\frac{1}{2}+\delta,\frac{1}{2}-\delta}$ 
(it is straightforward to construct a $2\times2$ game where this is the unique NE distribution), as $\delta\to0$.\footnote{Note that by Theorem \ref{thm:2.2}, the payoff matrix will contain arbitrarily large entries.} Therefore, one might try to circumvent this space requirement. However, as we discuss in the next paragraph below, this attempt to circumvent the space requirement incurs increased time and sampling complexities. As a result, the measure $C$ essentially quantifies an inherent requirement of resources from a player who wishes to sample from a given distribution. Therefore, studying $C$, and framing players' capabilities with it, could be beneficial to improve our understanding of the role of randomness in NEs.

One could attempt to circumvent this space requirement by approximate sampling to play an additive $\e$-approximate NE ($\e$-NE, using \cite{roughgarden2010algorithmic} for reference). Given a NE distribution, truncating digits from its binary representation yields an $\e$-NE distribution that may use much fewer bits than in the given exact NE. A player then needs to approximate an equilibrium mixed strategy without knowing the strategy, as knowledge implies storage. The problem then reduces to a search problem, where the goal is to find an $\e$-NE. However, this creates time-complexity issues: \citet{rubinstein2015inapproximability} showed that finding an $\e$-NE is PPAD-complete in general games, for any $\e$ smaller than a global constant $c$. In other words, there is no PTAS for solving the problem. In this paper we will present constructions of {\it imitation games} \cite{mclennan2005imitation} (see the examples below, and the further discussion on these games in Section \ref{sec:papers}). Yet even for imitation games, \citet{murhekar2020approximate} showed that finding an $\e$-NE is PPAD-complete, for any $\e=n^{-c}$, for any $c>0$. I.e., there is no FPTAS for solving the problem.

\subsection{Organization}
In Section \ref{sec:papers} we overview related work.
In Section \ref{sec:defs and lemmas} we define the problem setup.
In Section \ref{sec:lemmas} we prove basic lemmas (proofs in Appendix \ref{sec:proofs}).
In Section \ref{sec:binary payoffs} we prove the upper and lower bounds, and in particular prove Theorems \ref{thm:2.2}, \ref{thm:2.3} and \ref{thm:2.4} (proofs in Appendices \ref{appendix:proof upper bound} \ref{appendix:proofs 2^sqrt n} and \ref{appendix:proofs 2^n}).
In Appendix \ref{sec:special} consider the special cases of $2\times2$ games and permutation games.

\section{Related Work}\label{sec:papers}

\subsection{Sampling Algorithms}

Modern algorithms for sampling from discrete rational probability distributions take as input the numerators of the corresponding probabilities (the denominator is then the sum of the numerators)~\cite{knuth1976complexity,walker1977efficient,smith2002sample,vose1991linear,saad2020fast,bringmann2017efficient,Devr86,bringmann2013succinct}. 
The algorithms preprocess the input to construct an auxiliary data structure (such as a binary decision tree or a prefix sum array), then use this data structure to convert a sequence of random bits into a sample from the specified finite distribution. 

For these algorithms the input size (with a standard binary encoding of the integers) for the NE distribution of player 2, namely the uniform distribution, is $\Theta(n)$, while the input size for the NE distribution of player 1 in Theorem \ref{thm:2.3}, Item \ref{item:thm 2.3.1} is $\Tilde\Theta(n\sqrt n)$ (Corollary \ref{cor:x1 is balanced}) and in Theorem \ref{thm:2.4}, Item \ref{item:thm 2.4.1} it is $\Theta(n^2)$ (Corollary \ref{cor:x2 is balanced}). The sizes of the auxiliary data structures vary depending on the algorithm, specifically $\Theta(n\logn)$ for the uniform distribution and $\Tilde\Theta(n\sqrt n),\Theta(n^2)$~\cite{saad2020fast,smith2002sample,vose1991linear}; $r+\Theta(\logn)$ for the uniform distribution and $r+\Tilde\Theta(\sqrt n),r+\Theta(n)$ for the said distributions, where $r\le n$ is a space-time tradeoff parameter~\cite{bringmann2017efficient,bringmann2013succinct}.
These results highlight the differences in resources required to represent, transmit, and work with the different probability distributions we consider in this paper. 

\subsection{Playing Games with Bounded Entropy}
In the context of playing games with bounded randomness, the most common approach to quantify the complexity of mixed strategies is via their entropy. There are two lines of research: one-shot games and repeated games.

\paragraph{One-shot games}
\citet{halpern2015algorithmic} consider games with costly computation, where players also aim to minimize their cost (either absolute or relative to other players). The cost can be manifested in the amount of randomness (random bits) the players use in their strategies. They show that the existence of a NE depends on the players' abilities to use randomness, and that NEs always exist in games where randomization is free (yet other computational aspects can still be costly).

Many papers in this context focus on $\e$-NEs (namely additive $\e$-approximate NEs) with small supports \cite{barany2007nash,feder2007approximating,anbalagan2015large,daskalakis2009note,lipton2003playing,goldberg1968probability}. Such an approximate NE will also have a small entropy, but a small support is not necessary for having small entropy. It is known \cite{daskalakis2009note} that uniform strategies with support of size at most two are sufficient to obtain a $\frac{1}{2}$-approximate NE, and \citet{feder2007approximating} show that to achieve a better approximation factor one needs to consider mixed strategies with supports of size $\Omega(\logn)$. On the other hand, \citet{lipton2003playing} prove the existence of $\e$-NEs with supports of size $O\p{\frac{\logn}{\e^2}}$ (for any $\e>0$) by constructing a quasi-polynomial time algorithm that finds one.
\citet{anbalagan2015large} refute the existence of $\e$-well-supported NEs\footnote{Strategy profiles where each strategy is comprised of pure strategies each of which is an $\e$-best response to the other strategies in the profile.} with supports of size $k$ for any constant $k$.
There is also work on the existence of pure NEs in random games \cite{goldberg1968probability} and the existence of NEs with supports of size two \cite{barany2007nash}.

\paragraph{Repeated games with bounded entropy}
Repeated games with bounded entropy have been studied in the last 25 years \cite{budinich2011repeated,hubavcek2016can,kalyanaraman2007algorithms,neyman2000repeated,valizadeh2019playing,valizadeh2019playing1}. First, there is the work of \citet{neyman2000repeated}, where they characterize the maxmin value of a repeated zero-sum game, when a player who is restricted to mixed strategies with (sufficiently) bounded entropy plays against an unrestricted player. \citet{budinich2011repeated} analyze a repeated version of the matching pennies game, derive results on the amount of entropy needed from both players to play an $\e$-NE, and study their optimal payoffs when they have bounded entropies.
\citet{hubavcek2016can} generalize those results (in both aspects) to a large set of repeated games that contains 2-player repeated zero-sum games. In both papers, the conclusion is that these games require $\Omega(N)$ random bits to play a NE, where $N$ is the number of repetitions, while other games require only $O(1)$ random bits. \citet{arthaud2023playing} advances this line of work by establishing a 0-1 law: in every repeated game, there is either a NE that requires only $O(1)$ entropy from all players, or every NE of the repeated game requires $\Omega(n)$ entropy from at least one of the players.
\citet{kalyanaraman2007algorithms} give both randomized and derandomized algorithms to find sparse $\e$-NEs (where strategies have small supports) in a learning settings where players have limited access to random bits.
\citet{valizadeh2019playing,valizadeh2019playing1} study the maxmin value in 2-player zero-sum repeated games in settings where one player is restricted to playing strategies with bounded entropy, while the second player is unrestricted. They study both asymptotic and nonasymptotic aspects of the value.
While the entropy captures the amount of random bits needed to sample from a given distribution, our measure of complexity captures the amount of space required to store that distribution, as discussed in Section \ref{subsec:results}.

\subsection{Imitation Games}
Our constructions involve {\it imitation games}, which are games where one player receives positive payoff iff the player plays the same pure strategy as the other player (for some notion of `same'). These games were first defined in \cite{mclennan2005imitation}, where they were leveraged to derive a simple proof of Kakutani's fixed-point theorem \cite{kakutani1941generalization}. Computing a NE was shown to be as hard in imitation games as in symmetric games, and it is known \cite{gale1950symmetric} that computing a NE in symmetric games is as hard as in general normal-form games. In both reductions the NE profiles remain the same as distributions, up to adding zero-probability events or swapping one of the mixed strategies in the profile with a uniform distribution. In \cite{mclennan2010simple}, the authors continue utilizing imitation games to derive additional simple proofs in computational complexity.

We note that the $4\times4$ instance of our constructions first appeared in \cite{bonifaci2008complexity}. There, the authors established the NP-completeness of finding a NE in which the mixed strategies are uniform on their supports. There, the authors use the $4\times4$ instance to construct a larger game, to reduce the problem they study to 3-SAT. However, their paper is not concerned with the space requirement issues that we address.
Moreover, in the constructions of \citet{bonifaci2008complexity}, the space requirements are smaller as well as the gaps in the requirements between the two players in the game. In contrast, our constructions exhibit much larger space requirements and asymmetries in the requirements between the players in the NEs, which highlights how large these requirements and asymmetries can get in normal-form games.

\section{Problem Setup}
\label{sec:defs and lemmas}
\paragraph{2-player normal-form games:}
We consider 2-player $n\times n$ normal-form games with nonnegative integer payoffs (see \cite{roughgarden2010algorithmic} for reference). We write such a game as $(A,B)$, where $A,B\in\Z^{n\times n}$ are the payoff matrices of players 1, 2 respectively. This makes $[n]$ the set of pure strategies for the two players. These induce the payoff functions $u_1,u_2:[n]\times[n]\to\N$ defined by $u_1(i,j)=A_{i,j},u_2(i,j)=B_{i,j}$. Let $\Delta X$ be the set of distributions over $X$. Then $u_1,u_2$ are naturally extended to payoffs of mixed strategy profiles $u_1,u_2:\Delta[n]\times\Delta[n]\to\R$ by $u_1(\sigma_1,\sigma_2)=\E_{(i,j)\sim(\sigma_1,\sigma_2)}(u_1(i,j))=\sigma_1^TA\sigma_2$ and similarly $u_2(\sigma_1,\sigma_2)=\sigma_1^TB\sigma_2$. For a finite set $X$, let $U_X$ be the uniform distribution on $X$. In addition to the definition of $\NE(A,B)$, let $\PNE(A,B)$ be the set of pure NEs of the game. When $A,B$ are known from context, we simply write $\NE$, $\PNE$.

As for the choice of these games, we assume that the payoffs are rational, and because (expected) payoff functions are linear, we further assume wlog that the payoffs are in $\N$. We will also consider {\it constant-sum games}, i.e., $A,B$ that satisfy $A+B=u{\bf1}_{n\times n}$ where $u\in\N$ is constant. The argument above shows that this is wlog compared to {\it zero-sum games} in the setting of rational payoffs.

\paragraph{Binary-payoff games:}
To prove our lower bounds, we will focus for the most part on these games. We will further assume that $\PNE(A,B)=\emptyset$, for otherwise $C_i(A,B)=1$. This means that $A_{i,j}\cdot B_{i,j}=0$ and $\sum_jA_{i,j},\sum_iB_{i,j}>0$ for all $i,j$.

\paragraph{Linear algebra:}In this paper we consider only $n\times n$ payoff matrices, and sometimes treat them as matrices in $\Z^{n\times n}$.
Let $A\in\Z^{n\times n}$. For $I,J\subseteq[n]$, let $A_{I,J}=(A_{i,j})_{i\in I,j\in J}\in\Z^{|I|\times|J|}$. Denote with $A[i|j]:=A_{[n]\setminus\s{i},[n]\setminus\s{j}}$ the $(i,j)$th minor of $A$. Let $\kappa_{i,j}:=(-1)^{i,j}\det(A[i|j])$ be the $(i,j)$th cofactor of $A$. Note that $\det A=\sum_{i,j}A_{i,j}\kappa_{i,j}$ and $(A^{-1})_{i,j}=\frac{1}{\det A}\kappa_{j,i}$. Now we define $K(A):=\sum_{i,j}\kappa_{i,j}$. For $i\in[n]$ and $b\in\Z^n$, let $A\stackrel{i}{\gets}b$ result from $A$ by replacing column $i$ with $b$. Then we get $K(A)=\sum_i\det(A\stackrel{i}{\gets}\bf1)$. The quantity $K(A)$ can be used to upper bound $C_2(A,B)$, and similarly with $K(B),C_1(A,B)$ (see Theorem \ref{thm:K upper bound}). Denote with $\bf1,\bf0$ the all-ones, all-zeros vectors of length $n$. For $i\in[n]$, let $e_i\in\R^n$ be 1 at entry $i$ and 0 everywhere else. For $I\subseteq[n]$, let $=\sum_{i\in I}e_i$.

\section{Basic Properties}\label{sec:lemmas}
The proofs of these lemmas can be found in Appendix \ref{sec:proofs}. We begin with two lemmas about games, then prove linear algebra lemmas.
\begin{lemma}\label{lmm:I subseteq J}
Any $(x,y)\in\NE(I_n,B)$ satisfies $\supp x\subseteq\supp y$ and $y=U_{\supp y}$.
\end{lemma}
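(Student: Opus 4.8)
The plan is to extract everything from the two players' best-response conditions: prove the inclusion $\supp x\subseteq\supp y$ directly, and then reduce the statement ``$y=U_{\supp y}$'' to the reverse inclusion $\supp y\subseteq\supp x$.

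\textbf{First inclusion.} Since $A=I_n$, the payoff of player~1 from the pure strategy $i$ against $y$ is $u_1(e_i,y)=(I_ny)_i=y_i$, so player~1's pure best responses to $y$ are exactly the coordinates achieving $\max_k y_k$. Hence in any $(x,y)\in\NE(I_n,B)$ the best-response condition forces $\supp x\subseteq\argmax_k y_k$. Because $y$ is a probability distribution we have $\max_k y_k\ge 1/n>0$, so every $i\in\supp x$ satisfies $y_i=\max_k y_k>0$, i.e.\ $i\in\supp y$. This gives $\supp x\subseteq\supp y$.

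\textbf{Reducing the second claim.} It suffices to prove the reverse inclusion $\supp y\subseteq\supp x$. Indeed, combined with the first step this yields $\supp x=\supp y$, and then every $i\in\supp y=\supp x$ satisfies $y_i=\max_k y_k=:M$; thus $y$ is constant on $\supp y$, and $\sum_i y_i=1$ forces $M=1/|\supp y|$, that is, $y=U_{\supp y}$. So the whole content of the lemma boils down to showing $\supp y\subseteq\supp x$.

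\textbf{The reverse inclusion (the crux).} Player~2's best-response condition gives $\supp y\subseteq\argmax_j (x^{T}B)_j$, i.e.\ $\supp y$ lies inside the set of pure best responses of player~2 to $x$. To turn this into $|\supp y|\le|\supp x|$ I would invoke a (non)degeneracy-type property of the games under consideration — namely that no mixed strategy with support of size $k$ admits more than $k$ pure best responses — so that player~2 has at most $|\supp x|$ pure best responses to $x$, whence $|\supp y|\le|\supp x|$; together with $\supp x\subseteq\supp y$ this forces $\supp x=\supp y$. I expect this to be the main obstacle, and it is precisely here that one must use the structural hypotheses on our games rather than the bare Nash conditions: for a generic binary-payoff game with no pure Nash equilibrium one can have a Nash equilibrium $(x,y)$ with $\supp x\subsetneq\supp y$ and $y$ not uniform on its support, so some such extra property of the constructions is genuinely needed to close this step.
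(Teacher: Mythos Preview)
Your treatment of the first inclusion is correct and is exactly the paper's argument: $\supp x\subseteq\br_1(y)=\argmax_k y_k\subseteq\supp y$. Your reduction of the second claim to the reverse inclusion $\supp y\subseteq\supp x$ is also valid.

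Your caution in the last paragraph is well placed, and in fact sharper than the paper's own one-line proof. The paper dismisses part~2 with ``follows from the best-response condition,'' but as you suspect this is not enough: player~1's best-response condition only forces $y$ to be constant (equal to $\max_k y_k$) on $\supp x$, not on all of $\supp y$. A concrete binary counterexample with no pure NE is $n=3$ and
\[
B=\left(\begin{array}{ccc}0&1&1\\1&0&0\\1&0&0\end{array}\right),
\]
where $\bigl((\tfrac12,\tfrac12,0),\,(\tfrac25,\tfrac25,\tfrac15)\bigr)\in\NE(I_3,B)$ with $\supp x=\{1,2\}$, $\supp y=\{1,2,3\}$, and $y\neq U_{\supp y}$. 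So the lemma as written does not hold in full generality, and the nondegeneracy-type hypothesis you point to is genuinely required.

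In the paper's actual applications this causes no harm: the lemma is only invoked after one has already established $\supp x=[n]$, and in that situation the first inclusion alone gives $\argmax_k y_k=[n]$, whence $y=U_{[n]}$ immediately. So your instinct to lean on the specific structure of the constructions is the right one; the fully general form of part~2 cannot be salvaged from the best-response condition alone.
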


\begin{lemma}\label{lmm:game inflation}
Let $A,B\in\Z^{n\times n}$, such that $A$ does not contain a column of 0s, and $B$ does not contain a row of 0s. Then it is possible to add a dummy strategy to each player which does not change the set of NEs (up to projecting the NE mixed strategies on the first $n$ pure strategies). In particular, if the game is constant-sum with $A+B={\bf1}_{n\times n}$, then it will remain constant-sum after adding the dummy strategies.
\end{lemma}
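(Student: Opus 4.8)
The plan is to give an explicit construction of the inflated game and then verify that the NE correspondence is preserved. Starting from $A,B\in\Z^{n\times n}$ with no zero column in $A$ and no zero row in $B$, I would build $(n+1)\times(n+1)$ matrices $\widehat A,\widehat B$ by appending a new pure strategy, call it strategy $n+1$, to each player. The new row and column of $\widehat A$ (player~1's payoffs) should be chosen so that strategy $n+1$ is strictly dominated for player~1, or at least is never a best response against anything in $\Delta[n]$; symmetrically, the new row and column of $\widehat B$ should make strategy $n+1$ never a best response for player~2. Concretely, one can set player~1's payoff in the new row to be strictly below $\min_{i,j}A_{i,j}$ in every column (so player~1 never wants to play $n+1$), and fill the new column of $\widehat A$ arbitrarily, say matching an existing column, so that ``no zero column'' is preserved if needed; do the symmetric thing for $\widehat B$ with the new column made strictly dominated. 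For the constant-sum case with $A+B=\mathbf{1}_{n\times n}$, I would instead choose the new entries so that $\widehat A+\widehat B=\mathbf{1}_{(n+1)\times(n+1)}$ still holds — e.g., put a $0$/$1$ pattern in the new row and column that makes strategy $n+1$ a pure strategy both players strictly avoid while respecting the complementarity constraint $\widehat A_{i,j}+\widehat B_{i,j}=1$.

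The key steps, in order, are: (1) define $\widehat A,\widehat B$ explicitly as above and check they lie in $\Z^{(n+1)\times(n+1)}$ (and satisfy $\widehat A+\widehat B=\mathbf{1}$ in the constant-sum case); (2) show that in any $(\widehat x,\widehat y)\in\NE(\widehat A,\widehat B)$ we have $\widehat x_{n+1}=\widehat y_{n+1}=0$ — this follows from the best-response condition, since strategy $n+1$ yields strictly less than any other pure strategy for the respective player no matter what the opponent plays, hence cannot be in the support of an equilibrium strategy; (3) conclude that $\widehat x,\widehat y$ are supported on $[n]$, so they correspond to $x\in\Delta[n],y\in\Delta[n]$ via the obvious projection, and the payoffs under $\widehat A,\widehat B$ restricted to these supports agree with those under $A,B$, so $(x,y)\in\NE(A,B)$; (4) conversely, show any $(x,y)\in\NE(A,B)$, embedded into $\Delta[n+1]$ by setting the last coordinate to $0$, is a NE of $(\widehat A,\widehat B)$ — here I use that no player can profitably deviate to a strategy in $[n]$ (inherited from $(x,y)\in\NE(A,B)$) and no player wants to put weight on $n+1$ (by strict domination again). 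This establishes the claimed bijection between $\NE(\widehat A,\widehat B)$ and $\NE(A,B)$ up to the projection.

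The main obstacle is step~(2) together with making the constant-sum construction work simultaneously. In the general case, strict domination is cheap because we may use arbitrarily negative (or, after shifting into $\N$, arbitrarily small) payoffs for the dummy strategy; but in the constant-sum regime with $0/1$ payoffs the complementarity $\widehat A_{i,j}+\widehat B_{i,j}=1$ constrains both players at once, so I cannot independently make the dummy strategy bad for both players by brute force. The resolution I would pursue: I do not actually need strict domination, only that strategy $n+1$ is not a best response in equilibrium. For the constant-sum case it suffices to set up the new row/column so that playing $n+1$ gives player~1 a payoff no larger than — and against at least one equilibrium-relevant opponent strategy, strictly smaller than — some strategy already in the support, while the $0/1$ complementarity then automatically arranges the symmetric situation for player~2; then I argue, using the fact (noted in the excerpt, following \citet{avis2010enumeration}) that it is enough to preserve the extreme NEs / the NE polytope structure, that weight on $n+1$ can always be shifted away without loss. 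The hypothesis that $A$ has no zero column and $B$ no zero row is exactly what guarantees every pure strategy of the original game has a positive payoff somewhere, which is what I need to drive the dummy strategy's payoff strictly below an active one; I would make sure the explicit pattern I pick for the new entries uses this.
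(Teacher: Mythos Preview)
Your steps (2)--(4) are the right shape, and for the general (non-constant-sum) case strict domination via payoffs below $\min_{i,j}A_{i,j}$ works fine --- though note it does not use the hypothesis at all and would leave $\{0,1\}$ payoffs, which matters for how this lemma is applied later in the paper. The real gap is in the constant-sum case: you correctly identify that the complementarity $\widehat A_{i,j}+\widehat B_{i,j}=1$ prevents you from strictly dominating both dummy strategies independently, but your proposed resolution --- ``not a best response in equilibrium'', shifting weight off $n+1$, invoking the extreme-NE structure --- never produces an explicit $0/1$ pattern for the new row and column, and the argument as sketched is not one a reader could execute.

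The paper sidesteps all of this with a single explicit construction that handles both cases simultaneously:
\[
A'=\left(\begin{array}{c|c} A & \mathbf{1} \\ \hline \mathbf{0} & 1 \end{array}\right),\qquad
B'=\left(\begin{array}{c|c} B & \mathbf{0} \\ \hline \mathbf{1} & 0 \end{array}\right).
\]
This stays in $\{0,1\}$ whenever $A,B$ do, and visibly satisfies $A'+B'=\mathbf{1}_{(n+1)\times(n+1)}$ whenever $A+B=\mathbf{1}_{n\times n}$. Your step~(2) is then direct and uses the hypothesis exactly where you guessed it would be used: player~2's payoff from strategy $n+1$ is identically $0$ (the last column of $B'$ is zero), while against any $\widehat x$ some $j\le n$ gives strictly positive payoff --- from the all-ones last row of $B'$ if $\widehat x_{n+1}>0$, and from the ``no zero row in $B$'' hypothesis if $\widehat x_{n+1}=0$ (pick $i$ with $\widehat x_i>0$ and then $j$ with $B_{i,j}>0$). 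Hence $\widehat y_{n+1}=0$ in any NE, and then symmetrically $\widehat x_{n+1}=0$ using ``no zero column in $A$''. No strict domination, and no extreme-NE machinery, is needed.
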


\begin{lemma}\label{lmm:1}
Suppose that $A$ is binary, $\det A\ne0$ and $A^{-1}\bf1\ge0$. Then
$|\det A|\le|K(A)|\le n|\det A|$.
\end{lemma}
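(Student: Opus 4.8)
The plan is to exploit the identity $K(A)=\sum_i \det(A\stackrel{i}{\gets}\mathbf 1)$ recorded in the linear-algebra preliminaries, together with the hypothesis $A^{-1}\mathbf 1\ge 0$, to sandwich $|K(A)|$ between $|\det A|$ and $n|\det A|$. Write $w=A^{-1}\mathbf 1\in\R^n$; by hypothesis $w\ge 0$ (componentwise), and by Cramer's rule $w_i=\det(A\stackrel{i}{\gets}\mathbf 1)/\det A$. Hence $K(A)=\sum_i \det(A\stackrel{i}{\gets}\mathbf 1)=\det A\cdot\sum_i w_i=\det A\cdot(\mathbf 1^T w)$, so that $|K(A)|=|\det A|\cdot\sum_i w_i$ since each $w_i\ge0$. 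The lemma thus reduces to the two inequalities $1\le \sum_i w_i\le n$, i.e. $1\le \mathbf 1^T A^{-1}\mathbf 1\le n$.

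For the lower bound $\sum_i w_i\ge 1$: consider the row vector $\mathbf 1^T A^{-1}$, call it $v^T$, so $v^T A=\mathbf 1^T$, meaning $\sum_i v_i A_{i,j}=1$ for every column $j$. Since $A$ is binary (entries in $\{0,1\}$), each column sum $\sum_i v_i A_{i,j}$ is a subset-sum of the entries of $v$ equal to $1$. Summing $\sum_i v_i A_{i,j}=1$ over all $j$ gives $\sum_i v_i(\sum_j A_{i,j})=n$; but also $\mathbf 1^T w=\mathbf 1^T A^{-1}\mathbf 1=v^T\mathbf 1=\sum_i v_i$. I would instead argue directly on $w$: we have $Aw=\mathbf 1$, so for each row $i$, $\sum_j A_{i,j}w_j=1$; since $A_{i,j}\in\{0,1\}$ and $w_j\ge 0$, this says $\sum_{j\in S_i} w_j=1$ where $S_i=\{j: A_{i,j}=1\}\subseteq[n]$. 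Because $A$ has no zero row (an invertible binary matrix cannot, and in the binary-payoff setting column/row sums are positive anyway — but invertibility alone suffices), each $S_i$ is nonempty. Now $\sum_j w_j\ge \sum_{j\in S_i} w_j=1$ for any fixed $i$, giving $\sum_j w_j\ge 1$; equality-type slack only improves things. This yields $|K(A)|\ge|\det A|$.

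For the upper bound $\sum_j w_j\le n$: from $\sum_{j\in S_i} w_j=1$ for every $i$, sum over $i\in[n]$ to get $\sum_i\sum_{j\in S_i}w_j=n$, i.e. $\sum_j c_j w_j=n$ where $c_j=|\{i: A_{i,j}=1\}|$ is the $j$-th column sum of $A$. If every column sum $c_j\ge 1$ (true in the binary-payoff setting, and true in general because an invertible matrix has no zero column), then $\sum_j w_j\le \sum_j c_j w_j=n$, giving $|K(A)|\le n|\det A|$. The only subtlety is guaranteeing $c_j\ge1$ for all $j$: if $A$ had an all-zero column it would be singular, contradicting $\det A\ne 0$, so this is automatic. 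Combining, $|\det A|\le|K(A)|\le n|\det A|$.

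The main obstacle is really just bookkeeping: making sure the sign handling is clean (everything works out because $w\ge0$ forces $K(A)$ and $\det A$ to have the same sign, so absolute values factor through the sum), and confirming that the binariness of $A$ plus $\det A\ne0$ genuinely rules out zero rows and zero columns so that each $S_i$ is nonempty and each $c_j\ge1$. There is no deep step here — the content is the Cramer's-rule identity $K(A)=\det A\cdot(\mathbf 1^T A^{-1}\mathbf 1)$ combined with the elementary observation that binariness turns the equations $Aw=\mathbf 1$ into "each row selects a subset of the nonnegative weights $w_j$ summing to exactly $1$."
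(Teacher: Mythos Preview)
Your proposal is correct and follows essentially the same approach as the paper: set $w=A^{-1}\mathbf 1\ge 0$, use Cramer's rule to get $K(A)=\det A\cdot\sum_j w_j$, and then bound $1\le\sum_j w_j\le n$ via the identity $\sum_j c_j w_j=n$ (with $c_j$ the column sums, which lie in $[1,n]$ by binariness and invertibility). Your lower bound via a single row ($\sum_{j\in S_i}w_j=1\le\sum_j w_j$) is a cosmetic variant of the paper's summed inequality, and your sign and nondegeneracy checks are exactly the ones needed.
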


It is strengthened as follows.
\begin{lemma}\label{lmm:2}
\label{lmm:K det inequality strengthened}
Assume the above and also that for some $M\ge0$, $\sum_jA_{i,j}\le M$ for all $i\in[n]$. Then $|K(A)|\ge\frac{n}{M}|\det A|$.
\end{lemma}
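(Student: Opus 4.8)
The plan is to refine the argument behind Lemma~\ref{lmm:1}, keeping careful track of the row-sum bound $M$. Recall that $K(A)=\sum_i\det(A\stackrel{i}{\gets}\bf1)$ and, since $\det A\neq 0$, we may write each term as $\det(A\stackrel{i}{\gets}\bf1)=\det A\cdot (A^{-1}\bf1)_i$ by Cramer's rule; hence $K(A)=\det A\cdot \sum_i (A^{-1}\bf1)_i=\det A\cdot\bf1^T A^{-1}\bf1$. Set $v:=A^{-1}\bf1\ge 0$, so that $Av=\bf1$ and $K(A)=\det A\cdot(\bf1^Tv)=\det A\cdot\sum_i v_i$. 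Thus the claim $|K(A)|\ge\frac{n}{M}|\det A|$ is exactly equivalent to the scalar inequality $\sum_i v_i\ge n/M$, where $v\ge 0$ solves $Av=\bf1$.

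The key step is then purely about the nonnegative solution $v$ of $Av=\bf1$ under the hypothesis that every row sum of $A$ is at most $M$. For each coordinate $i$, the $i$-th equation of $Av=\bf1$ reads $\sum_j A_{i,j} v_j = 1$. Since $A$ is binary (entries in $\{0,1\}$) and $v\ge 0$, we can bound $1=\sum_j A_{i,j}v_j\le \big(\max_j v_j\big)\cdot\sum_j A_{i,j}\le M\max_j v_j$, so $\max_j v_j\ge 1/M$. That alone only gives one large coordinate; to get the factor $n$ I would instead sum over all $i$: $n=\sum_i 1=\sum_i\sum_j A_{i,j}v_j=\sum_j v_j\big(\sum_i A_{i,j}\big)$. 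Now I need a bound on the \emph{column} sums $\sum_i A_{i,j}$, and the hypothesis only controls \emph{row} sums --- so instead I will bound $\sum_i A_{i,j}\le n\cdot 1=n$ trivially is too weak. The right move: use $\sum_j v_j\cdot(\sum_i A_{i,j}) = n$ together with $\sum_i A_{i,j}\le n$ gives $\sum_j v_j\ge 1$, which is just Lemma~\ref{lmm:1}. To inject $M$, I instead apply the per-row bound directly: summing $1 = \sum_j A_{i,j}v_j \le M\max_j v_j$ is wasteful; better, note $\sum_i 1 = \sum_i \sum_j A_{i,j} v_j$, and since each inner sum has at most $M$ nonzero binary terms, $\sum_j A_{i,j}v_j \le \sum_{j\in\supp(A_{i,\cdot})} v_j$. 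Reversing the order of summation, $n \le \sum_j |\{i : A_{i,j}=1\}|\, v_j$. This still needs column control. The clean resolution is to combine the two facts: from $\sum_i\sum_j A_{i,j}v_j = n$ and $A_{i,j}\in\{0,1\}$ with at most $M$ ones per row, we get $n\le M\sum_j(\text{something})$; concretely $n=\sum_j(\sum_i A_{i,j})v_j$ and separately, averaging, there must exist structure forcing $\sum_j v_j$ large. I expect the actual paper argument is: $\bf1^T A\, v = \bf1^T\bf1 = n$, while $\bf1^T A \le M\bf1^T$ entrywise (row-sum bound applied to $A^T$'s columns... i.e. $A\bf1\le M\bf1$ componentwise means $\bf1^TA^T\le M\bf1^T$, and by a symmetry/transpose consideration on the structure one gets $n = \bf1^T A v$). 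Taking $\bf1^TAv=n$ and $\bf1^TA v \le (\max \text{row sum of } A^T)\cdot\bf1^Tv$...

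Let me state the plan cleanly: I would prove the equivalent statement $\sum_i v_i \ge n/M$ where $Av=\bf1$, $v\ge0$, by the chain $n = \bf1^T\bf1 = \bf1^T(Av) = (\bf1^TA)v = \sum_j\big(\sum_i A_{i,j}\big)v_j$, and then --- here is the crux --- I expect one actually needs the row-sum bound applied after noting that in the relevant constructions $A$'s column sums are what's bounded, OR the lemma is applied with $M$ a bound that holds for $A$ in a way that makes $\sum_i A_{i,j}$ replaceable; the honest bound is $\sum_j\big(\sum_i A_{i,j}\big)v_j \le M\sum_j v_j$ provided every \emph{column} sum is $\le M$. If the hypothesis genuinely says row sums, I would instead write $n = \sum_i (Av)_i$ and for each $i$, $(Av)_i = \sum_{j:A_{i,j}=1} v_j$; since $A$ is binary with $\det A\neq 0$ no row is zero, and the bound $\le M$ limits each row to $\le M$ terms, giving $1 = (Av)_i \le M\max_j v_j$ hence $\max_j v_j\ge 1/M$; but to upgrade $\max$ to $\sum$ I use that $v_j = (A^{-1}\bf1)_j$ and... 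The \textbf{main obstacle} is exactly this reconciliation of row-sums versus the quantity $\sum_i A_{i,j}$ that appears when expanding $\bf1^TAv$; I suspect the clean proof writes $K(A)=\det A\cdot\bf1^Tv$ and then lower-bounds $\bf1^Tv$ by exhibiting $\bf1^Tv\ge\frac{1}{M}\,\bf1^T(A\bf1)\cdot\frac{1}{\|\cdots\|}$ via a Cauchy–Schwarz / rearrangement step I would need to pin down, or more likely by the neat identity $\bf1^T v = \bf1^T A^{-1}\bf1$ and the observation that $A^{-1}\bf1\ge 0$ plus $A\bf1\le M\bf1$ (row sums) gives, after left-multiplying the latter by the nonnegative-entried $A^{-1}$... no, $A^{-1}$ need not be nonnegative. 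So the crux is genuinely finding which nonnegativity to exploit; I would resolve it by working with $v\ge0$ directly and the inequality $1=(Av)_i\le M\max_j v_j$ for every $i$, concluding only that the maximum entry is $\ge 1/M$, and then getting the full factor of $n$ from a separate global count, which is the step I'd expect to require the most care to make rigorous.
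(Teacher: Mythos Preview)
Your setup is exactly the paper's: with $v=A^{-1}\mathbf{1}\ge0$ you reduce the claim to $\sum_j v_j\ge n/M$ via $K(A)=\det A\cdot\sum_j v_j$, and you correctly arrive at the identity $n=\mathbf{1}^T(Av)=\sum_j v_j\bigl(\sum_i A_{i,j}\bigr)$. The paper's entire proof of Lemma~\ref{lmm:2} is the single line $n=\sum_{i,j}A_{i,j}x_i\le M\cdot\frac{K(A)}{\det A}$, nothing more.

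The obstacle you flag---that $\sum_i A_{i,j}$ is a \emph{column} sum while the stated hypothesis bounds \emph{row} sums---is a genuine index slip in the paper, not a missing idea on your part. With $v=A^{-1}\mathbf{1}$ the identity $n=\mathbf 1^TAv$ unavoidably produces column sums; the hypothesis \emph{should} read $\sum_i A_{i,j}\le M$ for all $j$ (equivalently, run the argument on $A^T$, using $K(A^T)=K(A)$ and $\det A^T=\det A$). In every application the paper makes of this lemma the matrix has its row and column sums bounded by the same $M$ (for $\beta_n$ both are at most $3$; for the block matrix of Theorem~\ref{thm:2^sqrt n lower bound} both are at most $p_n$), so the slip is harmless there. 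You had the whole proof; the only missing step was to write $n=\sum_j v_j\bigl(\sum_i A_{i,j}\bigr)\le M\sum_j v_j$ and stop---your attempts to extract row sums from $Av=\mathbf 1$, or to push $A\mathbf 1\le M\mathbf 1$ through $A^{-1}$, cannot work and are not needed.
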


To demonstrate how $K(A)$ can show up in our context, we have the following:
\begin{theorem}\label{thm:K upper bound}
Let $A,B\in\Z_2^{n\times n}$. Suppose that $\det A,\det B\ne0$ and that the game admits a fully-mixed NE. Then $C_1(A,B)\le K(B),C_2(A,B)\le K(A)$.
\end{theorem}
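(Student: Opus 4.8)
The plan is to pin down the fully-mixed equilibrium from the best-response conditions and then read off the common denominator of the resulting rational vectors. Let $(x,y)$ be a fully-mixed NE, so $\supp x=\supp y=[n]$. Since every pure strategy of player~1 lies in $\supp x$, the best-response condition forces all of them to earn the same payoff against $y$, i.e. $Ay=v_1\mathbf 1$ for a scalar $v_1$; dually, every pure strategy of player~2 lying in $\supp y$ forces $x^{T}B=v_2\mathbf 1^{T}$, i.e. $B^{T}x=v_2\mathbf 1$, for a scalar $v_2$. Since $\det A\ne 0$ and $\det B^{T}=\det B\ne 0$, these systems are uniquely solved by $y=v_1A^{-1}\mathbf 1$ and $x=v_2(B^{T})^{-1}\mathbf 1$. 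Using $(A^{-1})_{i,j}=\kappa_{j,i}/\det A$ we get $\mathbf 1^{T}A^{-1}\mathbf 1=\sum_{i,j}(A^{-1})_{i,j}=K(A)/\det A$, and likewise $\mathbf 1^{T}(B^{T})^{-1}\mathbf 1=K(B^{T})/\det B=K(B)/\det B$, since transposition transposes the cofactor matrix and so preserves $K$. Imposing the normalizations $\mathbf 1^{T}y=\mathbf 1^{T}x=1$ then gives $v_1=\det A/K(A)$ and $v_2=\det B/K(B)$; in particular $K(A),K(B)\ne 0$, as otherwise these normalization equations would be unsatisfiable.

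Next I would substitute these values back in. Using $(A^{-1}\mathbf 1)_i=\tfrac1{\det A}\sum_j\kappa_{j,i}=\tfrac1{\det A}\det(A\stackrel{i}{\gets}\mathbf 1)$ — the column-by-column form of the identity $K(A)=\sum_i\det(A\stackrel{i}{\gets}\mathbf 1)$ from Section~\ref{sec:defs and lemmas} — the factor $\det A$ cancels, and analogously for $B^{T}$, so that
\begin{align*}
    y_i=\frac{\det\p{A\stackrel{i}{\gets}\mathbf 1}}{K(A)},\qquad x_i=\frac{\det\p{B^{T}\stackrel{i}{\gets}\mathbf 1}}{K(B)}\qquad(i\in[n]).
\end{align*}
Each numerator is an integer (a determinant of an integer matrix) and each denominator is the fixed nonzero integer $K(A)$, resp. $K(B)$. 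Hence, writing $y$ in lowest terms as $(p_i/q)_i$ with $q=C(y)$ and $\gcd_i p_i=1$ and comparing with the displayed representation, $q\,\det(A\stackrel{i}{\gets}\mathbf 1)=K(A)\,p_i$ for every $i$, so $q$ divides $\gcd_i(K(A)p_i)=|K(A)|$; thus $C(y)\le|K(A)|$, and symmetrically $C(x)\le|K(B)|$. Since $C_1(A,B)=\min_{(x',y')\in\NE}C(x')\le C(x)$ and $C_2(A,B)=\min_{(x',y')\in\NE}C(y')\le C(y)$, we conclude $C_1(A,B)\le|K(B)|$ and $C_2(A,B)\le|K(A)|$ (and under the running hypotheses the numerators above are positive, so $K(A),K(B)>0$ and the absolute values may be dropped, matching the convention of Lemmas~\ref{lmm:1}--\ref{lmm:2}).

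Almost all of this is routine: the cofactor expression for $A^{-1}$ and the identity $\mathbf 1^{T}A^{-1}\mathbf 1=K(A)/\det A$ are already recorded in Section~\ref{sec:defs and lemmas}, and the lowest-terms argument is elementary. The one place that genuinely needs care is the crossing of the two players: it is player~2's indifference (caused by $\supp x=[n]$) that determines $x$, and it does so through $B^{T}$, while player~1's indifference determines $y$ through $A$. Getting this crossing right is exactly what yields $C_1\le K(B)$ and $C_2\le K(A)$ — with each matrix controlling the \emph{other} player's complexity — rather than the opposite pairing; a secondary, cosmetic point is the sign of $K(A),K(B)$, settled by the positivity remark above.
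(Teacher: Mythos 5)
Your proposal is correct and follows essentially the same route as the paper: solve the indifference systems $Ay=v_1\mathbf 1$ and $B^{T}x=v_2\mathbf 1$ (the paper invokes the support-enumeration/best-response characterization for the same purpose), normalize via $\mathbf 1^{T}A^{-1}\mathbf 1=K(A)/\det A$, obtain $x_i=\det(B^{T}\stackrel{i}{\gets}\mathbf 1)/K(B)$ with integer numerator and denominator, and conclude from the definition of $C$. Your explicit lowest-terms divisibility step and the remark on signs are just slightly more detailed versions of what the paper leaves implicit.
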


We end with two properties of direct relevance to our motivation discussion in the introduction. The first states that if the players do not both have sufficient capability to play a (specific) NE, then the induced game has no NEs.

\begin{theorem}\label{thm:no ne without caps}
Let $(A,B)$ be a game, and let $c_1,c_2\ge1$. The corresponding $(c_1,c_2)$-bounded-randomness game has a NE iff there is an $(x,y)\in\NE(A,B)$, such that $c_1\ge C(x)$ and $c_2\ge C(y)$.
\end{theorem}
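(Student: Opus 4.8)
The plan is to prove the two directions separately, with the forward direction (a NE of the bounded-randomness game yields an unrestricted NE respecting the capabilities) being the one that requires actual work. For the backward direction, suppose $(x,y) \in \NE(A,B)$ with $c_1 \ge C(x)$ and $c_2 \ge C(y)$. Then by Definition \ref{def:capability} both $x$ and $y$ are legal strategies in the $(c_1,c_2)$-bounded-randomness game. Since $(x,y)$ is a NE against \emph{all} deviations in $\Delta[n]$, it is in particular a NE against the smaller set of deviations available in the bounded-randomness game, so $(x,y)$ is a NE there. This direction is essentially immediate.

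For the forward direction, let $(x,y)$ be a NE of the $(c_1,c_2)$-bounded-randomness game; I must show it is a NE of the unrestricted game $(A,B)$ (its strategies automatically satisfy $C(x) \le c_1$, $C(y) \le c_2$ by definition of the game). The key tool is the best-response condition (\cite{roughgarden2010algorithmic}, Chapter 3): a profile is a NE of the unrestricted game iff every pure strategy in the support of each player's mixed strategy is a best response to the opponent's strategy, equivalently iff no player can improve by switching to a pure strategy. So it suffices to rule out a profitable \emph{pure} deviation. Suppose toward a contradiction that player $1$ has a pure strategy $e_i$ with $u_1(e_i, y) > u_1(x,y)$. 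Since pure distributions have complexity $1 \le c_1$ (as noted before Definition \ref{def:game}), $e_i$ is a legal strategy in the bounded-randomness game, contradicting that $(x,y)$ is a NE there. The same argument applies to player $2$ using $c_2 \ge 1$. Hence no profitable pure deviation exists, and by the best-response condition $(x,y) \in \NE(A,B)$; taking this $(x,y)$ as the witness completes the direction.

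The only subtlety — and the step I would state most carefully — is the appeal to the best-response condition in this restricted setting: one must observe that a NE of the bounded-randomness game need not a priori be robust against \emph{all} mixed deviations, but robustness against all pure deviations (which \emph{are} all available, since they have complexity $1$) already forces it to be an unrestricted NE, because expected payoffs are linear and the best response to any fixed opponent strategy can always be taken to be pure. I do not anticipate a genuine obstacle here; the argument is short once the linearity/best-response observation is made explicit, and the two capability inequalities $c_i \ge 1$ are exactly what is needed to guarantee that every pure deviation is feasible for the deviating player.
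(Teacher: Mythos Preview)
Your proposal is correct and follows essentially the same approach as the paper: both argue that, by the best-response condition, only pure (deterministic) deviations need be considered, and since pure strategies have complexity $1\le c_i$ they are always feasible in the bounded-randomness game, so a feasible profile is a NE of the bounded-randomness game iff it lies in $\NE(A,B)$. The paper's proof is simply a terser version of yours.
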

\begin{proof}
By the best-response condition (\cite{roughgarden2010algorithmic}), for a given strategy profile we only need to consider deterministic, unilateral deviations. These are valid strategies in the $(c_1,c_2)$-bounded-randomness game, for every choice of parameters. Therefore, a profile $(a,b)$ with $C(a)\le c_1,C(b)\le c_2$ is a NE of the bounded-randomness game iff $(a,b)\in\NE(A,B)$. The statement now follows.
\qed\end{proof}

The second property establishes the correlation between $C$ and the Shannon entropy, as was discussed in the introduction. We now establish space complexity bounds on storing the coefficients of a rational distribution.

\begin{theorem}\label{thm:space reqs}
Let $\sigma=\p{\frac{p_1}{q},...,\frac{p_n}{q}}$ with $C(\sigma)=q$. Then storing $p_1,...,p_n$ in their binary representations requires between $\Omega(n+\log q)$ space and $O(n\log q)$ space.
\end{theorem}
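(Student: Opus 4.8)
The plan is to prove the two bounds separately, since they are essentially elementary counting arguments about binary encodings of integers.

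For the upper bound $O(n\log q)$: each numerator $p_i$ satisfies $0\le p_i\le q$, so its binary representation uses at most $\lceil\log_2(q+1)\rceil+1 = O(\log q)$ bits (with a fixed field width, or using a self-delimiting encoding that adds at most a constant factor). Summing over the $n$ numerators gives a total of $O(n\log q)$ bits. One should remark that to make the representation unambiguously parseable one must either fix a common field width of $\lceil\log_2(q+1)\rceil$ bits per numerator (then the bound is exactly $n\cdot O(\log q)$) or prefix-encode each; either way the asymptotic bound is unaffected. This is the balanced case: it is tight when, e.g., $\sigma$ is close to uniform with a large prime denominator, so that $\Theta(n)$ of the numerators are each $\Theta(q)$ and require $\Theta(\log q)$ bits.

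For the lower bound $\Omega(n+\log q)$: we need two things. First, $\Omega(n)$: there are $n$ numerators, and even to delimit or list $n$ separate integers one needs at least $n$ bits (indeed, at least one bit per numerator, since a numerator may be $0$ but its presence/position must still be recorded; alternatively, not all $p_i$ can be $0$ and in fact at least one is nonzero, but the cleanest argument is that writing down $n$ items takes $\ge n$ bits). Second, $\Omega(\log q)$: since $\gcd(p_1,\dots,p_n)=1$ and $\sum_i p_i = q$, not all $p_i$ can be smaller than, say, $q/n$; more simply, $\max_i p_i \ge q/n$, so the largest numerator alone needs $\ge\log_2(q/n) = \log q - \log n$ bits. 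Combining, the total is $\ge\max(n,\ \log q-\log n) = \Omega(n+\log q)$, because if $\log q \le 2\log n$ then $\log q = O(n)$ already (as $\log n = O(n)$), and otherwise $\log q - \log n \ge \frac12\log q = \Omega(\log q)$. This handles the "unbalanced" case where the bound $\Omega(n+\log q)$ rather than $\Omega(n\log q)$ is the right one: if only one numerator is large and the rest are $0$ or tiny, the total is dominated by that single numerator's $\Theta(\log q)$ bits plus the $\Theta(n)$ overhead of recording $n$ positions.

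I do not expect any real obstacle here; the only subtlety is being careful about the encoding convention (fixed-width versus self-delimiting) so that the phrase "storing $p_1,\dots,p_n$ in their binary representations" is well-defined, and noting explicitly the two facts from Definition~\ref{def:complexity} that drive the lower bound: $\sum_i p_i = q$ (so some $p_i \ge q/n$) and that there are exactly $n$ of them. A brief remark connecting the two extremes to Examples~\ref{example:1} and~\ref{example:2} (where $q=34$, $n=8$, and indeed most numerators are comparable to $q/n$) would make the statement's two-sided nature concrete, but is not needed for the proof.
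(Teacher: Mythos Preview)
Your proposal is correct and follows essentially the same approach as the paper: bound each $p_i\le q$ for the upper bound, and use one bit per entry plus a size argument for the $\Omega(\log q)$ part of the lower bound. The paper's $\Omega(\log q)$ step is marginally slicker---it simply notes that the stored tuple determines $\sum_i p_i=q$ and hence must occupy at least $\log q$ bits in total, which avoids your case analysis comparing $\log q$ with $\log n$.
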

\begin{proof}
Each $p_i$ requires at least 1 bit. Also, storing $p_1,...,p_n$ requires at least as much space as storing $\sum p_i=q$, which requires $\log q$ bits. Therefore we need $\Omega(n+\log q)$ space. On the other hand, since $p_1,...,p_n\le q$, each of them requires at most $\log q$ bits of space. Therefore we need $O(n\log q)$ space.
\qed\end{proof}

\section{Worst-Case Bounds}
\label{sec:binary payoffs}
We begin with stating Theorem \ref{thm:2.2} formally (see proof in Appendix \ref{appendix:proof upper bound}).
\begin{theorem}[Restatement of Theorem \ref{thm:2.2}]\label{thm:upper bound binary}
Let $A,B\in\Z^{n\times n}$. Then
\begin{align*}
    C_1(A,B)\le\p{\max_{i,j}\abs{B_{i,j}}}^n\cdot2^{O(n\logn)},~
    C_2(A,B)\le\p{\max_{i,j}\abs{A_{i,j}}}^n\cdot2^{O(n\logn)}
\end{align*}
In particular, if $A,B$ have entries which are $poly(n)$, then $C_i(A,B)=2^{O(n\logn)}$.
\end{theorem}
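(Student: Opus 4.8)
The plan is to bound $C_2(A,B)$ (the bound on $C_1$ being symmetric, swapping the roles of $A$ and $B$), by exhibiting a single NE whose player-2 strategy $y$ has denominator at most $(\max_{i,j}|A_{i,j}|)^n \cdot 2^{O(n\log n)}$. The natural candidate is an \emph{extreme} NE: as noted in the introduction (following \citet{avis2010enumeration}), every game with rational payoffs has a NE $(x,y)$ that is a vertex of a rational polytope, and such vertices are solutions of a linear system extracted from the best-response (equalizing) conditions. Concretely, once we fix the supports $I=\supp x$, $J=\supp y$ of an extreme NE, $x$ is determined (up to normalization) by the conditions that all pure strategies in $J$ yield player 2 the same expected payoff — i.e. the rows of $B_{I,J}^T$ dotted with $x_I$ are all equal — together with $\mathbf{1}^T x_I = 1$; symmetrically for $y$. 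So $y_J$ is the solution of a square linear system whose coefficient matrix is built from a submatrix of $A$ (with an extra all-ones row/column for the normalization and for the ``common value'' unknown).

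The key steps, in order: (i) Pick an extreme NE $(x,y)$ with supports $I,J$, $|I|=|J|=k\le n$ (support sizes are equal for extreme equilibria, or one can pad). (ii) Write the equalizing conditions for $y$ as a linear system $Ny_J = b$ where $N$ is an $(k{+}1)\times(k{+}1)$-ish matrix whose entries are either entries of $A_{I,J}$ (hence bounded in absolute value by $a:=\max_{i,j}|A_{i,j}|$) or $0,1$, and $b$ is $0,1$-valued. (iii) Apply Cramer's rule: each coordinate of $y_J$ is a ratio of two determinants of matrices of this form, so after clearing, $C(y)$ divides $|\det N'|$ for an appropriate $(k{+}1)\times(k{+}1)$ minor $N'$. (iv) Bound $|\det N'|$ by Hadamard's inequality: a $(k{+}1)\times(k{+}1)$ matrix with entries of absolute value at most $\max(a,1)$ has determinant at most $(k{+}1)^{(k+1)/2} \cdot a^{k+1} \le a^n \cdot 2^{O(n\log n)}$, since $(k{+}1)^{(k+1)/2} = 2^{O(n\log n)}$. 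This yields $C_2(A,B)\le a^n\cdot 2^{O(n\log n)}$, and the ``in particular'' clause follows since $a=\mathrm{poly}(n)$ gives $a^n = 2^{O(n\log n)}$, which is absorbed into the $2^{O(n\log n)}$ factor.

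The main obstacle — really a bookkeeping obstacle rather than a deep one — is setting up the linear system cleanly so that (a) its coefficient matrix genuinely has the claimed entry bounds (one must be careful that the normalization row and the ``value variable'' column contribute only $0/1$ entries, and that the right-hand side is likewise small), and (b) one can legitimately say $C(y)$ divides the relevant determinant: Cramer's rule gives $y_j = \det(N_j)/\det(N)$, so the \emph{reduced} denominator of $y$ divides $|\det N|$, which is exactly what $C(y)\le|\det N|$ needs — but one should check $\det N\neq 0$, i.e. that the chosen support system is nonsingular, which holds precisely because $(x,y)$ was taken to be an \emph{extreme} (vertex) equilibrium so the active constraints have full rank. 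A secondary subtlety is handling equilibria/supports that are not ``square'': either invoke the standard fact that extreme equilibria of bimatrix games have $|\supp x|=|\supp y|$, or argue that any rational NE can be written as a convex combination of extreme NEs and pick one extreme NE from each player's marginal; in the worst case one can also just take the minimum over extreme NEs directly, since $C_2(A,B)=\min_{(x,y)\in\NE}C(y)$ only needs one good witness.
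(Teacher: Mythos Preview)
Your approach is sound and yields the stated bound, but it takes a genuinely different route from the paper. The paper does \emph{not} single out an extreme equilibrium or invoke Cramer's rule directly. Instead, it starts from an arbitrary NE $(x,y)$ with supports $I,J$, writes down the mixed system of linear equalities and inequalities that the integer numerators $p_I$ of $x$ (together with the value variable $u$) must satisfy --- equalizing on $J$, dominated on $[n]\setminus J$, and $p_I\ge\mathbf 1$ --- and then invokes the bound of \citet{von1978bound} on small integer solutions to such systems. That result guarantees an integer solution with $\sum_i p_i + u \le n(n+1)M$, where $M$ is the maximal absolute subdeterminant of an explicit $(n{+}k{+}1)\times(k{+}2)$ constraint matrix, and $M$ is then controlled by Hadamard. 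So both arguments ultimately rest on Hadamard's inequality, but the paper packages the Cramer-type reasoning inside a black-box integer-programming result, whereas you unpack it by hand via an extreme NE. Your route is more self-contained (no external reference needed); the paper's route sidesteps having to justify that extreme equilibria exist with equal support sizes and yield a nonsingular square system --- precisely the bookkeeping you flag as the main obstacle.

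One small technical slip worth fixing: your crude Hadamard bound on the $(k{+}1)\times(k{+}1)$ system produces $\max(a,1)^{k+1}$, and when $k=n$ this is $a^{n+1}$, which is \emph{not} dominated by $a^n\cdot 2^{O(n\log n)}$ for arbitrary $a$. The repair is the one you almost state yourself: only $k$ rows of $N$ carry entries from $A_{I,J}$ (each of Euclidean norm at most $\sqrt{k+1}\cdot a$), while the normalization row is $0/1$-valued with norm $\sqrt{k}$, so the row-wise Hadamard bound gives $|\det N|\le a^{k}\cdot(k{+}1)^{(k+1)/2}\le a^{n}\cdot 2^{O(n\log n)}$ as required.
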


We now overview the lower bounds in Theorems \ref{thm:2.3} and \ref{thm:2.4}. To obtain the lower bounds, we note that in all constructions and for all $n\ge8$, there is a unique NE and it is fully mixed. We then derive closed-form expressions for the unique NE in each construction, and analyze its complexity in terms of $C$. We begin with the simpler lower bound of $2^{\Tilde\Omega(n)}$, and then proceed to the more involved $2^{\Omega(n)}$ lower bound.

It is worth mentioning that \citet{alon1997anti} constructed matrices $B_n$ for every $n$, where the game $(I_n,B_n)$ has a unique fully-mixed NE $(x_n,U_n)$ which satisfies $C(x_n)=2^{\Theta(n\logn)}$. However, these games also admit many pure NEs, implying that $C_i(I_n,B_n)=1$. Put formally,

\begin{theorem*}[{\citep[Theorem 3.5.1]{alon1997anti}}]
For every $n$ there is a matrix $B_n\in\{0,1\}^{n\times n}$ such that for the game $(I_n,B_n)$ there exists a fully-mixed NE $(x,y)$ where $C(x)=2^{\frac{1}{2}n\logn-n(2+o(1))}$, while $C(y)=n$.
\end{theorem*}

\subsection{$2^{\Tilde\Omega(\sqrt n)}$ Lower Bound}\label{subsec:2^sqrt n}
Proofs for this subsection can be found in Appendix \ref{appendix:proofs 2^sqrt n}.
We now construct some matrices. We will then use these matrices and their properties to prove Theorem \ref{thm:2.3}, Item \ref{item:thm 2.3.1}. For $k\in\N$, let $B^k\in\Z_2^{k+1\times k+1}$ be the matrix where $B^k_{ij}=1$ if $i\ne j+1\bmod k+1$ and 0 otherwise. For example,
\begin{align*}
    B^5=\left(\begin{array}{cccccccccccc}1&&1&&1&&1&&1&&0\\0&&1&&1&&1&&1&&1\\1&&0&&1&&1&&1&&1\\1&&1&&0&&1&&1&&1\\1&&1&&1&&0&&1&&1\\1&&1&&1&&1&&0&&1\end{array}\right)
\end{align*}

We now present the main technical theorem of this subsection. This theorem establishes Theorem \ref{thm:2.3}, Item \ref{item:thm 2.3.2} for infinitely many $n$s. 

\begin{theorem}\label{thm:2^sqrt n lower bound}
Let $n\in\N$. For $N=\Theta(n^2\logn)$ there exist matrices $A,B\in\Z_2^{N\times N}$ such that $A=I_N$, and
\begin{align}
    C_1(A,B)=2^{\Theta(n\logn)}=2^{\Tilde{\Theta}(\sqrt N)},~C_2(A,B)=N
\end{align}
\end{theorem}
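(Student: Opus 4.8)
The plan is to build $B$ as a block-diagonal-like matrix whose blocks are the matrices $B^{k}$ for a suitable family of primes, so that the complexity of player~1's equilibrium strategy becomes the product of these primes (or something comparable), which by the prime number theorem is $2^{\Theta(n\log n)}$. Concretely, I would pick the first $m$ primes $p_1<p_2<\dots<p_m$ with $m=\Theta(n)$, set the block sizes to be $p_i$ (so each block is a $p_i\times p_i$ matrix of the form $B^{p_i-1}$), let $N=\sum_i p_i=\Theta(m^2\log m)=\Theta(n^2\log n)$ as claimed, and take $A=I_N$. The game $(I_N,B)$ is then an imitation game, so by Lemma~\ref{lmm:I subseteq J} in any NE $(x,y)$ we have $\supp x\subseteq\supp y$ and $y=U_{\supp y}$. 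The key structural step is to show that in each block $B^{k}$ the subgame $(I_{k+1},B^{k})$ forces full mixing: the matrix $B^{k}$ is a cyclic $0/1$ matrix with exactly one zero per row and column, it is nonsingular, and $(B^{k})^{-1}\mathbf 1>0$ strictly, so the unique NE of the block subgame is fully mixed. I would then argue that these per-block full-mixing constraints are inherited by the full game: if some block were entirely unused by $y$, the best-response structure of an imitation game against $A=I_N$ would make the corresponding rows of $B$ slack in a way that contradicts $y$ being a best response (here I'd use that every row of $B$ has positive sum, i.e.\ no pure NE), and once a block is used it must be used fully by the block-level argument. Hence the unique NE has $y=U_{[N]}$, so $C_2(A,B)=N$, and $x$ is fully mixed.

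Next I would derive a closed form for $x$. Since $y=U_{[N]}$ and $A=I_N$, player~1's indifference/best-response conditions reduce to $B x \propto \mathbf 1$ on the support, i.e.\ $x$ is proportional to $B^{-1}\mathbf 1$, and because $B$ is block-structured this is just the concatenation of the block solutions $x^{(i)}\propto (B^{p_i-1})^{-1}\mathbf 1$ with the blocks weighted so the whole vector sums to $1$. The point is that within block $i$ the rational vector $(B^{p_i-1})^{-1}\mathbf 1$ has a common denominator that is (a divisor of, and in fact equal to) $\det B^{p_i-1}$, which I would compute explicitly — for the cyclic one-zero-per-row matrix $B^{k}$ one gets $\det B^{k}=\pm k$ (up to checking the sign and small cases), and the entries of the scaled solution are the integers $1,2,\dots,k$ in cyclic order, exactly as in Example~\ref{example:1} where the $8\times 8$ block ($k=7$) produces numerators $6,2,3,1,4,5,4,9$ over $34$. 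Combining blocks, $C(x)$ is the least common multiple $\mathrm{lcm}(p_1,\dots,p_m)\cdot(\text{small correction for the cross-block weights})=\prod_{i=1}^m p_i\cdot 2^{O(m)}$, and by the prime number theorem $\log\prod_{i\le m}p_i=\vartheta(p_m)=(1+o(1))p_m=\Theta(m\log m)=\Theta(n\log n)$. Since $N=\Theta(n^2\log n)$ gives $\sqrt N=\Theta(n\sqrt{\log n})$, we get $2^{\Theta(n\log n)}=2^{\tilde\Theta(\sqrt N)}$, as required. Finally $C_1(A,B)=C(x)$ because the NE is unique.

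The main obstacle I expect is the inheritance step: showing that the per-block full-mixing constraints survive in the combined game, i.e.\ that there is no NE in which player~2 plays a mixture supported on a strict subset of blocks or on a block-boundary-straddling set. In an imitation game against $A=I_N$ the support of $x$ must sit inside the support of $y$ and $x$ must make player~1 indifferent across $\supp y$, so I'd need to rule out "partial" supports by examining the minors of $B$ restricted to unions of sub-blocks and checking that the induced indifference equations either are infeasible or force the full block — this is where the specific cyclic structure (one zero per row/column, with the zeros threaded to couple the blocks just enough) does the work, analogously to the "no pure NE / positive row sums" hypothesis in Theorem~\ref{thm:K upper bound}. A secondary technical nuisance is pinning down exactly $C(x)$ rather than a $2^{\Theta(n\log n)}$ estimate: I only need the $\Theta(n\log n)$ in the exponent, so it suffices to bound $\mathrm{lcm}(p_1,\dots,p_m)\le C(x)\le \prod p_i$ and invoke the prime number theorem, avoiding a precise evaluation of the cross-block normalization. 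The upper bound direction $C_1(A,B)\le 2^{O(n\log n)}$ can alternatively be read off from Theorem~\ref{thm:K upper bound}/\ref{thm:upper bound binary} applied to $B$, giving a clean two-sided bound without recomputing determinants by hand.
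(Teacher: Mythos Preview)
Your overall strategy---prime-sized blocks, $A=I_N$, Chebyshev/PNT to get $\log\prod p_k=\Theta(n\log n)$---is exactly the paper's. But the step you flag as ``the main obstacle'' is in fact a genuine gap in your construction as stated, not just a nuisance. With $B$ purely block-diagonal, the inheritance argument \emph{fails}: if player~2 plays the fully-mixed NE of a single block $k$ (i.e.\ $y=U_{J_k}$), then $\supp x\subseteq J_k$ forces $x$ to be the block-$k$ equilibrium strategy, and for every column $j$ outside block $k$ we have $(x^TB)_j=0< u$, so player~2 has no incentive to deviate. Thus each block on its own is already a NE of the full game, and $C_1(A,B)$ drops to the complexity of the smallest block rather than $2^{\Theta(n\log n)}$. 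Your sentence ``would make the corresponding rows of $B$ slack in a way that contradicts $y$ being a best response'' is exactly backwards: slackness in the \emph{unused} columns is what makes the single-block profile an equilibrium.

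The paper's fix is the one-column cyclic shift: take $B'=\mathrm{diag}(B^{p_1},\dots,B^{p_n})$ and set
\[
B=\begin{pmatrix}\mathbf{0}_{N-1\times 1}&B'\\ 1&\mathbf{0}_{1\times N-1}\end{pmatrix},
\]
so the blocks sit just off the diagonal and the last row wraps around to column~1. This couples consecutive blocks: if block $k$ is excluded from $I$, the column immediately to its right gets $(x^TB)_j=0$ while some other column is strictly positive, which cascades to kill the next block's membership in $J$ and hence in $I$, eventually contradicting $I\neq\emptyset$. That coupling is the missing idea; once you have it, the uniqueness and full-mixing argument goes through essentially as you outline.

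Two smaller corrections. First, use $B^{p_k}$ (size $(p_k{+}1)\times(p_k{+}1)$, determinant $p_k$) rather than $B^{p_k-1}$ (determinant $p_k{-}1$); otherwise the denominators are not the primes and the $\mathrm{lcm}$ computation does not give $\prod p_k$. Second, within each block the entries of $x$ are all equal to a constant times $1/p_k$ (not $1,2,\dots,k$ in cyclic order); the numerators $6,2,3,1,4,5,4,9$ you cite from Example~\ref{example:1} come from the $\beta_n$ construction of Theorem~\ref{thm:2.4}, not from this one. With these fixes one gets exactly $C_1(A,B)=\bigl(\prod_k p_k\bigr)\bigl(n+1+\sum_k 1/p_k\bigr)=2^{\Theta(n\log n)}$.
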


It is proved by creating a block matrix that has the blocks $B^{p_1},...,B^{p_n}$ for $p_1,...,p_n$ the first $n$ primes, and placing them diagonally on the block matrix albeit slightly shifted. This shift helps make the game `asymmetric' so that the only equilibrium is fully mixed.

\begin{corollary}[Restatement of Theorem \ref{thm:2.3}, Item \ref{item:thm 2.3.1}]\label{cor:sqrt n lower bound}
There is a function $f(n)=\Tilde\Theta(\sqrt n)$ such that for every $n$ there are $A,B\in\Z_2^{n\times n}$ such that $C_1(A,B)=2^{f(n)}$ and $C_2(A,B)=n$.
\end{corollary}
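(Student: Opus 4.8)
The plan is to leverage Theorem \ref{thm:2^sqrt n lower bound}, which already produces, for every $n$, a pair of matrices in $\Z_2^{N\times N}$ with $N=\Theta(n^2\log n)$, $A=I_N$, $C_1=2^{\Theta(n\log n)}=2^{\tilde\Theta(\sqrt N)}$ and $C_2=N$. The only gap is that this gives the bound for the sparse sequence of dimensions $\{N(n)\}_{n\in\N}$ rather than for \emph{every} dimension. So the whole task reduces to an \textbf{interpolation / padding step}: given an arbitrary $m$, pick the largest $n$ with $N(n)\le m$, take the $N(n)\times N(n)$ construction, and pad it up to an $m\times m$ game without destroying the relevant properties (uniqueness of the NE, its complexity, and the fact that $C_1$ stays exponential in $\sqrt m$ and $C_2$ stays linear).

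The padding itself is exactly what Lemma \ref{lmm:game inflation} is designed for: since in the construction $A=I_N$ has no zero column and (by the binary-payoff assumptions recorded in Section \ref{sec:defs and lemmas}, i.e.\ $\sum_i B_{i,j}>0$) $B$ has no zero row, we may repeatedly add a dummy strategy to each player. Each inflation step preserves the set of NEs up to projecting onto the original $N$ coordinates, hence preserves $C_1$ and $C_2$ exactly, and — crucially for the "$A=I$" structure — Lemma \ref{lmm:game inflation} also preserves the constant-sum property when relevant, and one checks it keeps $A$ an identity matrix (the dummy row/column added to $A$ is a unit vector). Applying it $m-N(n)$ times gives matrices in $\Z_2^{m\times m}$ with $C_1=2^{\Theta(n\log n)}$ and $C_2=N(n)$.

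It remains to verify that, with $n$ chosen maximal subject to $N(n)\le m$, both quantities have the claimed orders as functions of $m$. Because $N(n)=\Theta(n^2\log n)$ is increasing and $N(n+1)/N(n)\to1$, maximality forces $N(n)=\Theta(m)$, hence $n=\tilde\Theta(\sqrt m)$ and $n\log n=\tilde\Theta(\sqrt m)$, so $C_1=2^{\tilde\Theta(\sqrt m)}$; and $C_2=N(n)=\Theta(m)$. Strictly speaking this gives $C_1=2^{\Theta(h(m))}$ and $C_2=\Theta(m)$ rather than the exact forms $2^{f(m)}$ and $m$ stated; to land precisely on $C_2=m$ one pads all the way up to dimension exactly $m$ (so $C_2$, being the size of the support of the uniform player-2 strategy after inflation, is exactly $m$ — one should double-check that inflation makes the player-2 NE strategy uniform on all $m$ coordinates, which follows from Lemma \ref{lmm:I subseteq J} together with the fact that the NE is fully mixed before padding and the dummy strategies are best responses), and one \emph{defines} $f(m):=\log_2 C_1(A,B)$ for the resulting game, then checks $f(m)=\tilde\Theta(\sqrt m)$ from the estimate above. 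The main obstacle is the bookkeeping in this last step: confirming that the inflation lemma interacts correctly with the identity-matrix structure and with the "fully mixed, unique NE" property so that $C_2$ comes out to be exactly $m$ and $C_1$ is genuinely unchanged — everything else is a routine monotonicity estimate on $N(n)$.
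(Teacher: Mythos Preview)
Your overall approach---take the $N\times N$ construction from Theorem \ref{thm:2^sqrt n lower bound} and pad up to the desired dimension via Lemma \ref{lmm:game inflation}---is exactly the paper's proof, which consists of the single sentence ``Follows by taking the games above and padding them with dummy strategies as in Lemma \ref{lmm:game inflation}.''

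Where your write-up goes wrong is the final bookkeeping paragraph, in which you try to sharpen the conclusion to $C_2=m$ exactly. Two of the checks you propose actually fail. First, the inflation of Lemma \ref{lmm:game inflation} does \emph{not} keep $A$ an identity matrix: the construction is $A'=\left(\begin{smallmatrix}A&\mathbf{1}\\\mathbf{0}&1\end{smallmatrix}\right)$, so the new column of $A'$ is all ones, and after one step $A'$ is no longer $I$; in particular Lemma \ref{lmm:I subseteq J} no longer applies. Second, the dummy column for player~2 is \emph{not} a best response: in $B'=\left(\begin{smallmatrix}B&\mathbf{0}\\\mathbf{1}&0\end{smallmatrix}\right)$ the new column is all zeros on the support of $x$, so it yields payoff $0$ to player~2 while the original columns yield $v>0$. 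Hence the lemma's conclusion $\NE(A,B)=\NE(A',B')$ means precisely that the dummy strategies are played with probability $0$, and player~2's equilibrium strategy in the padded game is $(U_N,0,\ldots,0)$, not $U_m$. Consequently $C_2$ of the $m\times m$ game equals $N$, not $m$.

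What the padding actually delivers is $C_1$ unchanged and $C_2=N(n)=\Theta(m)$, via your (correct) observation that $N(n+1)/N(n)\to1$. The equality $C_2(A,B)=n$ in the corollary should be read in this $\Theta(n)$ sense; the paper's one-line proof does not attempt anything sharper, and your attempt to do so cannot work with this particular padding lemma.
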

\begin{proof}
Follows by taking the games above and padding them with dummy strategies as in Lemma \ref{lmm:game inflation}.
\qed\end{proof}

The final corollary of this subsection states that the equilibrium distributions in Theorem \ref{thm:2^sqrt n lower bound} require the maximal amount of space (asymptotically) to store them.

\begin{corollary}\label{cor:x1 is balanced}
For every $n$, in the $n\times n$ game from Corollary \ref{cor:sqrt n lower bound}, the NE distribution $x$ of player 1 satisfies $\max_{i,j}\frac{x_i}{x_j}=\Tilde\Theta(\sqrt n)$. Therefore representing the numerators of $x$ would require $\Tilde\Theta(n\sqrt n)$ bits.
\end{corollary}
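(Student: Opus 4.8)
The plan is to read the ratio $\max_{i,j}x_i/x_j$ off the closed-form description of the unique equilibrium $x$ obtained in the proof of Theorem~\ref{thm:2^sqrt n lower bound}, and then convert it into a bit count by an elementary averaging argument.

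First I would recall the structure. The game behind Theorem~\ref{thm:2^sqrt n lower bound} is $(I_N,B)$, where $B$ is assembled from the blocks $B^{p_1},\dots,B^{p_m}$ with $p_1<\dots<p_m$ the first $m$ primes, placed block-diagonally with the small cyclic shift that forces a unique, fully-mixed NE; here $\sum_\ell (p_\ell+1)=N$, and by the standard prime estimates $\prod_{\ell\le m}p_\ell=2^{\Theta(m\log m)}$ and $p_m=\Theta(m\log m)$ one has both $m$ and $p_m$ equal to $\Tilde\Theta(\sqrt N)$. Since $A=I_N$ and the NE is unique and fully mixed, Lemma~\ref{lmm:I subseteq J} forces $y=U_N$, so $x$ is the unique distribution making $B^\top x$ constant, i.e. the normalization of $B^{-\top}\mathbf 1$; I would work from the explicit expression for this vector that the proof of Theorem~\ref{thm:2^sqrt n lower bound} already produces.

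The core step is a block-wise estimate of the coordinates of $x$. For a \emph{single} block $B^{p}$, player 1's fully-mixed equilibrium strategy in $(I_{p+1},B^{p})$ is exactly uniform (the condition that player 2 be indifferent between all columns, $(B^{p})^\top v=c\mathbf 1$, reads $S-v_{j+1}=c$ for all $j$ with $S=\sum_i v_i$, which forces $v$ constant). In the full game this is perturbed only by the $O(1)$ cross-block edges, and I would show that inside block $\ell$ all coordinates of $x$ agree up to a polylogarithmic factor with a common value $c_\ell$, while the block weights $w_\ell=\sum_{i\in\mathrm{block}\,\ell}x_i$ agree up to a polylogarithmic factor with $1/m$. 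Since block $\ell$ has $\Theta(p_\ell)$ coordinates, $c_\ell=\Tilde\Theta(1/(m p_\ell))$, hence
\[
\max_{i,j}\frac{x_i}{x_j}=\Tilde\Theta\!\left(\frac{\max_\ell c_\ell}{\min_\ell c_\ell}\right)=\Tilde\Theta\!\left(\frac{p_m}{p_1}\right)=\Tilde\Theta(p_m)=\Tilde\Theta(\sqrt N)=\Tilde\Theta(\sqrt n),
\]
the last equality because the padding in Corollary~\ref{cor:sqrt n lower bound} only adjoins $o(N)$ dummy coordinates (so $N=\Theta(n)$), which themselves stay within a polylogarithmic factor of the rest. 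Establishing the balance of the $w_\ell$ across blocks of very different sizes is the step I expect to be the main obstacle: it is exactly where the arithmetic of the primes and the precise choice of shift must be used, and it is also what reconciles $C(x)=2^{\Theta(m\log m)}$ with blocks that are only \emph{approximately} uniform --- one needs each $w_\ell$ to be a rational of denominator $\Theta\!\left(\prod_{\ell'\neq\ell}p_{\ell'}\right)$.

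For the bit count, write $x=(a_1/q,\dots,a_n/q)$ in lowest terms, so $q=C(x)=C_1(A,B)=2^{\Tilde\Theta(\sqrt n)}$ by Corollary~\ref{cor:sqrt n lower bound} and $\sum_i a_i=q$. The ratio bound gives $\max_i a_i\le R\min_i a_i$ with $R=\Tilde\Theta(\sqrt n)$, and averaging then forces $q/(nR)\le a_i\le Rq/n$ for every $i$; hence $\log a_i=\log q-\Theta(\log n)\pm\log R=\Tilde\Theta(\sqrt n)$, using $\log q=\Tilde\Theta(\sqrt n)$ and $\log R,\log n=O(\log n)$. Summing over the $n$ coordinates, storing $a_1,\dots,a_n$ in binary takes $\sum_i(\lfloor\log a_i\rfloor+1)=\Tilde\Theta(n\sqrt n)$ bits, which is the claim (and, by Theorem~\ref{thm:space reqs}, the largest order possible for a distribution with $C(x)=2^{\Tilde\Theta(\sqrt n)}$ on $n$ coordinates).
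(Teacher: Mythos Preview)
Your plan and your bit-count paragraph are exactly what the paper does; the averaging step is identical. The issue is in the middle paragraph, where you invent a difficulty that is not there.

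The proof of Theorem~\ref{thm:2^sqrt n lower bound} already contains the fully explicit closed form: for $i$ in the block $B^{p_k}$ one has $x_i=\dfrac{1}{p_k\,S}$ with $S=m+1+\sum_{\ell}1/p_\ell$, and for the distinguished coordinate $x_N=\dfrac{1}{S}$. There is no perturbation from cross-block edges; the coordinates inside a block are \emph{exactly} equal (not ``up to a polylog factor''), and the block weights are exactly $w_\ell=\dfrac{1+1/p_\ell}{S}$, which lie in $[1/S,\,\tfrac{3}{2}/S]$ for all $\ell$. So the ``main obstacle'' you flag does not exist: one reads off immediately
\[
\max_{i,j}\frac{x_i}{x_j}=\frac{1/S}{1/(p_m S)}=p_m=\Tilde\Theta(\sqrt n),
\]
which is precisely the paper's one-line proof (``by definition and the prime number theorem'').

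One small slip: the dummy strategies added by Lemma~\ref{lmm:game inflation} receive probability \emph{zero} in the NE, not ``within a polylogarithmic factor of the rest''. This does not hurt the bit count (those numerators are $0$ and contribute $O(1)$ bits each, and there are only $\Tilde O(\sqrt n)$ of them), but it does mean the ratio statement should be read over $\supp x$; otherwise the quotient is infinite.
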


\subsection{$2^{\Omega(n)}$ Lower Bound}\label{subsec:2^n}
In this subsection we prove Theorem \ref{thm:2.4}, Item \ref{item:thm 2.4.1}. Proofs can be found in Appendix \ref{appendix:proofs 2^n}.
We first establish constructions and properties. For $z\in\C$, let $\Re(z),\Im(z)$ be the real part and imaginary part of $z$. We begin with a pair of linear recurrences.

\begin{definition}
Define the linear recurrences $a_n,b_n$ by
\begin{align*}
    a_1=1,a_2=1,a_3=1,a_4=0\\
    b_1=0,b_2=1,b_3=0,b_4=1
\end{align*}
and for $n\ge5$
\begin{align*}
    a_n=a_{n-2}-a_{n-3}+a_{n-4},\\
    b_n=b_{n-2}-b_{n-3}+b_{n-4}
\end{align*}
\end{definition}
In Appendix \ref{appendix:proofs 2^n} we prove some basic properties of these recurrences. Among them is the following property that connects the two recurrences:

\begin{lemma}\label{lmm:anbn}
For every $n$, $b_n+b_{n+1}=a_n$.
\end{lemma}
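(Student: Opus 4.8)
The claim is Lemma~\ref{lmm:anbn}: $b_n + b_{n+1} = a_n$ for all $n$. Both sides satisfy the same linear recurrence $x_n = x_{n-2} - x_{n-3} + x_{n-4}$ for $n \ge 5$ (the left side because $b_n$ does, and any shift or sum of solutions of a linear recurrence is again a solution; the right side because $a_n$ does). So the plan is the standard one: verify the identity on the initial segment $n = 1, 2, 3, 4$, and then invoke linearity of the recurrence to propagate it forward.

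Concretely, first I would set $c_n := b_n + b_{n+1}$ and observe that $c_n = c_{n-2} - c_{n-3} + c_{n-4}$ for $n \ge 6$ — indeed each of $b_n$ and $b_{n+1}$ satisfies the recurrence (for $b_{n+1}$ we need $n+1 \ge 5$, i.e. $n \ge 4$, so the combined identity holds for $n \ge 6$ to be safe with both indices, though $n\ge 5$ suffices once one checks indices carefully). Then $c_n$ and $a_n$ are two sequences obeying the same order-4 linear recurrence, so they agree for all $n$ provided they agree at four consecutive starting values. I would check: $c_1 = b_1 + b_2 = 0 + 1 = 1 = a_1$; $c_2 = b_2 + b_3 = 1 + 0 = 1 = a_2$; $c_3 = b_3 + b_4 = 0 + 1 = 1 = a_3$; and $c_4 = b_4 + b_5$. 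Here $b_5 = b_3 - b_2 + b_1 = 0 - 1 + 0 = -1$, so $c_4 = 1 + (-1) = 0 = a_4$. Since the four base cases match and both sequences satisfy the same recurrence from that point on, induction gives $c_n = a_n$ for all $n \ge 1$.

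There is no real obstacle here — the only thing to be careful about is index bookkeeping: confirming that the recurrence for $c_n = b_n + b_{n+1}$ is valid starting from the right index (both constituent terms $b_n$, $b_{n+1}$ must have their indices $\ge 5$ for the defining recurrence of $b$ to apply, which forces $n \ge 5$), and making sure the base-case verification covers enough consecutive values ($n = 1,2,3,4$) so that the induction has a valid foothold before the recurrence kicks in. A clean way to phrase it is: the set of sequences satisfying $x_n = x_{n-2} - x_{n-3} + x_{n-4}$ for all large $n$ is closed under the shift operator and under addition, hence $c_n$ lies in it; two such sequences determined by four consecutive initial values coincide everywhere once those values coincide; and the four computations above establish that coincidence.

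Alternatively — and this might be even shorter to write — one can prove it by a direct strong induction on $n$: check $n = 1,2,3,4$ by hand as above, and for $n \ge 5$ write $b_n + b_{n+1} = (b_{n-2} - b_{n-3} + b_{n-4}) + (b_{n-1} - b_{n-2} + b_{n-3}) = (b_{n-2}+b_{n-1}) - (b_{n-3}+b_{n-2}) + (b_{n-4}+b_{n-3})$, which by the induction hypothesis equals $a_{n-2} - a_{n-3} + a_{n-4} = a_n$. I would probably present this second version, since it is self-contained and does not require the reader to accept the "solution space is closed under shift and sum" principle, only the defining recurrences for $a$ and $b$ — and it makes the needed base cases ($n=1,2,3,4$, including the auxiliary computation $b_5 = -1$) completely explicit.
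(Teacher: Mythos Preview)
Your proposal is correct and your second (direct-induction) version is essentially identical to the paper's proof: the paper checks $n\le 4$ and then, for $n\ge 5$, expands $a_n$ via its recurrence, applies the induction hypothesis $a_{n-k}=b_{n-k}+b_{n-k+1}$, and simplifies to $b_n+b_{n+1}$ --- the same computation you wrote from the other side. Your base-case verification (including $b_5=-1$) is in fact more explicit than the paper's ``it is clear for $n\le 4$''.
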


This property will be used later on to lower-bound the complexity of the equilibrium strategies we find.

The characteristic polynomial of the recurrence $b_n$ is $x^4-x^2+x-1=(x-1)(x^3+x^2+1)$. Let 
\begin{align}
    \rho&=\frac{1}{3}\p{-1-\sqrt[3]{\frac{2}{29-3\sqrt{93}}}-\sqrt[3]{\frac{1}{2}\p{29-3\sqrt{93}}}}\approx-1.4656,\\\nonumber
    z&=\frac{1}{12}\p{-4+\p{1+i\sqrt3}\sqrt[3]{4\p{29-3\sqrt{93}}}+\p{1-i\sqrt3}\sqrt[3]{4\p{29+3\sqrt{93}}}}\\
    &\approx0.2328-0.7926i
\end{align}
be the real root and one of the two complex roots of the polynomial $x^3+x^2+1$. The remaining root is $\overline{z}$.
We have $|z|\approx0.826$.
Along with the additional root of 1, we get the set of linear equations in $w_0,w_1,w_2,w_3$:
\begin{align}
    \begin{cases}
        w_0+w_1\rho+w_2z+w_3\overline{z}=0=b_1\\
        w_0+w_1\rho^2+w_2z^2+w_3\overline{z}^2=1=b_2\\
        w_0+w_1\rho^3+w_2z^3+w_3\overline{z}^3=0=b_3\\
        w_0+w_1\rho^4+w_2z^4+w_3\overline{z}^4=1=b_4\\
    \end{cases}
\end{align}
From these equations one finds
\begin{align}
    w_0=\frac{1}{3},~w_1\approx0.169,~w_2\approx-0.251+0.02i,~w_3=\overline{w_2}
\end{align}
and this gives the closed-form expression for $b_n$:
\begin{align}
\label{eq:bn}
    b_n=\frac{1}{3}+w_1\rho^n+w_2z^n+\overline{w_2}\cdot\overline{z}^n=\frac{1}{3}+w_1\rho^n+2\Re(w_2z^n)
\end{align}

We now have the two main properties of the recurrences $a_n,b_n$ which will allow us to characterize the NEs of the game that will be constructed:

\begin{lemma}
\label{prop:ratio sequences to rho}
There exists a strictly increasing sequence $\lambda_n<\rho$ and a strictly decreasing sequence $\mu_n>\rho$ such that for every $n\ge5$, $-\lambda_nb_n\le b_{n+1}\le-\mu_nb_n$. Furthermore, those sequences satisfy $\lim_n\lambda_n=\lim_n\mu_n=\rho$.
\end{lemma}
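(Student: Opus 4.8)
The plan is to read off everything from the closed form \eqref{eq:bn}, $b_n=\frac{1}{3}+w_1\rho^n+2\Re(w_2z^n)$. First I would record two facts about the roots of $x^3+x^2+1$. On $[-1,1]$ we have $x^3+x^2+1=x^2(x+1)+1\ge 1>0$, so the cubic has no real root in $[-1,1]$; hence its real root $\rho$ satisfies $|\rho|>1$. By Vieta the product of the three roots is $-1$, i.e.\ $\rho|z|^2=-1$, so $|z|=|\rho|^{-1/2}<1$. Also $w_1>0$ (the value coming out of the linear system for the $w_i$'s). Since $w_1\rho^n$ is the only term in \eqref{eq:bn} whose modulus grows (to $\infty$), while $\frac{1}{3}+2\Re(w_2z^n)$ stays bounded, the former dominates the latter in absolute value for all $n\ge 5$ (a single inequality between $w_1|\rho|^n$ and $\frac{1}{3}+2|w_2|\,|z|^n$, verified at $n=5$); consequently $|b_n|\to\infty$ and $\sign b_n=\sign(w_1\rho^n)=(-1)^n$ for every $n\ge 5$.

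Next I would compute the quantity that governs the ratio $b_{n+1}/b_n$. Subtracting $\rho$ times \eqref{eq:bn} from \eqref{eq:bn} written at $n+1$, the $\rho^{n+1}$ terms cancel and we are left with
\[
b_{n+1}-\rho b_n=\frac{1}{3}(1-\rho)+2\Re\big(w_2(z-\rho)z^n\big).
\]
Write $D:=\frac{1}{3}(1-\rho)>0$ (as $\rho<0$) for the first summand and $\varepsilon_n$ for the second, and set $K:=2|w_2|\,|z-\rho|$, so that $|\varepsilon_n|\le K|z|^n$. A second numerical check — that $K|z|^5<D$, after replacing $\rho,z,w_2$ by rational enclosures — then shows $|\varepsilon_n|<D$, hence $b_{n+1}-\rho b_n=D+\varepsilon_n>0$, for every $n\ge 5$. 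Dividing by $b_n$ and using the sign of $b_n$ from the previous step, the error $b_{n+1}/b_n-\rho=(D+\varepsilon_n)/b_n$ is strictly positive for even $n$ and strictly negative for odd $n$, and $|b_{n+1}/b_n-\rho|=(D+\varepsilon_n)/|b_n|\to 0$ because the numerator is bounded (by $D+K|z|^5$) while $|b_n|\to\infty$. This is exactly the two-sided, vanishing oscillation of $b_{n+1}/b_n$ about $\rho$ asserted by the lemma.

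To finish I would extract the two monotone sequences by passing to tails. Let $s_n:=\sup_{m\ge n}\big(b_{m+1}/b_m-\rho\big)$ and $t_n:=\inf_{m\ge n}\big(b_{m+1}/b_m-\rho\big)$; by the previous paragraph the supremum is over the positive even-index terms so $s_n>0$, likewise $t_n<0$, and $s_n,t_n$ are monotone in $n$ with limit $0$. Then
\[
\mu_n:=\rho+s_n+2^{-n},\qquad \lambda_n:=\rho+t_n-2^{-n}
\]
do the job: $\mu_n$ is strictly decreasing, $\mu_n>\rho$, $\mu_n\to\rho$, and $b_{n+1}/b_n\le\mu_n$ for all $n\ge 5$; symmetrically $\lambda_n$ is strictly increasing, $\lambda_n<\rho$, $\lambda_n\to\rho$, and $\lambda_n\le b_{n+1}/b_n$. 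Multiplying $\lambda_n\le b_{n+1}/b_n\le\mu_n$ through by $b_n$, and keeping track of its sign, recovers the displayed sandwich. The $2^{-n}$ perturbations only serve to upgrade the weak monotonicity of the tails to strict monotonicity; the finitely many initial indices (from $n=5$ up to where the asymptotic estimates take over) are checked directly from the recurrence, which also pins down the first few $\lambda_n,\mu_n$.

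The main obstacle is not conceptual but a matter of rigor: the two ``visually obvious'' estimates — $K|z|^5<D$, and $w_1|\rho|^n$ exceeding $\frac{1}{3}+2|w_2|\,|z|^n$ for $n\ge 5$ — must be certified from the nested-radical expressions for $\rho,z,w_1,w_2$, which means fixing explicit rational lower/upper bounds for these constants and verifying the inequalities on those bounds. Everything else is bookkeeping with \eqref{eq:bn} plus a finite base case.
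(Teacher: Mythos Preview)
Your proof is correct and follows the same overall idea as the paper: read everything off the closed form $b_n=\tfrac13+w_1\rho^n+2\Re(w_2z^n)$. The paper's proof is terser --- it computes $\lim_n b_{n+1}/b_n=\rho$ from the closed form and then simply asserts that ``one can see from the expression'' that (i) the distance $|b_{n+1}/b_n-\rho|$ is strictly decreasing and (ii) the sign of $b_{n+1}/b_n-\rho$ alternates with parity; from (i) and (ii) the envelope sequences drop out immediately (e.g.\ $\lambda_n=\rho-d_n$, $\mu_n=\rho+d_n$ with $d_n=|b_{n+1}/b_n-\rho|$). You instead make the decomposition $b_{n+1}-\rho b_n=D+\varepsilon_n$ explicit, which gives (ii) rigorously, and then manufacture $\lambda_n,\mu_n$ via tail sup/inf plus a $2^{-n}$ bump. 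That is a legitimate and more carefully justified route; the trade-off is that your construction does not establish the paper's stronger claim (i) that $|b_{n+1}/b_n-\rho|$ itself is strictly decreasing, and it is precisely (i) that is invoked downstream in the proof of Lemma~\ref{lmm:anbn linear independence} (to conclude $\rho_m<\rho_n<\rho$ or $\rho<\rho_n<\rho_m$ for $m<n$ of the same parity). So your argument proves the lemma as stated, while the paper's (less rigorous) proof proves something a bit stronger that the paper later needs.

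One cosmetic point: the displayed sandwich $-\lambda_n b_n\le b_{n+1}\le-\mu_n b_n$ in the lemma, taken literally, cannot hold for both parities of $n$ (for even $n$ it forces $\mu_n\le\lambda_n$). What both you and the paper actually establish is $\lambda_n\le b_{n+1}/b_n\le\mu_n$; your ``keeping track of its sign'' correctly flags that multiplying through by $b_n$ flips the inequality on odd $n$, so you should not expect to literally recover the printed display --- it is the ratio bound that is the intended content.
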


In particular, $\lim_{n\to\infty}\frac{b_n}{w_1\rho^n}=1$.

\begin{lemma}\label{lmm:anbn linear independence}
For every $m<n$, there is no $\alpha\in\R$ such that $(a_n,b_n)=\alpha\cdot(a_m,b_m)$, iff $(m,n)\notin\s{(1,3),(4,6),(5,7)}$.
\end{lemma}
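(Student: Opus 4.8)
The plan is to rewrite the linear-dependence condition as the vanishing of a $2\times2$ determinant, simplify that determinant with the identity $a_k=b_k+b_{k+1}$ of Lemma~\ref{lmm:anbn}, and then analyse the resulting equation through the sign pattern of $(b_n)$ together with the ratio estimates of Lemma~\ref{prop:ratio sequences to rho} and the closed form \eqref{eq:bn}. First I would record that $(a_n,b_n)\neq(0,0)$ for every $n$: a routine induction on the recurrence gives $\sign(b_n)=(-1)^n$ for all $n\ge4$ (base cases $b_4,\dots,b_7=1,-1,2,-2$), while $b_1=b_3=0$ and $b_2=1$. Hence, for $m<n$, an $\alpha\in\R$ with $(a_n,b_n)=\alpha(a_m,b_m)$ exists iff $a_mb_n-a_nb_m=0$, and substituting $a_k=b_k+b_{k+1}$ collapses this to $b_{m+1}b_n-b_{n+1}b_m=0$. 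So the lemma is equivalent to: for $m<n$, $b_{m+1}b_n=b_{n+1}b_m$ holds exactly when $(m,n)\in\s{(1,3),(4,6),(5,7)}$.

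I would then clear the pairs touching $\s{1,3}$. If $m\in\s{1,3}$ then $b_{m+1}=1$, so the equation reads $b_n=0$, i.e.\ $n\in\s{1,3}$, forcing $(m,n)=(1,3)$; and the only pair with $n\in\s{1,3}$ and $m\notin\s{1,3}$ is $(2,3)$, for which $b_{m+1}b_n=0\neq1=b_{n+1}b_m$. So $(1,3)$ is the only exceptional pair meeting $\s{1,3}$, and it remains to treat $m,n\in\s{2,4,5,6,\dots}$; there $b_m,b_n\neq0$, so the equation is equivalent to $s_m=s_n$ where $s_k:=b_{k+1}/b_k$. Since $s_2=0$ but $s_k\neq0$ for $k\ge4$, the index $2$ matches nothing, and I am reduced to proving that among $k\ge4$ the only coincidences $s_m=s_n$ (with $m<n$) are $(4,6)$ and $(5,7)$.

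The core is a parity split: for $n\ge4$ the terms $b_n,b_{n+1}$ have opposite signs, so $s_n<0$. To rule out an even index matching an odd one I would show $s_n>\rho$ for even $n\ge4$ and $s_n<\rho$ for odd $n\ge5$; from \eqref{eq:bn} one has $b_{n+1}-\rho b_n=\tfrac{1-\rho}{3}+2\Re\!\big(w_2z^n(z-\rho)\big)$, and since $|z|<1$ the oscillating term is eventually dominated by the positive constant $\tfrac{1-\rho}{3}$ (finitely many small $n$ checked by hand), so $b_{n+1}-\rho b_n>0$ for all $n\ge4$; dividing by $b_n$ and tracking its sign gives the claim, and $s_n\neq\rho$ since $s_n\in\Q$ while $\rho$ is irrational (a root of $x^3+x^2+1$, which has no rational roots). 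Within each parity class I would establish $s_4=s_6$ and $s_5=s_7$ (directly: $s_4=s_6=-1$, $s_5=s_7=-2$) together with $s_6>s_8>s_{10}>\cdots$ and $s_7<s_9<s_{11}<\cdots$. Writing $E_n:=b_{n+1}b_{n+2}-b_nb_{n+3}$, one has $s_n-s_{n+2}=E_n/(b_nb_{n+2})$, so this amounts to $E_4=E_5=0$, $E_n>0$ for even $n\ge6$, and $E_n<0$ for odd $n\ge7$ — exactly the monotone convergence of $b_{n+1}/b_n$ to $\rho$ within each parity class that Lemma~\ref{prop:ratio sequences to rho} provides. Assembling the pieces, the only coincidence among $s_4,s_6,s_8,\dots$ is $(4,6)$, the only one among $s_5,s_7,s_9,\dots$ is $(5,7)$, there are no cross-parity coincidences, and together with $(1,3)$ this exhausts the exceptional set.

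The main obstacle is this last step — the within-parity monotonicity, equivalently the sign pattern of $E_n$. If one wants a self-contained argument rather than appealing to Lemma~\ref{prop:ratio sequences to rho}, the cleanest route is to expand $E_n$ in the basis of functions $1,\rho^n,z^n,\overline{z}^n,(\rho z)^n,(\rho\overline{z})^n,z^{2n},\overline{z}^{2n},(z\overline{z})^n$ coming from \eqref{eq:bn}: the constant term and the $\rho^{2n},z^{2n},\overline{z}^{2n}$ contributions cancel because $\rho^{n+1}\rho^{n+2}=\rho^n\rho^{n+3}$ (and likewise for $z,\overline{z}$), the $\rho^n$-coefficient simplifies — using $\rho^3=-\rho^2-1$ — to $\tfrac{w_1\rho(1+2\rho)}{3}\neq0$, and since $|\rho|>|\rho z|>1>|z|$ this growing term dominates everything else; hence $\sign(E_n)$ eventually alternates with $n$ in the required pattern, and the finitely many remaining indices (including the vanishings $E_4=E_5=0$) are verified directly. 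The determinant reduction and the $\s{1,3}$ bookkeeping are entirely routine; all the genuine content lies in controlling these ratios near $\rho$.
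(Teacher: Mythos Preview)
Your proposal is correct and follows essentially the same route as the paper: reduce the dependence condition via $a_k=b_k+b_{k+1}$ to $b_{m+1}b_n=b_{n+1}b_m$, rewrite this as an equality of the ratios $b_{k+1}/b_k$, and separate by parity and within-parity monotonicity using Lemma~\ref{prop:ratio sequences to rho}. You are in fact more careful than the paper about the small indices (the paper just writes ``one can verify the correctness of the claim when $m\le5$''), explicitly isolating the coincidences $s_4=s_6$ and $s_5=s_7$ and handling the zeros $b_1=b_3=0$ cleanly, but the core argument is identical.
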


We now construct the games that achieve the lower bound. For every $n$ define $B_n\in\{0,1\}^{n\times n}$ to have 1s exactly on the main diagonal, on the diagonal above it, and on the second diagonal below the main one. For example,
\begin{align*}
    B_5=\left(\begin{array}{cccccccccc}
        1&&1&&0&&0&&0\\
        0&&1&&1&&0&&0\\
        1&&0&&1&&1&&0\\
        0&&1&&0&&1&&1\\
        0&&0&&1&&0&&1
    \end{array}\right)
\end{align*}
Define
\begin{align*}
    \beta_n=\left(\begin{array}{c|c}
        {\bf0}_{n-1\times1}&B_{n-1}\\\hline
        1&{\bf0}_{1\times n-1}
    \end{array}\right)\in\Z_2^{n\times n}
\end{align*}
As an example,
\begin{align}
\label{eq:beta10}
    \beta_{11}=\left(\begin{array}{cccccccccccccccccccccccccccc}
    0&&1&&1&&0&&0&&0&&0&&0&&0&&0&&0\\
    0&&0&&1&&1&&0&&0&&0&&0&&0&&0&&0\\
    0&&1&&0&&1&&1&&0&&0&&0&&0&&0&&0\\
    0&&0&&1&&0&&1&&1&&0&&0&&0&&0&&0\\
    0&&0&&0&&1&&0&&1&&1&&0&&0&&0&&0\\
    0&&0&&0&&0&&1&&0&&1&&1&&0&&0&&0\\
    0&&0&&0&&0&&0&&1&&0&&1&&1&&0&&0\\
    0&&0&&0&&0&&0&&0&&1&&0&&1&&1&&0\\
    0&&0&&0&&0&&0&&0&&0&&1&&0&&1&&1\\
    0&&0&&0&&0&&0&&0&&0&&0&&1&&0&&1\\
    1&&0&&0&&0&&0&&0&&0&&0&&0&&0&&0
    \end{array}\right)
\end{align}

The determinant of $\beta_n$ has a nice form:

\begin{lemma}\label{lmm:beta_n det}
$\det B_1=1,\det B_2=1,\det B_3=2$. $\det B_n=\det B_{n-1}+\det B_{n-3}$. Also $\det\beta_n=(-1)^{n+1}\det B_n$.
\end{lemma}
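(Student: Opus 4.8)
The plan is to establish the three parts in order: the base cases for $\det B_n$, the recurrence for $\det B_n$, and then the relation $\det\beta_n=(-1)^{n+1}\det B_n$. The base cases $\det B_1=1$, $\det B_2=1$, $\det B_3=2$ are direct computations on the $1\times1$, $2\times2$, $3\times3$ matrices with $1$s on the main diagonal, the superdiagonal, and the second subdiagonal; for instance $B_2=\left(\begin{smallmatrix}1&1\\0&1\end{smallmatrix}\right)$ has determinant $1$, and $B_3$ has determinant $1\cdot(1\cdot1-1\cdot0)-1\cdot(0\cdot1-1\cdot1)+0=1+1=2$. (I will write these out as ordinary $2\times2$/$3\times3$ determinants rather than using \texttt{smallmatrix} if the paper doesn't load \texttt{amsmath}'s environments — here I would just present the arithmetic.)

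For the recurrence $\det B_n=\det B_{n-1}+\det B_{n-3}$, the plan is cofactor expansion along the first column of $B_n$. The first column of $B_n$ has a $1$ in row $1$ (main diagonal) and a $1$ in row $3$ (second subdiagonal), and zeros elsewhere. Expanding gives $\det B_n = B_{1,1}\kappa_{1,1} + B_{3,1}\kappa_{3,1} = \det(B_n[1|1]) + \det(B_n[3|1])$, since the sign of the $(1,1)$ cofactor is $+$ and the sign of the $(3,1)$ cofactor is $(-1)^{3+1}=+$. The key observation is that $B_n[1|1]$ — delete row $1$ and column $1$ — is exactly the same banded pattern on $n-1$ indices, hence equals $B_{n-1}$. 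For $B_n[3|1]$ — delete row $3$ and column $1$ — I expect the resulting $(n-1)\times(n-1)$ matrix to be block-triangular: the top-left block (rows $1,2$, columns $1,2$ of the minor) is upper triangular with $1$s on the diagonal (the entries $B_{1,2}=1$, $B_{2,3}=1$ sit on the diagonal of this block, and the entry below, $B_{2,2}$, in the minor is actually a sub-diagonal position that is $0$), contributing a factor $1$, and the remaining block is $B_{n-3}$. So $\det(B_n[3|1])=\det B_{n-3}$. I would verify the block-triangular claim carefully by writing out which original entries land where in the minor — this is the one spot where an off-by-one in the banded-pattern bookkeeping is easy to make, so it deserves an explicit small picture (say for $B_6$) to confirm the pattern before asserting it in general.

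Finally, for $\det\beta_n=(-1)^{n+1}\det B_n$: $\beta_n$ has a zero first column except for the single $1$ in row $n$ (the bottom-left corner), and its columns $2,\dots,n$ restricted to rows $1,\dots,n-1$ form exactly $B_{n-1}$, with the last row being zero except that corner. Cofactor-expand along the first column: $\det\beta_n = (\beta_n)_{n,1}\cdot(-1)^{n+1}\det(\beta_n[n|1]) = (-1)^{n+1}\det B_{n-1}$. Then I combine this with the recurrence: I claim $\det B_{n-1}=\det B_n$ is \emph{not} what's wanted — rather the statement asserts $\det\beta_n=(-1)^{n+1}\det B_n$, so I need $\det B_{n-1}$ to be replaced. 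Re-examining, the minor $\beta_n[n|1]$ is the top-right $(n-1)\times(n-1)$ submatrix of $\beta_n$, which by construction is $B_{n-1}$; so literally $\det\beta_n=(-1)^{n+1}\det B_{n-1}$, and to reconcile with the claimed formula I should double-check the index convention in the statement (it may be that the paper's $B_n$ in the definition of $\beta_n$ uses a shifted index, i.e. $\beta_n$ is built from $B_{n-1}$ and the claim's "$\det B_n$" is a typo for $\det B_{n-1}$, or the indexing of $\beta$ is off by one). I would resolve this by recomputing the $n=2,3$ cases of $\beta_n$ explicitly against both candidate formulas; the main obstacle overall is precisely this kind of index bookkeeping in the banded minors, not any deep idea — once the minors are correctly identified, everything is a one-line cofactor expansion.
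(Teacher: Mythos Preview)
Your approach is correct and is essentially the paper's: straightforward cofactor expansion. The paper expands $B_n$ along the first \emph{row} (entries $B_{1,1}=B_{1,2}=1$) and then peels two more rows/columns off the second minor to reach $B_{n-3}$, whereas you expand along the first \emph{column} (entries $B_{1,1}=B_{3,1}=1$) and read off the block structure of $B_n[3|1]$ directly; either route works and yours is arguably cleaner.

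Two small corrections to your bookkeeping on $B_n[3|1]$: the top-left $2\times2$ block is $\left(\begin{smallmatrix}B_{1,2}&B_{1,3}\\B_{2,2}&B_{2,3}\end{smallmatrix}\right)=\left(\begin{smallmatrix}1&0\\1&1\end{smallmatrix}\right)$, so it is \emph{lower} triangular and the sub-diagonal entry $B_{2,2}$ is $1$, not $0$. This does not affect your conclusion (the block still has determinant $1$, and the top-right block --- rows $1,2$ against original columns $4,\dots,n$ --- is indeed zero, so the minor is block lower-triangular with bottom-right block $B_{n-3}$), but it is exactly the off-by-one you flagged as a risk.

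On part 3 your instinct is right: expanding along the first column of $\beta_n$ gives $\det\beta_n=(-1)^{n+1}\det B_{n-1}$, not $(-1)^{n+1}\det B_n$. A quick check at $n=3$ confirms this ($\det\beta_3=1=\det B_2$, while $\det B_3=2$), and the paper itself uses $|\det\beta_n|=\det B_{n-1}$ in the proof of Lemma~\ref{lmm:det beta_n bn identity}. So the ``$\det B_n$'' in the lemma statement is a typo for $\det B_{n-1}$; your computation is the intended one.
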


So $\det B_n$ is also a linear recurrence. Similarly to before, its characteristic polynomial is $x^3-x^2-1$, whose roots are $-\rho,-z,-\overline{z}$. As a result a similar analysis yields the same properties for the recurrence $\det B_n$, and $\det\beta_n=\Theta(\rho^n)$. In particular $\beta_n$ is invertible for all sufficiently large $n$.

We now present the two main lemmas, that together yield the $2^{\Omega(n)}$ lower bound.

\begin{lemma}
\label{lmm:beta_n nes}
For every $n\notin\s{3,6,7}$, the game $(I_n,\beta_n)$ has a unique NE and it is fully mixed.
\end{lemma}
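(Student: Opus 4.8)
The plan is to show that $(I_n,\beta_n)$ has a fully-mixed NE and that it is the unique NE, the latter by ruling out any NE supported on a proper subset of $[n]$. First I would identify the candidate fully-mixed equilibrium. Since $A=I_n$, Lemma~\ref{lmm:I subseteq J} forces $y=U_{[n]}$ (the uniform distribution) whenever player~1 is fully mixed. For player~1's strategy $x$, the best-response condition says player~2 must be indifferent among all pure strategies in the support, i.e. $x^T\beta_n$ is constant on $[n]$; equivalently $\beta_n^T x=c\cdot{\bf1}$ for some $c>0$, so $x$ is proportional to $(\beta_n^T)^{-1}{\bf1}=(\beta_n^{-1})^T{\bf1}$. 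By Lemma~\ref{lmm:beta_n det} and the characteristic-polynomial analysis following it, $\det\beta_n=\Theta(\rho^n)\ne0$ for $n$ large (and one checks the small cases $n\notin\{3,6,7\}$ directly — indeed $\{3,6,7\}$ are exactly the sizes where $\det B_n$ or a relevant dependency degenerates, cf. Lemma~\ref{lmm:anbn linear independence}), so this $x$ is well-defined. I would then verify $x\ge{\bf0}$ (in fact $x>{\bf0}$) using the closed-form/recurrence description — this is where the recurrences $a_n,b_n$ enter, since the entries of $\beta_n^{-1}{\bf1}$ should be expressible via $a_n,b_n$ and Lemma~\ref{lmm:anbn} ($b_n+b_{n+1}=a_n$), and Lemma~\ref{prop:ratio sequences to rho} controls their signs and ratios. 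This gives existence of a fully-mixed NE $(x,U_{[n]})$.

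Next, uniqueness. Suppose $(x,y)$ is any NE. By Lemma~\ref{lmm:I subseteq J}, $\supp x\subseteq\supp y=:J$ and $y=U_J$. Let $I=\supp x\subseteq J$. The best-response condition for player~1 (who gets payoff $x^T I_n y = \sum_{i\in I} x_i y_i$, so each $i\in I$ must be a best response to $y=U_J$) gives that $i\in J$; combined with $I\subseteq J$ this says: every $i\in J$ with $(U_J)_i>0$... more carefully, player~1's payoff from pure strategy $i$ against $U_J$ is $\frac1{|J|}[i\in J]$, so the best responses are exactly $J$, forcing $I=J$; hence $x$ is fully supported on $J$ and $(x,U_J)$ is an equilibrium of the subgame on $J\times J$ with matrices $(I_J, (\beta_n)_{J,J})$. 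Then the indifference condition for player~2 forces $(\beta_n)_{J,J}^T x = c\cdot{\bf1}_J$. So it suffices to prove: for every $J\subsetneq[n]$, there is no strictly positive solution $x$ on $J$ to $(\beta_n)_{J,J}^T x=c{\bf1}_J$ that is \emph{also} a full NE — i.e., that also satisfies player~2 having no profitable deviation to strategies outside $J$, namely $(\beta_n^T x)_k\le c$ for $k\notin J$. I would show that the only $J$ for which a strictly positive indifference solution even exists is $J=[n]$, exploiting the near-anti-circulant / banded structure of $\beta_n$ (the single $1$ in the bottom-left corner and the $(0,+1,+1)$ / $(+1,0,+1,+1)$ band pattern of $B_{n-1}$), perhaps by an argument that any principal submatrix $(\beta_n)_{J,J}$ with $J\ne[n]$ either is singular, or has $(\beta_n)_{J,J}^{-1}{\bf1}$ with a nonpositive coordinate, or violates the outside-deviation inequality. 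The structural claim that player~1 must fully mix is exactly the analogue, in the $a_n,b_n$ world, of the primes-block argument of Theorem~\ref{thm:2^sqrt n lower bound}, and Lemma~\ref{lmm:anbn linear independence} is precisely the tool that tells us no proper "sub-recurrence" supports an equilibrium once $n\notin\{3,6,7\}$.

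The main obstacle I expect is the uniqueness half: ruling out \emph{all} proper supports $J\subsetneq[n]$ simultaneously. The clean way is probably not case analysis on $J$ but a single linear-algebraic/combinatorial argument tying a hypothetical equilibrium on $J$ to a forbidden linear dependence among the vectors $(a_m,b_m)$, which Lemma~\ref{lmm:anbn linear independence} then excludes except at $m\in\{1,4,5\}$ (and correspondingly $n\in\{3,6,7\}$). Concretely, I would try to show that a fully-mixed equilibrium on a window of consecutive indices of length $n$ exists iff the pair $(a_{n},b_{n})$ is not a scalar multiple of $(a_{m},b_{m})$ for the relevant smaller $m$, so the excluded set $\{3,6,7\}$ drops out of Lemma~\ref{lmm:anbn linear independence} verbatim. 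Translating the "support is a proper subset" hypothesis into such a scalar-multiple relation — i.e. showing the band structure forces any proper-support equilibrium to be a contiguous block and then to induce the dependence — is the delicate step; everything else (invertibility via $\det\beta_n=(-1)^{n+1}\det B_n=\Theta(\rho^n)$, positivity of $x$ via Lemmas~\ref{lmm:anbn} and \ref{prop:ratio sequences to rho}, and $y=U_{[n]}$ via Lemma~\ref{lmm:I subseteq J}) is comparatively routine bookkeeping.
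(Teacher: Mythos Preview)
Your high-level plan is sound --- exhibit the fully-mixed NE via $\beta_n^T x = c{\bf1}$ and $y = U_{[n]}$, then rule out proper supports by tying them to forbidden linear dependencies among the vectors $(a_m,b_m)$ via Lemma~\ref{lmm:anbn linear independence}. That is indeed the paper's strategy. But there are two genuine problems.

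First, the step ``best responses are exactly $J$, forcing $I=J$'' is wrong. From $A=I_n$ and $y=U_J$ you get $\br_1(y)=J$, hence $I=\supp x\subseteq J$, but nothing more: player~1 may mix over any subset of her best responses. The paper's own auxiliary lemma in fact characterizes the possible $\supp y$ as \emph{all} supersets of $\supp x$, so $I\subsetneq J$ is genuinely possible a priori. Consequently your reduction to ``the subgame on $J\times J$ with $(\beta_n)_{J,J}^T x = c{\bf1}_J$ and $x>0$ on $J$'' is not the right object; the actual constraint is that $(x^T\beta_n)_j$ is constant on $J$ and no larger outside $J$, with $x$ supported only on some $I\subseteq J$.

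Second, the ``delicate step'' you flag --- converting a proper-support hypothesis into a scalar-multiple relation between some $(a_m,b_m)$ and $(a_n,b_n)$ --- is substantially heavier than you suggest, and the paper does not find a single clean reduction. It first runs a hands-on case analysis (separately for each of $i\in\{1,n-4,n-3,n-2,n-1,n\}$, chasing the entries of $x^T\beta_n$) to force $\{1,n-4,\dots,n\}\subseteq I$. Only then, and under the extra assumption $J=[n]$, does it derive from the column-indifference equations that $x_{n-k}=a_k x_{n-1}+b_k x_{n-4}$ and invoke Lemma~\ref{lmm:anbn linear independence} to conclude $I=[n]$. The remaining case $J\subsetneq[n]$ is handled by yet another argument: an inequality on $(x^T\beta_n)_{n-l}$ versus $(x^T\beta_n)_{n-l+1}$ that, after unwinding the recurrences, is reduced to the asymptotic identity $\lim_l\bigl(b_l-\tfrac{b_{l+1}b_{l-1}}{b_l}\bigr)=-\tfrac{(\rho-1)^2}{3\rho}$. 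Your speculation that a proper-support equilibrium must sit on a contiguous block is neither what the paper proves nor obviously true given the wrap-around entry of $\beta_n$. So your instinct about where Lemma~\ref{lmm:anbn linear independence} enters is correct, but the road to it is considerably more case-based than a single combinatorial reduction.
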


\begin{lemma}
\label{lmm:beta_n complexity}
Let $n\ge8$ and let $(x,y)$ be the unique NE of $(I_n,\beta_n)$. Except for a subsequence $m_n=\omega(n)$, there holds $C(x)=2^{\Theta(n)}$, while $C(y)=n$.
\end{lemma}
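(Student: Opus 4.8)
The plan is to exploit the structure $A = I_n$ to pin down the NE explicitly, then read off the complexities. By Lemma~\ref{lmm:beta_n nes}, for $n \ge 8$ the game $(I_n,\beta_n)$ has a unique, fully-mixed NE $(x,y)$. Since player~1's payoff matrix is $I_n$ and the NE is fully mixed, the best-response (indifference) condition for player~2 forces $\beta_n^T y$ to be a constant vector; equivalently $y$ is the (normalized) solution of $\beta_n^T y = \mathbf{1}$, so $y \propto (\beta_n^T)^{-1}\mathbf{1} = (\beta_n^{-1})^T\mathbf{1}$. By the same token $x$ is the normalized solution of the indifference condition for player~1, $x^T \beta_n = \text{const}\cdot\mathbf{1}^T$, i.e.\ $x \propto \beta_n^{-1}\mathbf{1}$. (One must check that these solutions are genuinely nonnegative — but that is exactly the content of "the unique NE is fully mixed," already granted by Lemma~\ref{lmm:beta_n nes}.) So the entire problem reduces to understanding the entries of $\beta_n^{-1}\mathbf{1}$ and $(\beta_n^{-1})^T\mathbf{1}$ as integer vectors after clearing the common denominator $\det\beta_n$.

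Next I would compute these two vectors in closed form using the block structure $\beta_n = \left(\begin{smallmatrix} \mathbf{0} & B_{n-1}\\ 1 & \mathbf{0}\end{smallmatrix}\right)$ together with Lemma~\ref{lmm:beta_n det}. The row $\beta_n^{-1}\mathbf{1}$ is, up to sign and scaling by $\det\beta_n = (-1)^{n+1}\det B_n$, the vector of signed sums of cofactors along rows/columns of $\beta_n$; these in turn unfold into sums $\det(\beta_n \stackrel{i}{\gets}\mathbf{1})$, and by the block form each such determinant is (again up to sign) a determinant of $B_{n-1}$ with one column replaced by $\mathbf{1}$, or a sub-determinant $\det B_k$. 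The key claim I expect to prove is that the numerators of $x$ and $y$, once scaled by $\det B_n$, are expressible in terms of $a_n$ and $b_n$ (and their partial sums), which is precisely why the recurrences $a_n,b_n$ were introduced and why Lemma~\ref{lmm:anbn} ($b_n + b_{n+1} = a_n$) and Lemma~\ref{prop:ratio sequences to rho} were established. For player~2: $C(y) = n$ should come out because $B_{n-1}$'s tridiagonal-like structure makes $(\beta_n^T)^{-1}\mathbf{1}$ have all numerators equal (or equal up to a bounded pattern), so after reducing by the gcd the denominator collapses to $n$ — matching the numeric evidence $y = U_{[8]}$ in Example~\ref{example:1} would be a good sanity check, and here the relevant matrix is $\beta_8$ (cf.\ the $B$ of that example, which is exactly $\beta_8$). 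For player~1: the numerators of $x$ are, up to reduction, partial sums of the $b_k$ sequence, so $C(x) = \det B_n / \gcd(\text{numerators})$, and since $\det B_n = \Theta(\rho^n)$ with $|\rho| \approx 1.4656 > 1$, this is $2^{\Theta(n)}$ — provided the gcd does not secretly eat an exponential factor.

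The main obstacle, and the place I expect the real work to be, is the gcd control for $x$: showing $\gcd$ of the integer numerators of $x$ is subexponential (indeed, bounded by a polynomial or even a constant, outside a sparse set of $n$), so that $C(x) = \det B_n / \gcd = 2^{\Theta(n)}$. This is where Lemma~\ref{lmm:anbn linear independence} enters: a large common factor in the numerators would amount to an exact linear dependence between consecutive pairs $(a_m,b_m)$ and $(a_n,b_n)$, and Lemma~\ref{lmm:anbn linear independence} says this happens only for $(m,n) \in \{(1,3),(4,6),(5,7)\}$ — all with $m,n \le 7$, hence irrelevant once $n \ge 8$. The subsequence $m_n = \omega(n)$ in the statement presumably absorbs the finitely-many or sparsely-occurring $n$ where a small (but nonconstant) common factor sneaks in, or where $\det B_n$ shares a factor with a numerator for number-theoretic reasons; I would handle it by a direct estimate showing any such common factor is $n^{O(1)}$, which still leaves $C(x) = 2^{\Theta(n)}$, and only falls back to the "exceptional subsequence" language if even that polynomial bound fails for a thin set of indices. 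The lower bound $|\rho|^n \le \det B_n$ and the matching upper bound on numerators both follow from the closed form~\eqref{eq:bn} and Lemma~\ref{prop:ratio sequences to rho}; combining these with the gcd bound yields $C(x) = 2^{\Theta(n)}$ and $C(y) = n$ as claimed.
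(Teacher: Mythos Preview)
Your outline---compute $(x,y)$ explicitly, read off $C(y)=n$, then control the gcd of the integer numerators of $x$---is the paper's, but there are two problems.

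First, the indifference conditions are swapped. Since $A=I_n$, player~1's indifference condition is $I_ny=v\mathbf{1}$, so $y=U_n$ immediately; there is nothing to compute, and $(\beta_n^T)^{-1}\mathbf{1}$ is \emph{not} uniform (the row sums of $\beta_n$ are not constant---check $\beta_8$ in Example~\ref{example:1}). It is $x$ that satisfies $\beta_n^Tx=u\mathbf{1}$. Relatedly, the pre-reduction denominator of $x$ is $K(\beta_n)$, not $\det B_n$: the paper shows (Lemma~\ref{lmm:complexity of equilibrium}) that $C(x)=|K(\beta_n)|/g_n$ with $g_n=\gcd(a_n,b_n)=\gcd(b_n,b_{n+1})$.

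Second, and this is the real gap, your proposed gcd mechanism does not work. Lemma~\ref{lmm:anbn linear independence} says $(a_m,b_m)$ and $(a_n,b_n)$ are not $\R$-proportional, i.e.\ the determinant $a_nb_m-a_mb_n$ is nonzero; this says nothing about $\gcd(a_n,b_n)$ (e.g.\ $(2,4)$ and $(3,5)$ are independent but $\gcd(2,4)=2$). In the paper that lemma is used to establish full mixing (Lemma~\ref{lmm:beta_n nes}), not to bound the gcd directly. The actual bound on $g_n$ (Lemma~\ref{lmm:gcd upper bound}) comes from a different idea: run the Euclidean algorithm on $b_n,b_{n+1}$ and compare it step-by-step to the ``ideal'' Euclidean computation on $\rho^n,\rho\cdot\rho^n$, which never terminates since $\rho\notin\Q$. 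The $k$th ideal step is governed by a degree-one polynomial $M_k\in\Z[\rho]$ with $0<M_k(\rho)<2^{-(k-1)/2}$; since each $M_k$ has a unique rational root and the ratios $\rho_n=b_{n+1}/b_n$ are pairwise distinct (this is where Lemma~\ref{lmm:anbn linear independence} actually enters), at most one $n$ per $k$ can have $M_k(\rho_n)=0$. A counting argument then shows that, off a subsequence $m_n=\omega(n)$, the algorithm on $b_n,b_{n+1}$ survives $\Omega(n)$ steps, yielding $g_n\le\gamma^n$ for some explicit $\gamma<|\rho|$. This is weaker than your conjectured $g_n=n^{O(1)}$ but sufficient, and it is not obtainable from linear independence alone.
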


We now sketch the proof of Lemma \ref{lmm:beta_n complexity}.

\subsubsection{Proof of Lemma \ref{lmm:beta_n complexity}}
First, we need more technical lemmas. Let $(x,y)$ be the fully-mixed NE of $(I_n,\beta_n)$. The first one shows that $K(\beta_n)$ is large.
\begin{lemma}\label{lmm:K(beta_n) bounds}
$|K(\beta_n)|\ge\frac{n}{3}|\det\beta_n|$.
\end{lemma}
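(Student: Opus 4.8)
The goal is to prove $|K(\beta_n)| \ge \frac{n}{3}|\det\beta_n|$. My plan is to apply Lemma \ref{lmm:2} (the strengthened $K$--$\det$ inequality) directly to the matrix $\beta_n$. To invoke that lemma I need three hypotheses: that $\beta_n$ is binary, that $\det\beta_n \ne 0$ and $\beta_n^{-1}\mathbf{1} \ge 0$, and that each row sum of $\beta_n$ is bounded by some $M$, after which the lemma yields $|K(\beta_n)| \ge \frac{n}{M}|\det\beta_n|$. So the real content is: (i) check $\beta_n$ is binary — immediate from its definition as a $0/1$ matrix; (ii) check invertibility — this is noted right after Lemma \ref{lmm:beta_n det}, since $\det\beta_n = (-1)^{n+1}\det B_n = \Theta(\rho^n) \ne 0$ for $n\ge 8$ (and in any case nonzero for $n \notin \{3,6,7\}$, which is where we need it); (iii) check $\beta_n^{-1}\mathbf{1} \ge 0$; and (iv) exhibit $M = 3$, i.e. show every row of $\beta_n$ has at most $3$ ones.

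For step (iv): by construction $B_{n-1}$ has ones only on the main diagonal, the superdiagonal, and the second subdiagonal, so each of its rows has at most $3$ ones. In $\beta_n$, the first $n-1$ rows consist of a leading $0$ followed by a row of $B_{n-1}$, hence still at most $3$ ones; the last row is $(1, \mathbf{0})$ with a single one. Thus $M=3$ works. For step (iii), $\beta_n^{-1}\mathbf{1} \ge 0$ is exactly the statement that the vector proportional to $\beta_n^{-1}\mathbf{1}$ is (after normalization) a probability distribution, i.e. that player $1$'s equilibrium strategy $x$ in the game $(I_n, \beta_n)$ is a genuine mixed strategy supported everywhere. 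This is precisely what Lemma \ref{lmm:beta_n nes} gives us: for $n \notin \{3,6,7\}$ the game $(I_n,\beta_n)$ has a unique, fully-mixed NE. By the best-response characterization of NE in the imitation game $(I_n,\beta_n)$ (cf. Lemma \ref{lmm:I subseteq J} and the standard indifference conditions), the player-$1$ strategy $x$ satisfies $\beta_n^T x \propto \mathbf{1}$ on its support, which is all of $[n]$; equivalently, after transposing appropriately, $\beta_n^{-1}\mathbf{1}$ is a nonnegative multiple of a full-support distribution. Since $x > 0$ entrywise and the normalizing scalar has a fixed sign (determined by $\det\beta_n$), we conclude $\beta_n^{-1}\mathbf{1} \ge 0$ (possibly after noting the sign and, if needed, working with $-\beta_n$; the inequality $|K| \ge \frac{n}{M}|\det|$ is insensitive to global sign).

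With all four hypotheses in hand, Lemma \ref{lmm:2} immediately gives $|K(\beta_n)| \ge \frac{n}{3}|\det\beta_n|$, completing the proof. The one subtlety to handle carefully is the sign/normalization bookkeeping in step (iii): Lemma \ref{lmm:1} and Lemma \ref{lmm:2} are stated for matrices $A$ with $A^{-1}\mathbf{1}\ge 0$, and $\beta_n$ itself (as opposed to its transpose, or a sign-flipped version) may or may not literally satisfy this depending on how the NE indifference conditions are set up. I expect the main obstacle is simply making sure the correct matrix among $\beta_n$, $\beta_n^T$, $-\beta_n$ is the one fed into Lemma \ref{lmm:2}, and observing that $K$, $\det$, and the row-sum bound all transform compatibly (row sums of $\beta_n$ versus column sums, and $K(A) = K(A^T)$, $\det A = \det A^T$) so that the bound $\frac{n}{3}|\det\beta_n|$ is obtained regardless. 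Everything else is a direct bookkeeping application of the already-established lemmas.
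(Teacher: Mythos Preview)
Your proposal is correct and follows exactly the paper's approach: the paper's proof is the single line ``Follows from Lemma~\ref{lmm:K det inequality strengthened}'' (i.e.\ Lemma~\ref{lmm:2} with $M=3$), and you have simply unpacked the verification of its hypotheses. The only wrinkle you flag---whether the nonnegativity condition is for $\beta_n^{-1}\mathbf{1}$ or $(\beta_n^T)^{-1}\mathbf{1}$---is indeed handled by the observations $K(A)=K(A^T)$, $\det A=\det A^T$, and that the column sums of $\beta_n$ are also at most $3$ (or, alternatively, by the $(\pi,\pi)$-symmetry of $\beta_n$ noted later in Section~\ref{subsec:constant}).
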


Let $g_n=\gcd(a_n,b_n)$. By Lemma \ref{lmm:anbn}, $g_n=\gcd(b_n+b_{n+1},b_n)=\gcd(b_n,b_{n+1})$.
We now characterize $C(x)$ in terms of $\beta_n$ and $g_n$.

\begin{lemma}\label{lmm:complexity of equilibrium}
$C(x)=\frac{|K(\beta_n)|}{g_n}$.
\end{lemma}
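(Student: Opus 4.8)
The goal is to show $C(x) = |K(\beta_n)|/g_n$, where $(x,y)$ is the unique fully-mixed NE of $(I_n,\beta_n)$ and $g_n = \gcd(a_n,b_n) = \gcd(b_n,b_{n+1})$. The plan is to first write down $x$ explicitly from the best-response / indifference conditions and then reduce the rational vector to lowest terms, tracking the common denominator. Since the game is $(I_n,\beta_n)$, Lemma~\ref{lmm:I subseteq J} tells us $y = U_{[n]}$ (the NE is fully mixed), so player~1's payoff matrix being $I_n$ forces player~2 to be indifferent only if $x$ makes player~2's payoffs equal across all pure strategies: $\beta_n^T x = v\cdot\mathbf{1}$ for some scalar $v>0$. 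Hence $x = v\,(\beta_n^T)^{-1}\mathbf{1} = v\,(\beta_n^{-1})^T\mathbf{1}$, and normalizing so the entries sum to $1$ gives $x = \frac{(\beta_n^{-1})^T\mathbf{1}}{\mathbf{1}^T(\beta_n^{-1})^T\mathbf{1}} = \frac{(\beta_n^{-1})^T\mathbf{1}}{K(\beta_n)/\det\beta_n}$, using the identities $(\beta_n^{-1})_{i,j} = \kappa_{j,i}/\det\beta_n$ and $K(\beta_n) = \sum_{i,j}\kappa_{i,j}$ from Section~\ref{sec:defs and lemmas}.

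**The integer numerator vector.** Writing $p := \det(\beta_n)\cdot(\beta_n^{-1})^T\mathbf{1} \in \Z^n$ (this is the vector of column sums of the cofactor matrix, each entry being a sum of cofactors, hence an integer), we get $x_i = p_i / \sum_j p_j = p_i / K(\beta_n)$. So $x$ is represented as $(p_1/K(\beta_n),\dots,p_n/K(\beta_n))$, and by Definition~\ref{def:complexity}, $C(x) = K(\beta_n)/\gcd(p_1,\dots,p_n)$ (up to sign; $K(\beta_n)$ and the $p_i$ can be taken with a consistent sign since $x\ge 0$). It remains to show $\gcd(p_1,\dots,p_n) = g_n$. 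This is the crux. I would compute the entries $p_i$ explicitly in terms of the recurrence values $a_m, b_m$: expanding $\beta_n^{-1}$ using the block structure of $\beta_n$ (it is $B_{n-1}$ bordered by a row/column that cyclically shifts), one should find that each $p_i$ is, up to sign, a $\Z$-linear combination — in fact a small fixed combination — of consecutive terms $a_k, b_k$ of the two recurrences, with the two "extreme" coordinates being essentially $\pm a_{n-1}$ (or $b_{n-1}$) and $\pm b$-type terms. The structure of $\beta_n$ as a near-companion-type matrix for the recurrence $\det B_n = \det B_{n-1} + \det B_{n-3}$ is what forces the solution of $\beta_n^T x \propto \mathbf{1}$ to be governed by exactly the recurrences $a_n,b_n$; this is presumably why those recurrences were introduced with those particular initial conditions.

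**Pinning down the gcd.** To get $\gcd(p_1,\dots,p_n) = g_n = \gcd(b_n,b_{n+1})$, I would argue in two directions. For divisibility $g_n \mid p_i$ for all $i$: show every $p_i$ lies in the ideal generated by $a_m$ and $b_m$ for the relevant indices, and use that $\gcd(a_m,b_m)$ stabilizes — more precisely, use Lemma~\ref{lmm:anbn} ($a_n = b_n + b_{n+1}$) together with the recurrence to show $g_n = \gcd(b_n, b_{n+1}) = \gcd(b_{n-1},b_n) = \cdots$ is eventually constant or at least that consecutive $g$'s are equal, so that all the recurrence values appearing among the $p_i$ share the factor $g_n$. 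For the reverse, it suffices to exhibit two coordinates $p_i, p_j$ whose gcd is exactly $g_n$ — e.g., if two of the $p_i$ come out to $\pm b_n$ and $\pm b_{n+1}$ (or $\pm a_n = \pm(b_n+b_{n+1})$ and $\pm b_n$), then $\gcd$ of those two already equals $g_n$ by Lemma~\ref{lmm:anbn}, and combined with the divisibility direction this forces $\gcd(p_1,\dots,p_n) = g_n$ exactly.

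**Main obstacle.** The routine part is the linear algebra giving $x = p/K(\beta_n)$; the real work — and the step I expect to be hardest — is the explicit evaluation of the cofactor column-sums $p_i$ in terms of $a_k, b_k$ and the verification that their overall gcd is precisely $g_n$ and not a proper multiple or divisor. This requires carefully exploiting the bordered/cyclic structure of $\beta_n$ (row expansion along the special first column and last row, plus the three-term recurrence for minors of $B_m$), and then a clean gcd bookkeeping argument anchored on Lemma~\ref{lmm:anbn}. I would organize it by first proving a formula of the shape $p_i = (-1)^{?}(\text{small combination of }a,b\text{ values})$ as an auxiliary lemma, then doing the gcd computation as a short number-theoretic epilogue.
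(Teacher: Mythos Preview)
Your overall strategy---write $x$ as an integer vector $p$ over $K(\beta_n)$ and then identify $\gcd(p_1,\dots,p_n)$---is exactly the paper's, but you are making the gcd step much harder than it needs to be, and one of your proposed sub-arguments is actually false.

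The paper does not recompute $p$ from cofactor expansions. The explicit form of $x$ was already derived inside the proof of Lemma~\ref{lmm:beta_n nes}: there it is shown that $x_{n-k}=a_kx_{n-1}+b_kx_{n-4}$ for all $k$, together with $x_{n-1}/x_{n-4}=-b_n/a_n$ and $x_n=2x_{n-1}+x_{n-4}$. So one simply \emph{chooses} the integer numerators $p_{n-1}=|b_n|/g_n$ and $p_{n-4}=|a_n|/g_n$; these are coprime by the definition of $g_n$, hence $\gcd(p_1,\dots,p_n)\le\gcd(p_{n-1},p_{n-4})=1$ automatically, with no further gcd bookkeeping needed. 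The remaining numerators are $p_{n-k}=(a_kb_n-b_ka_n)/g_n$, and the sum $q=\sum_i p_i$ is obtained not by summing these directly but by observing that $(p^T\beta_n)_j=2p_{n-1}+p_{n-4}=|\det\beta_n|/g_n$ for every $j$ (using Lemma~\ref{lmm:det beta_n bn identity}, $|\det\beta_n|=2|b_n|+|a_n|$), whence Cramer's rule gives $q=|K(\beta_n)|/g_n$.

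Your proposed divisibility argument---that $g_m=\gcd(b_m,b_{m+1})$ stabilizes so that ``all the recurrence values appearing among the $p_i$ share the factor $g_n$''---is wrong: $g_m$ is not eventually constant (e.g.\ $g_6=\gcd(2,-2)=2$ but $g_8=\gcd(4,-5)=1$), and the entire point of Lemma~\ref{lmm:gcd upper bound} is to bound a genuinely varying $g_n$. The correct reason $g_n\mid p_{n-k}$ is trivial once you have the form $p_{n-k}\propto a_kb_n-b_ka_n$: it follows immediately from $g_n\mid a_n$ and $g_n\mid b_n$, with no relation among the $g_m$ needed. And the reverse inequality is exactly your ``exhibit two coordinates'' idea, realized by $p_{n-1}$ and $p_{n-4}$.
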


The following key lemma establishes an upper bound on $g_n$ that is sufficiently smaller than $|K(\beta_n)|$. This will imply by Lemma \ref{lmm:complexity of equilibrium} that $C(x)=2^{\Omega(n)}$. The proof idea is to lower-bound the number of iterations performed by the Euclidean algorithm on $b_n,b_{n+1}$.

\begin{lemma}\label{lmm:gcd upper bound}
There exists a $0<\gamma<\rho$ such that $g_n\le\gamma^n$ for all sufficiently large $n$, except for a subsequence $m_n=\omega(n)$.
\end{lemma}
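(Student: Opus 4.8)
The plan is to track how fast the Euclidean algorithm applied to the pair $(b_n, b_{n+1})$ terminates, since $g_n = \gcd(b_n, b_{n+1})$. The key structural fact is Lemma \ref{prop:ratio sequences to rho}: for $n \ge 5$ we have $-\lambda_n b_n \le b_{n+1} \le -\mu_n b_n$ with $\lambda_n \uparrow \rho$ and $\mu_n \downarrow \rho$. In particular $|b_{n+1}/b_n| \to |\rho| \approx 1.4656$, so consecutive terms of $b$ have opposite signs (for large $n$, once the $w_1\rho^n$ term dominates, since $\rho < 0$) and grow geometrically at rate $|\rho|$. My plan is to show that running the subtractive/division-based Euclidean algorithm starting from $(|b_n|, |b_{n+1}|)$ takes $\Theta(n)$ division steps, and that each step shrinks the larger entry by a bounded ratio, hence the final gcd satisfies $g_n \le \gamma^n$ for a constant $\gamma$ strictly between $1$ and $\rho$ (absolute value). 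Equivalently: the "continued fraction" expansion of $|b_{n+1}|/|b_n|$ essentially unwinds the recurrence backwards, producing on the order of $n$ partial quotients, each at least $1$, so the gcd is at most $|b_n| \cdot \prod (\text{shrink factors})^{-1}$, which is exponentially smaller than $|b_n| = \Theta(\rho^n)$.

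Concretely I would proceed as follows. First, from the closed form \eqref{eq:bn}, $b_n = \tfrac13 + w_1\rho^n + 2\Re(w_2 z^n)$ with $|z| \approx 0.826 < 1$, so $|b_n| = |w_1||\rho|^n (1 + o(1))$ and $|b_n| = \Theta(\rho^n)$ with the sign alternating for large $n$. Second, I would relate one Euclidean step to the recurrence: writing the recurrence as $b_{n+1} = -\,(\text{something near }\rho)\, b_n + (\text{lower-order combination of earlier terms})$ is not quite division-with-small-remainder, so instead I would argue directly that the map $(b_n,b_{n+1}) \mapsto (b_{n+1}, b_{n+1} \bmod b_n)$ or a constant number of such steps brings us from index $n+1$ down to index roughly $n-1$ or $n-2$ while the magnitude drops by a factor bounded away from $1$ and from $\infty$. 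Iterating $\Theta(n)$ times reaches bounded-size entries, and the invariance of the gcd under these steps gives $g_n = \gcd(b_n,b_{n+1}) = \gcd(\text{bounded-size pair})$, but more importantly the product of the quotients is $\ge$ some $\delta^n$ with $\delta > 1$, forcing $g_n \le |b_n|/\delta^n \le \gamma^n$ with $\gamma = |\rho|/\delta < |\rho|$. Third, I would identify the "bad" subsequence $m_n = \omega(n)$: this is exactly where the lower-order terms $w_1\rho^n + 2\Re(w_2 z^n)$ and the constant $\tfrac13$ conspire so that some $b_{m_n}$ or $b_{m_n+1}$ is anomalously small in magnitude (near a sign change or a near-cancellation), which would break the "each quotient $\ge 1$, magnitude drops geometrically" bookkeeping; excluding such indices is where the claim "$\gamma < \rho$ for all sufficiently large $n$, except for a subsequence $m_n = \omega(n)$" comes from — one shows the set of $n$ where cancellation is too severe is sparse (its $n$-th element grows superlinearly), using equidistribution of the argument of $z^n$ or an explicit gap estimate on how often $|b_n|$ can be much smaller than $\rho^n$.

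The main obstacle I anticipate is making the "Euclidean algorithm unwinds the recurrence" heuristic rigorous. The recurrence $b_n = b_{n-2} - b_{n-3} + b_{n-4}$ expresses $b_n$ in terms of three earlier terms with coefficients $\pm 1$, which is \emph{not} of the form $b_n = q\, b_{n-1} + r$ with $0 \le r < |b_{n-1}|$, so a single Euclidean step does not literally correspond to one backward step of the recurrence. Bridging this gap requires either (a) finding the right invertible change of basis on consecutive-pair vectors $(b_n, b_{n+1})$ that diagonalizes the behavior and shows each Euclidean division removes a bounded-above, bounded-below-by-$1$ quotient, or (b) a more robust argument: since $|b_{n+1}| = |\rho| \cdot |b_n| (1 + o(1)) \approx 1.47 |b_n|$, the first quotient in the Euclidean algorithm on $(|b_{n+1}|, |b_n|)$ is exactly $1$ with remainder $|b_{n+1}| - |b_n| \approx 0.47 |b_n|$, and one then needs to control $\gcd(|b_n|, 0.47|b_n|)$-type successor pairs and show the process continues to eat one index of the recurrence every $O(1)$ steps without stalling — the stalling cases being precisely the excluded subsequence. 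Establishing the uniform geometric decay rate $\delta > 1$ and simultaneously pinning down the exceptional indices is the technical heart of the proof; once that is in place, the bound $g_n \le \gamma^n$ with $0 < \gamma < \rho$ follows by combining with $|\det\beta_n| = \Theta(\rho^n)$ (Lemma \ref{lmm:beta_n det} and the subsequent remark) and Lemmas \ref{lmm:K(beta_n) bounds} and \ref{lmm:complexity of equilibrium} to conclude $C(x) = |K(\beta_n)|/g_n \ge \tfrac{n}{3}|\det\beta_n|/\gamma^n = 2^{\Omega(n)}$.
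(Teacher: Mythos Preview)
Your overall direction --- analyze the Euclidean algorithm on $(b_n,b_{n+1})$ and show it makes $\Theta(n)$ steps with geometric shrinkage --- matches the paper's, but the proposal has a genuine gap at exactly the point you flag as the ``main obstacle,'' and your description of the exceptional subsequence is incorrect.

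First, the exceptional set. You write that the bad $n$'s are those where ``$b_{m_n}$ or $b_{m_n+1}$ is anomalously small in magnitude (near a sign change or a near-cancellation).'' But this never happens: from \eqref{eq:bn} and the bound $|2\Re(w_2z^n)| < \tfrac{2}{3}|z|^n$ with $|z|<1$, the term $w_1\rho^n$ dominates for all large $n$, so $|b_n| = \Theta(|\rho|^n)$ uniformly, with no exceptional indices. The exceptional subsequence in the lemma does not come from small $|b_n|$; it arises for a different reason, explained below.

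Second, the mechanism. Your plan (b) --- track the actual integer Euclidean steps and hope they ``eat one index of the recurrence every $O(1)$ steps'' --- is precisely the part you cannot make rigorous, and the paper does not attempt it either. Instead, the paper runs the Euclidean algorithm \emph{symbolically on the irrational number $\rho$}: starting from the pair $(1,\rho)$ (thought of as $(\rho^n,\rho^{n+1})$ divided by $\rho^n$), each division step produces a new pair of degree-$1$ polynomials $p_k(\rho),q_k(\rho)\in\Z[\rho]$. Because $\rho\notin\Q$, the algorithm never terminates, and a simple halving argument shows the smaller of the two values, call it $M_k(\rho)$, satisfies $0<M_k(\rho)<2^{-(k-1)/2}$ for all $k$. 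Now evaluate the same polynomials at the rational $\rho_n = |b_{n+1}|/|b_n|\to\rho$: the first $K$ steps of the \emph{integer} Euclidean algorithm on $(|b_n|,|b_{n+1}|)$ agree with these symbolic steps as long as $M_k(\rho_n)>0$ for all $k\le K$. Since each $M_k$ is linear with a unique rational root, and the $\rho_n$ are pairwise distinct (Lemma~\ref{lmm:anbn linear independence}), at most one $n$ can satisfy $M_k(\rho_n)=0$ for each fixed $k$. A counting/injectivity argument then shows that, outside a subsequence $m_n=\omega(n)$, one can run $\Omega(n)$ symbolic steps before any $M_k(\rho_n)$ vanishes, giving $g_n \le M_{\Omega(n)}(\rho_n)\cdot|b_n| \le 2^{-\Omega(n)}\cdot|b_n|$ and hence $g_n\le\gamma^n$ with $\gamma<|\rho|$.

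So the missing idea is: do not try to match Euclidean steps to the four-term recurrence on integers; instead, run the algorithm on $\rho$ itself, use irrationality to get infinite geometric decay of the linear forms $M_k$, and then transfer to $\rho_n$ by continuity, with the exceptional set being exactly those $n$ for which $\rho_n$ hits a rational root of some early $M_k$.
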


We can now finish the proof of Lemma \ref{lmm:beta_n complexity}.
\begin{proof}[Proof of Lemma \ref{lmm:beta_n complexity}]
By Lemmas \ref{lmm:K(beta_n) bounds}, \ref{lmm:complexity of equilibrium} and \ref{lmm:gcd upper bound}, for all sufficiently large $n$, except for the subsequence $m_n$ from Lemma \ref{lmm:gcd upper bound},
\begin{align*}
    |K(\beta_n)|\ge C(x)=\frac{|K(\beta_n)|}{\gcd(b_n,b_{n+1})}\ge\frac{\frac{n}{3}(\sqrt2+0.02)^n}{(\sqrt{2}+0.01)^n}
\end{align*}
So $C(x)=2^{\Theta(n)}$ there. For $y$, observe that $y=U_n$, so the lemma follows.
\qed\end{proof}

\begin{corollary}[Restatement of Theorem \ref{thm:2.4}, Item \ref{item:thm 2.4.1}]\label{thm:final}
There is a function $g(n)=\Theta(n)$ such that for every $n$ there are $A,B\in\Z_2^{n\times n}$ such that $C_1(A,B)=2^{g(n)}$ and $C_2(A,B)=n$.
\end{corollary}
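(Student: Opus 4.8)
The plan is to bootstrap Corollary \ref{thm:final} from Lemma \ref{lmm:beta_n complexity} in exactly the same way that Corollary \ref{cor:sqrt n lower bound} was obtained from Theorem \ref{thm:2^sqrt n lower bound}, namely by padding with dummy strategies (Lemma \ref{lmm:game inflation}) and then disposing of the ``bad'' subsequence $m_n=\omega(n)$ of sizes for which Lemma \ref{lmm:beta_n complexity} does not give the bound. Concretely, for a target dimension $n$ I would first pick the largest $m\le n$ with $m\ge 8$, $m\notin\{3,6,7\}$, and $m$ not in the exceptional subsequence; since that subsequence is $\omega(n)$ (sub-linear density, and in fact growing superlinearly means it misses all sufficiently large indices in any bounded window... more carefully, between consecutive exceptional indices the gap tends to infinity, so for $n$ large there is always a good $m$ with $n-m=O(1)$ relative to $n$, hence $m=\Theta(n)$). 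For that good $m$, Lemma \ref{lmm:beta_n nes} gives that $(I_m,\beta_m)$ has a unique, fully-mixed NE $(x,y)$, and Lemma \ref{lmm:beta_n complexity} gives $C(x)=2^{\Theta(m)}=2^{\Theta(n)}$ and $C(y)=m=\Theta(n)$.

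Next I would inflate the $m\times m$ game to an $n\times n$ game by adding $n-m$ dummy strategies one at a time via Lemma \ref{lmm:game inflation}. To invoke that lemma I must check its hypotheses: $A=I_m$ has no zero column, and $\beta_m$ has no zero row (every row of $\beta_m$ has a $1$ — the rows coming from $B_{m-1}$ contain diagonal/off-diagonal ones, and the last row has its single $1$ in the first column), so the lemma applies and can be iterated, each step preserving the set of NEs up to projecting off the dummy coordinates. Since the dummy coordinates carry zero probability in every NE, the NE of the padded game is $(x',y')$ where $x',y'$ extend $x,y$ by zeros, and therefore $C(x')=C(x)=2^{\Theta(m)}$ and $C(y')=C(y)=m$. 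Setting $g(n)=\log_2 C(x')$ we get $g(n)=\Theta(m)=\Theta(n)$, and $C_1(A,B)=2^{g(n)}$, $C_2(A,B)=m$. If the statement demands literally $C_2=n$ rather than $C_2=\Theta(n)$, I would instead (or in addition) note that padding with a dummy strategy that must be played with uniform-type weight, or simply replacing $y$ by $U_n$ in a final inflation step as in the imitation-game reductions, adjusts the second player's complexity to exactly $n$; but as stated the corollary only needs $C_2(A,B)=n$, which matches the $m$ value after relabeling the dimension, so the cleanest route is to phrase the result for the good values $m$ and extend by dummies purely to make the matrices square of the requested size $n$ while keeping $C_2$ equal to the number of genuinely-used pure strategies.

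The main obstacle is the bookkeeping around the exceptional subsequence: I must argue that ``for every $n$'' (as the corollary claims) there is a nearby good $m=\Theta(n)$, which requires knowing that the exceptional subsequence $m_n$ from Lemma \ref{lmm:gcd upper bound} is not only of density zero but has unbounded gaps relative to its own magnitude — i.e. that $m_{k+1}/m_k\to\infty$ or at least $m_{k+1}-m_k=\omega(1)$ in a way compatible with $m=\Theta(n)$. Since that subsequence arises from places where the Euclidean algorithm on $(b_n,b_{n+1})$ terminates atypically fast, controlling it is really the content of Lemma \ref{lmm:gcd upper bound} and I would simply cite it; the one thing to be careful about is small $n$ (say $n<8$ or $n\in\{3,6,7\}$), which I would handle by an explicit ad hoc construction or by absorbing them into the ``for every $n$'' statement through the $\Theta(\cdot)$ slack, exactly as the $\sqrt n$ corollary silently does. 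Everything else — invertibility of $\beta_m$ for large $m$ (from $\det\beta_m=\Theta(\rho^m)\ne 0$), the zero-row/zero-column checks for Lemma \ref{lmm:game inflation}, and the fact that dummy strategies carry zero NE weight — is routine.

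\begin{proof}[Proof of Corollary \ref{thm:final}]
Let $m_n=\omega(n)$ be the exceptional subsequence from Lemma \ref{lmm:gcd upper bound}, and call an integer $m$ \emph{good} if $m\ge 8$, $m\notin\{3,6,7\}$, and $m$ is not a term of $m_n$. Since the exceptional terms become sparser than any linear function, for all sufficiently large $n$ there is a good $m$ with $m=\Theta(n)$ and $m\le n$; for the finitely many remaining small $n$ the claimed bound is vacuous up to the constants hidden in $\Theta$. Fix such a good $m$. By Lemma \ref{lmm:beta_n det}, $\det\beta_m=(-1)^{m+1}\det B_m=\Theta(\rho^m)\ne 0$, so $\beta_m$ is invertible. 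By Lemma \ref{lmm:beta_n nes}, $(I_m,\beta_m)$ has a unique NE $(x,y)$ and it is fully mixed, and by Lemma \ref{lmm:beta_n complexity}, $C(x)=2^{\Theta(m)}$ and $C(y)=m$.

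Now $I_m$ has no zero column and, by inspection of the construction, $\beta_m$ has no zero row; hence Lemma \ref{lmm:game inflation} applies, and iterating it $n-m$ times produces $A,B\in\Z_2^{n\times n}$ whose set of NEs coincides with that of $(I_m,\beta_m)$ after projecting onto the first $m$ coordinates. The dummy coordinates carry probability $0$ in every NE, so the unique NE of $(A,B)$ is $(x',y')$ with $x',y'$ the zero-extensions of $x,y$, giving $C_1(A,B)=C(x')=C(x)=2^{\Theta(m)}$ and $C_2(A,B)=C(y')=C(y)=m$. Setting $g(n)=\log_2 C_1(A,B)=\Theta(m)=\Theta(n)$ completes the proof.
\qed\end{proof}
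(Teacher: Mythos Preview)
Your approach is correct and essentially identical to the paper's: the paper's proof is a single sentence, ``Follows by taking the games above and padding them with dummy strategies as in Lemma \ref{lmm:game inflation},'' and you carry out exactly this padding argument, filling in the bookkeeping (iterability of Lemma \ref{lmm:game inflation}, handling of the exceptional subsequence, small $n$) that the paper leaves implicit.

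One remark: you correctly flag that after padding, player 2's NE strategy is $U_m$ extended by zeros, so $C_2(A,B)=m$ rather than literally $n$. The paper's one-line proof glosses over this too; in both the $\sqrt n$ and the linear corollary the intended reading is $C_2(A,B)=\Theta(n)$ (or one simply takes $m=n$ for the good $n$ and absorbs the finitely many small/bad cases into the $\Theta$), so your treatment is at least as careful as the original.
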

\begin{proof}
Follows by taking the games above and padding them with dummy strategies as in Lemma \ref{lmm:game inflation}.
\qed\end{proof}

The final corollary of this subsection states that the equilibrium distributions in Lemma \ref{lmm:beta_n nes} require the maximal amount of space (asymptotically) to store them.

\begin{corollary}\label{cor:x2 is balanced}
For every $n\ge8$, the NE distribution $x$ of player 1 satisfies $\max_{i,j}\frac{x_i}{x_j}=O(1)$. Therefore representing the numerators of $x$ would require $\Theta(n^2)$ bits.
\end{corollary}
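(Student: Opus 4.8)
The plan is to show that in the unique fully-mixed NE $(x,y)$ of $(I_n,\beta_n)$, all coordinates of $x$ are within a constant multiplicative factor of each other, and then invoke Theorem \ref{thm:space reqs}. Since $x$ is a distribution with $C(x)=q$ for $q=|K(\beta_n)|/g_n = 2^{\Theta(n)}$, writing $x=(p_1/q,\dots,p_n/q)$ with $\gcd(p_1,\dots,p_n)=1$ and $\sum_i p_i = q$, the claim $\max_{i,j} x_i/x_j = O(1)$ is equivalent to $\max_{i,j} p_i/p_j = O(1)$, which forces each $p_i = \Theta(q/n)$, hence each numerator needs $\log(q/n) = \Theta(n) - \Theta(\log n) = \Theta(n)$ bits, and there are $n$ of them, giving $\Theta(n^2)$ bits total. (The upper bound $O(n\log q)=O(n^2)$ is immediate from Theorem \ref{thm:space reqs}; the content is the matching lower bound, which needs the balancedness.)

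First I would obtain an explicit description of $x$. Since player 1's payoff matrix is $I_n$, the best-response/indifference conditions for player 2 say that $\beta_n^T x$ is constant on $\supp y = [n]$ (the NE is fully mixed), i.e. $\beta_n^T x = c\,{\bf 1}$ for some scalar $c>0$, so $x$ is proportional to $(\beta_n^T)^{-1}{\bf 1} = (\beta_n^{-1})^T {\bf 1}$; equivalently, up to normalization and sign, $x_i$ is proportional to the $i$th coordinate of $\beta_n^{-1}{\bf1}$, which by Cramer's rule is $\det(\beta_n \stackrel{i}{\gets}{\bf1})/\det\beta_n$ — exactly the kind of cofactor sum appearing in the definition of $K$. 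The key point is that each such coordinate is, up to sign, a sum of minors of $\beta_n$, and by Lemma \ref{lmm:beta_n det} and the recurrence analysis those minors are controlled by the same characteristic root $-\rho$ (equivalently $|\det\beta_m| = \Theta(\rho^m) = \Theta((\sqrt2)^m)$ up to the bounded oscillation from the complex roots $z,\overline z$ with $|z|<1$). Concretely, the $i$th coordinate should evaluate to something like $\det B_{i-1}\cdot\det B_{n-i}$ (a product of the determinants of the two diagonal blocks that $\beta_n$ splits into), possibly plus lower-order correction terms from the off-diagonal $1$s.

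The main obstacle — and the crux of the argument — is showing the ratio of any two such coordinates is bounded. Writing the $i$th numerator as $\approx \det B_{i-1}\det B_{n-i} = \Theta((\sqrt2)^{i-1})\cdot\Theta((\sqrt2)^{n-i}) = \Theta((\sqrt2)^{n-1})$, the exponential factors cancel, leaving only the $\Theta(\cdot)$ constants, which come from the bounded-oscillation contribution of the complex roots (recall $\lambda_n,\mu_n \to \rho$ in Lemma \ref{prop:ratio sequences to rho}, so consecutive ratios stay in a bounded range around $-\rho$). Hence $\max_{i,j} x_i/x_j = O(1)$ and the corollary follows. The delicate steps are (a) pinning down the exact closed form of the coordinates of $\beta_n^{-1}{\bf1}$ via a block/cofactor expansion, and (b) verifying that the correction terms (from the superdiagonal and sub-subdiagonal $1$s in $\beta_n$) are of the same exponential order $(\sqrt2)^{n}$ and do not introduce cancellation that would make some coordinate asymptotically smaller than the rest — this is where the linear-independence facts about $(a_n,b_n)$ (Lemma \ref{lmm:anbn linear independence}) and the two-sided bounds of Lemma \ref{prop:ratio sequences to rho} do the work, exactly as they did for Lemma \ref{lmm:beta_n complexity}.
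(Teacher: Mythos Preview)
Your high-level plan—show $\max_{i,j} x_i/x_j = O(1)$ and then invoke Theorem~\ref{thm:space reqs}—is correct, and the minor slip (from $\beta_n^T x = c\,{\bf 1}$ one gets $x_i \propto \det(\beta_n^T \stackrel{i}{\gets}{\bf1})$, not $\det(\beta_n \stackrel{i}{\gets}{\bf1})$) is easily fixed. The real gap is in how you propose to evaluate these coordinates. Your conjectured block form $x_i \approx \det B_{i-1}\cdot\det B_{n-i}$ is not what the coordinates are: for $n=8$ the actual numerators are $(6,2,3,1,4,5,4,9)$ (Example~\ref{example:1}), whereas your product gives, e.g., $\det B_3\cdot\det B_4=2\cdot3=6$ at position $i=4$ versus the true value $1$; so the ``corrections'' are not lower order at all. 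The paper does not go through a cofactor/block expansion; it reuses the description of $x$ already obtained inside the proof of Lemma~\ref{lmm:beta_n nes}: setting $p_{n-1}=|b_n|$ and $p_{n-4}=|a_n|$, one has $p_{n-k}=a_k|b_n|+b_k|a_n|$, which by Lemma~\ref{lmm:anbn} equals (up to sign) $b_n b_{k+1}-b_{n+1} b_k$.

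The cancellation you flag in (b) is exactly the crux, and it is \emph{not} dispatched by Lemmas~\ref{prop:ratio sequences to rho} or~\ref{lmm:anbn linear independence}. In $b_n b_{k+1}-b_{n+1} b_k$ the dominant terms $w_1^2\rho^{n+k+1}$ cancel exactly; what survives are cross-terms between $\rho^n$ and the constant $\tfrac13$ together with the oscillating part $t_m=2\Re(w_2 z^m)$ from Equation~\eqref{eq:bn}. The paper substitutes Equation~\eqref{eq:bn} directly and obtains
\[
\frac{p_{n-k}}{b_n}=\tfrac{1}{3}(\rho-1)(\rho^{k-n}-1)+t_{k+1}-\rho t_k+o_n(1),
\]
which is uniformly $\Theta(1)$ in $k$ (since $|\rho^{k-n}|\le 1$ and $|t_m|\le\tfrac{2}{3}|z|^m$); comparing each $p_{n-k}$ to $p_1$ then gives $\max_{i,j}x_i/x_j=O(1)$. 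So the argument hinges on the explicit closed form~\eqref{eq:bn}, not on a block-determinant heuristic. (Also, $|\rho|\approx 1.466$, not $\sqrt2$.)
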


\subsection{Constant-Sum Games}\label{subsec:constant}
Here we prove Theorem \ref{thm:2.3}, Item \ref{item:thm 2.3.2} and Theorem \ref{thm:2.4}, Item \ref{item:thm 2.4.2}. The full proofs can be found in Appendix \ref{appendix:constant}. The lower bounds on the required capability of player 1 remain the same as in the respective Item 1s, although in these constant-sum games, these large minimal required capabilities are now imposed on both players.

We introduce a notion of symmetry of matrices that will help us establish the main results of this subsection. Intuitively, the matrices considered are symmetric with respect to a specific pair of row and column permutations.

\begin{definition}
For $\pi,\tau\in\sym_n$, call a matrix $B$ $(\pi,\tau)$-symmetric if for all $i\in[n]$, $(B^T)_i=\pi(B_{\tau(i)})$.
\end{definition}

The next key lemma shows that if an imitation game satisfies some symmetry conditions, then there is an analogous zero-sum game, where the imitated player's payoff matrix remains unchanged, while the two players' equilibrium strategies become identical, up to permutations.

\begin{lemma}\label{lmm:2p0s unique fully mixed ne lemma}
Let $B\in\Z_2^{n\times n}$ such that 
\begin{enumerate}
    \item it is $(\pi,\tau)$-symmetric,
    \item the game $(I_n,B)$ has a unique NE $(\Tilde{x},U_n)$ which is fully mixed,
    \item $\det B\ne0$ and $K(B)\ne\det B$.
\end{enumerate}
Then $({\bf1}_{n\times n}-B,B)$ has a unique NE which is $(\Tilde{x},\pi^{-1}(\Tilde{x}))$.
\end{lemma}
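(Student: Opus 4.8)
\textbf{Proof proposal for Lemma \ref{lmm:2p0s unique fully mixed ne lemma}.}

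The plan is to transport the unique NE of the imitation game $(I_n, B)$ to the constant-sum game $({\bf1}_{n\times n}-B, B)$ using the $(\pi,\tau)$-symmetry of $B$, and then argue uniqueness via the minimax structure of zero-sum (hence constant-sum) games. First I would record what the NE condition for $({\bf1}_{n\times n}-B, B)$ means: since the game is constant-sum (row sums and column sums of ${\bf1}_{n\times n}-B$ plus $B$ are the constant $1$), a profile $(x,y)$ is a NE iff $x$ is a maxmin (security) strategy for player 1 against payoff matrix ${\bf1}_{n\times n}-B$ and $y$ is a maxmin strategy for player 2 (equivalently a minmax strategy against ${\bf1}_{n\times n}-B$). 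Crucially, since $x^T({\bf1}_{n\times n}-B)y = x^T{\bf1}_{n\times n}y - x^TBy = 1 - x^TBy$, player 1 maximizing $({\bf1}_{n\times n}-B)$-payoff is the same as minimizing $x^TBy$, and player 2 correspondingly maximizes $x^TBy$; so the NEs of $({\bf1}_{n\times n}-B,B)$ are exactly the minimax profiles of the zero-sum game with matrix $B$ where player 1 is the minimizer and player 2 is the maximizer.

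Second, I would exploit the symmetry. The condition $(B^T)_i = \pi(B_{\tau(i)})$ says, in matrix terms, that $B^T = P_\pi B P_\tau$ for suitable permutation matrices (being careful with the exact convention: $(B^T)_{i,j} = B_{\tau(i), \pi(j)}$, i.e. $P_\tau^{-1} B^T P_\pi^{-1}$-type identity). This means the minmax problem for the row player against $B$ is isomorphic, via relabeling rows by $\tau$ and columns by $\pi$, to the minmax problem for the column player against $B$. From Lemma \ref{lmm:I subseteq J} (or by direct inspection) we know the second player in $(I_n,B)$ plays $U_n$; so $U_n$ is the relevant maxmin strategy of the $B$-column-player in the $(I_n,B)$ game. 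I would next establish that in the zero-sum game on $B$, the set of optimal strategies for the maximizing player and the set for the minimizing player are related: optimal maximizer strategies are the best responses that the column player in $(I_n,B)$ would use, and by the symmetry these are the $\pi$-images (or $\pi^{-1}$-images, matching the statement) of the optimal minimizer strategies. Then I would invoke hypothesis 2: the NE of $(I_n,B)$ being unique and fully mixed, together with hypotheses 1 and 3, pins down the unique optimal strategy of one of the players in the $B$-zero-sum game to be $\Tilde x$ and, by symmetry, the other to be $\pi^{-1}(\Tilde x)$. Feeding this back through the first step gives that $(\Tilde x, \pi^{-1}(\Tilde x))$ is the unique NE of $({\bf1}_{n\times n}-B, B)$.

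The main obstacle is the uniqueness claim, not the exhibition of the NE. In zero-sum games the value is unique but optimal strategies need not be; one must rule out a continuum of optimal strategies. This is where hypothesis 3, $\det B \ne 0$ and $K(B) \ne \det B$, does the real work: it should force the relevant optimal-strategy polytope to be a single point. I would argue this by relating optimal strategies to fully-mixed equilibria: if the NE of $(I_n,B)$ is fully mixed, the corresponding best-response/support conditions become the linear system $By = c\,{\bf1}$ type equations whose solution, normalized, is $\beta^{-1}{\bf1}$ up to scaling — and the condition $K(B)\ne\det B$ ($K(B) = \sum_{i}\det(B\stackrel{i}{\gets}{\bf1})$, and $\det B$ appears when $y = U_n$) precisely guarantees that this forced solution is not the uniform distribution, which is what breaks any residual symmetry/degeneracy and yields a unique optimal strategy on each side. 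I would lean on the machinery already set up for Theorem \ref{thm:K upper bound} and Lemma \ref{lmm:beta_n nes} to make this rigorous, checking that the support of the transported equilibrium is still full (so the same linear-algebraic characterization applies) and that no optimal strategy with smaller support can exist — the latter again following from the uniqueness in hypothesis 2 pushed through the symmetry isomorphism.
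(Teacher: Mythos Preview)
Your high-level plan (use the constant-sum/rectangle structure, transport $\Tilde x$ via the $(\pi,\tau)$-symmetry, then appeal to linear algebra for uniqueness) is the right shape and matches the paper's approach in outline. But the mechanics you describe for the uniqueness step are off in two concrete ways.

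First, you misread the role of $K(B)\ne\det B$. It is not there to ``guarantee that the forced solution is not the uniform distribution'' or to ``break residual symmetry.'' The paper uses it, via the matrix determinant lemma, to show
\[
\det({\bf1}_{n\times n}-B)=(-1)^n\Big(1-\tfrac{K(B)}{\det B}\Big)\det B\ne 0,
\]
i.e.\ that player~1's payoff matrix ${\bf1}_{n\times n}-B$ is invertible. That is all it does. Together with $\det B\ne 0$, this gives invertibility on \emph{both} sides, which is what pins down both equilibrium strategies uniquely.

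Second, your uniqueness argument (``no optimal strategy with smaller support can exist, following from the uniqueness in hypothesis~2 pushed through the symmetry isomorphism'') is more complicated than necessary and, as stated, incomplete: the symmetry identifies player~1's problem in the $B$-game with player~2's, but it does not by itself rule out non-fully-mixed optimal strategies in the constant-sum game. The paper's clean argument is: once you have exhibited the fully mixed NE $(\Tilde x,\pi^{-1}(\Tilde x))$, the rectangle property (Lemma~\ref{lmm:2p0s ne set is a rect}) says that for any NE $(x,y)$, the pair $(x,\pi^{-1}(\Tilde x))$ is also a NE. Since $\pi^{-1}(\Tilde x)$ is fully mixed, the best-response condition forces $B^Tx=v'{\bf1}_n$ on the \emph{full} index set, and invertibility of $B^T$ then gives $x=\Tilde x$. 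The same logic with $(\Tilde x,y)$ and the invertibility of ${\bf1}_{n\times n}-B$ gives $y=\pi^{-1}(\Tilde x)$. No appeal to the uniqueness of the imitation-game NE is needed for this step; hypothesis~2 is used only to produce $\Tilde x$ with $B^T\Tilde x=v{\bf1}_n$ and to know it is fully mixed.

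For the existence half, your symmetry idea is right but you should make it precise: the paper uses the identity $B\,\pi^{-1}(\Tilde x)=\tau^{-1}(B^T\Tilde x)$ (Lemma~\ref{lmm:pi tau matrix property}) to compute $({\bf1}_{n\times n}-B)\pi^{-1}(\Tilde x)=(1-v){\bf1}_n$ directly, together with the observation $v<1$ (since $B$ has no all-ones column, as $\det B\ne 0$). This is quicker than routing through maxmin/minmax reformulations.
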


We note that in any imitation game $(I_n,B)$, every NE $(x,y)$ has the property that $y=U_{\supp y}$. With this key lemma we can establish Item 2 of Theorems \ref{thm:2.3} and \ref{thm:2.4}. We begin with the former.

\begin{theorem}[Restatement of Theorem \ref{thm:2.3}, Item \ref{item:thm 2.3.2}]\label{thm:2.3.2 formal}
There is a function $f(n)=\Tilde\Theta(\sqrt n)$ such that for every $n$ there are $A,B\in\Z_2^{n\times n}$ such that $A+B={\bf1}_{n\times n}$ and $C_1(A,B)=C_2(A,B)=2^{f(n)}$.
\end{theorem}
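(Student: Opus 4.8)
The plan is to build the constant-sum games by applying Lemma~\ref{lmm:2p0s unique fully mixed ne lemma} to the block construction from Theorem~\ref{thm:2^sqrt n lower bound}, and then pad with dummy strategies via Lemma~\ref{lmm:game inflation} to obtain the result for every $n$. So the work splits into two parts: first verify the three hypotheses of Lemma~\ref{lmm:2p0s unique fully mixed ne lemma} for the matrix $B$ from Theorem~\ref{thm:2^sqrt n lower bound} (the block-diagonal-with-shift matrix assembled from the $B^{p_i}$), and second, check that passing to the constant-sum game $({\bf1}_{N\times N}-B,B)$ and then inflating does not destroy the complexity bounds.

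For the first part, hypotheses (2) and (3) are essentially in hand: Theorem~\ref{thm:2^sqrt n lower bound} already gives that $(I_N,B)$ has a unique fully-mixed NE $(\tilde x,U_N)$ with $C(\tilde x)=2^{\Theta(n\log n)}$, and since $\tilde x$ is fully mixed with $C(\tilde x)>1$ we get $K(B)\neq\det B$ (because $C_2$-type arguments/Lemma~\ref{lmm:complexity of equilibrium}-style identities tie $C(\tilde x)$ to $|K(B)|/(\text{gcd})$, so $K(B)=\det B$ would force $\tilde x$ to be nearly trivial); likewise $\det B\neq0$ follows since $\tilde x = B^{-1}{\bf 1}/({\bf 1}^T B^{-1}{\bf1})$ is well-defined — I would extract the precise statement from the proof of Theorem~\ref{thm:2^sqrt n lower bound} in the appendix. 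The real content is hypothesis (1): I must exhibit permutations $\pi,\tau\in\sym_N$ making $B$ $(\pi,\tau)$-symmetric, i.e.\ $(B^T)_i=\pi(B_{\tau(i)})$ for all $i$. Each cyclic block $B^{p}$ (where $B^p_{ij}=1$ iff $i\neq j+1\bmod p$) is circulant, hence already enjoys a symmetry of this type using the reversal permutation on $[p]$; I would assemble $\pi,\tau$ blockwise from these per-block symmetries, taking care of the diagonal shift that makes the global game asymmetric — this is exactly the place where I expect to do real bookkeeping, since the shift is what breaks the naive symmetry and I need a $\pi,\tau$ that accommodates it while still being a genuine permutation pair.

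For the second part, Lemma~\ref{lmm:2p0s unique fully mixed ne lemma} tells me that $({\bf1}_{N\times N}-B,B)$ has unique NE $(\tilde x,\pi^{-1}(\tilde x))$. Since $\pi^{-1}$ is a permutation, $C(\pi^{-1}(\tilde x))=C(\tilde x)=2^{\Theta(n\log n)}$, so with $N=\Theta(n^2\log n)$ this is $2^{\tilde\Theta(\sqrt N)}$, giving $C_1=C_2=2^{f(N)}$ with $f=\tilde\Theta(\sqrt{\cdot})$ for these special $N$. Finally I invoke Lemma~\ref{lmm:game inflation}: its last sentence explicitly guarantees that if the game is constant-sum with $A+B={\bf1}_{n\times n}$ it stays constant-sum after adding dummy strategies, and the NE is unchanged up to projection, so $C_1,C_2$ are preserved; padding up from the nearest admissible $N$ to an arbitrary $m$ interpolates $f$ to a $\tilde\Theta(\sqrt m)$ function on all $m$, exactly as in the proof of Corollary~\ref{cor:sqrt n lower bound}. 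One subtlety to state carefully: Lemma~\ref{lmm:game inflation} requires $A$ to have no zero column and $B$ no zero row — here $A={\bf1}-B$ plainly has no zero column, and $B$ (built from the $B^p$'s) has no zero row, so the hypothesis holds.

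\textbf{Main obstacle.} The crux is verifying the $(\pi,\tau)$-symmetry of the shifted block matrix $B$ of Theorem~\ref{thm:2^sqrt n lower bound}: I need to reconcile the per-block reversal symmetry of the circulant blocks $B^{p_i}$ with the global diagonal shift that was deliberately introduced to kill all non-fully-mixed equilibria, and produce a single coherent pair of permutations $\pi,\tau\in\sym_N$. Everything else — the complexity values, invertibility, $K(B)\neq\det B$, and the dummy-padding — follows from results already established in the excerpt.
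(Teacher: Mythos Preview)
Your plan is the paper's plan: verify the hypotheses of Lemma~\ref{lmm:2p0s unique fully mixed ne lemma} for the matrix $B$ of Theorem~\ref{thm:2^sqrt n lower bound}, read off $C_1=C_2=C(\tilde x)=2^{\tilde\Theta(\sqrt N)}$, then pad via Lemma~\ref{lmm:game inflation}. Two remarks on the execution.

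For hypothesis~(3), your route through ``Lemma~\ref{lmm:complexity of equilibrium}-style identities'' is both indirect and points at the wrong construction (that lemma is about $\beta_n$). The paper's argument is a one-liner via Lemma~\ref{lmm:K det inequality strengthened}: every row of $B$ has at most $p_n$ ones, so $|K(B)|\ge \frac{N}{p_n}|\det B|=\tilde\Theta(\sqrt N)\,|\det B|>|\det B|$, whence $K(B)\ne\det B$; and $\det B=(-1)^{N-1}\prod_k p_k\ne0$ is computed explicitly in the proof of Theorem~\ref{thm:2^sqrt n lower bound}.

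For hypothesis~(1), your instinct is sound. The paper dispatches it by asserting that $B$ is $(\pi,\pi)$-symmetric for the single $N$-cycle $\pi=(N~N{-}1~\cdots~1)$, but your plan to assemble the symmetry from per-block involutions composed with the global column shift is actually what is needed here: each block satisfies $(B^{p})_{\alpha,\beta}=(B^{p})_{-\beta,-\alpha}$ (indices mod $p{+}1$), and combining the resulting block-wise map $\sigma$ with the shift gives a $(\pi,\tau)$-symmetry with $\pi(a)=\sigma(a)+1$ and $\tau(i)=\sigma(i-1)$ (so $\tau=\pi^{-1}$). Lemma~\ref{lmm:2p0s unique fully mixed ne lemma} already allows $\pi\ne\tau$, and its conclusion only uses $\pi$ (the NE is $(\tilde x,\pi^{-1}(\tilde x))$), so the rest of your argument goes through unchanged.
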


\begin{theorem}[Restatement of Theorem \ref{thm:2.4}, Item \ref{item:thm 2.4.2}]
There is a function $g(n)=\Theta(n)$ such that for every $n$ there are $A,B\in\Z_2^{n\times n}$ such that $A+B={\bf1}_{n\times n}$ and $C_1(A,B)=C_2(A,B)=2^{g(n)}$.
\end{theorem}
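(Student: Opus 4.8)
The plan is to mirror the proof of Theorem \ref{thm:2.3.2 formal} but starting from the $2^{\Omega(n)}$ imitation-game construction $(I_n,\beta_n)$ instead of the prime-block construction. The key tool is Lemma \ref{lmm:2p0s unique fully mixed ne lemma}: if $\beta_n$ is $(\pi,\tau)$-symmetric for some $\pi,\tau\in\sym_n$, has a unique fully-mixed NE $(\tilde x,U_n)$, and satisfies $\det\beta_n\ne0$ and $K(\beta_n)\ne\det\beta_n$, then $({\bf1}_{n\times n}-\beta_n,\beta_n)$ has a unique NE $(\tilde x,\pi^{-1}(\tilde x))$. So it suffices to verify these hypotheses for the $n$ we care about and then read off the complexities.

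First I would assemble the three easy hypotheses from results already in the excerpt. Lemma \ref{lmm:beta_n nes} gives, for every $n\notin\{3,6,7\}$, that $(I_n,\beta_n)$ has a unique NE and it is fully mixed; combined with the standard fact (noted just before Theorem \ref{thm:2.3.2 formal}, and also Lemma \ref{lmm:I subseteq J}) that in an imitation game the second player plays uniform on her support, this unique NE has the form $(\tilde x,U_n)$ with $\tilde x=x$ the distribution analyzed in Lemma \ref{lmm:beta_n complexity}. Lemma \ref{lmm:beta_n det} gives $\det\beta_n=(-1)^{n+1}\det B_n$ with $\det B_n$ the strictly positive linear recurrence $\det B_n=\det B_{n-1}+\det B_{n-3}$, hence $\det\beta_n\ne0$ for all $n$. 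For $K(\beta_n)\ne\det\beta_n$: Lemma \ref{lmm:K(beta_n) bounds} gives $|K(\beta_n)|\ge\frac n3|\det\beta_n|$, which already forces $K(\beta_n)\ne\det\beta_n$ once $n\ge 8$ (and $n\ge 4$ suffices for $|K|/|\det|>1$, with a sign/parity check to rule out $K(\beta_n)=\det\beta_n$ exactly — this is immediate from the strict inequality $n/3>1$).

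The substantive step — and the one I expect to be the main obstacle — is exhibiting the permutation pair $(\pi,\tau)$ witnessing the $(\pi,\tau)$-symmetry of $\beta_n$. From the displayed $\beta_{11}$ in \eqref{eq:beta10} one sees the structure: $\beta_n$ is the $n\times n$ matrix built from $B_{n-1}$ (which has $1$s on the main diagonal, the superdiagonal, and the second subdiagonal) by prepending a zero column and appending the row $e_1^T$. The natural candidate is the order-reversing (anti-transpose) permutation $\pi=\tau:\,i\mapsto n+1-i$, possibly composed with a cyclic shift to account for the prepended-column/appended-row wraparound; one checks that reversing both the row order and column order turns the $(1,1,2)$-diagonal pattern of $B_{n-1}$ into its transpose pattern, and the same reversal swaps the role of the appended row $e_1^T$ (bottom-left corner) with the prepended zero column (which becomes a zero row), so $\beta_n^T$ equals $\beta_n$ with rows permuted by $\tau$ and then columns permuted by $\pi$ — i.e. $(\beta_n^T)_i=\pi(\beta_{n,\tau(i)})$. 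Verifying this amounts to a direct index chase on the definition of $\beta_n$: for each $i,j$ confirm $(\beta_n)_{j,i}=(\beta_n)_{\tau(i),\pi^{-1}(j)}$ with the chosen $\pi,\tau$. Once the exact $(\pi,\tau)$ is pinned down (including any cyclic correction term), the rest is bookkeeping.

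Finally I would conclude: apply Lemma \ref{lmm:2p0s unique fully mixed ne lemma} to get that $(A,B):=({\bf1}_{n\times n}-\beta_n,\beta_n)$ is constant-sum ($A+B={\bf1}_{n\times n}$) with unique NE $(x,\pi^{-1}(x))$, hence $C_1(A,B)=C(x)$ and $C_2(A,B)=C(\pi^{-1}(x))=C(x)$ since applying a permutation to the coefficients does not change the least common multiple of their denominators. By Lemma \ref{lmm:beta_n complexity}, $C(x)=2^{\Theta(n)}$ for all $n\ge 8$ outside a sparse subsequence $m_n=\omega(n)$, so there is $g(n)=\Theta(n)$ with $C_1(A,B)=C_2(A,B)=2^{g(n)}$ along the relevant values of $n$; to get the statement "for every $n$", pad with dummy strategies as in Lemma \ref{lmm:game inflation} (which preserves the constant-sum property and the NE set up to projection), filling in the finitely many excluded small $n$ and the subsequence $m_n$ by using the nearest good value below and padding up to $n$. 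This matches the structure of Examples \ref{example:1} and \ref{example:2}, where $\beta_{?}$-type $B$ and its $({\bf1}-B)$ companion $A'$ realize exactly this constant-sum construction with the reversed equilibrium $y'=\pi^{-1}(x')$.
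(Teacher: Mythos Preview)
Your proposal is correct and follows essentially the same route as the paper's proof: verify the hypotheses of Lemma~\ref{lmm:2p0s unique fully mixed ne lemma} for $\beta_n$ (unique fully-mixed NE from Lemma~\ref{lmm:beta_n nes}, $\det\beta_n\ne0$ from Lemma~\ref{lmm:beta_n det}, $K(\beta_n)\ne\det\beta_n$ from Lemma~\ref{lmm:K(beta_n) bounds}), exhibit the symmetry, apply the lemma, read off $C_1=C_2=C(\tilde x)=2^{\Theta(n)}$ via Lemma~\ref{lmm:beta_n complexity}, and pad. The one place you hedge unnecessarily is the symmetry: no cyclic correction is needed --- the paper uses exactly the reversal $\pi=\tau:i\mapsto n+1-i$, and $\beta_n$ is $(\pi,\pi)$-symmetric on the nose (as your own index-chase would confirm, and as Example~\ref{example:2} illustrates with $y'$ being the reverse of $x'$).
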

\begin{proof}
Let $n\ge8$ and let $\beta_n$. It is $(\pi,\pi)$-symmetric for $\pi(i)=n-i+1$. By Lemma \ref{lmm:beta_n nes}, $(I_n,\beta_n)$ has a unique NE and it is fully mixed. Also, by Lemmas \ref{lmm:beta_n det} and \ref{lmm:K(beta_n) bounds}, $\det\beta_n\ne0$ and $|K(\beta_n)|\ge\frac{n}{3}|\det\beta_n|$. Therefore, by Lemma \ref{lmm:2p0s unique fully mixed ne lemma}, $({\bf1}_{n\times n}-\beta_n,\beta_n)$ has a unique NE $(\Tilde{x},\pi^{-1}(\Tilde{x}))$ and it is fully mixed. Furthermore, by Lemma \ref{lmm:beta_n complexity}, there holds
\begin{align*}
    C_1({\bf1}_{n\times n}-\beta_n,\beta_n)=C_2({\bf1}_{n\times n}-\beta_n,\beta_n)=C(\Tilde{x})=C(\pi^{-1}(\Tilde{x}))=2^{\Theta(n)}
\end{align*}
Then, a similar argument to Corollary \ref{thm:final} establishes the lower bound for all $n$.
\qed\end{proof}

\begin{credits}
\subsubsection*{Acknowledgements.} We thank Farid Arthaud and Ziyu Zhang for helpful discussions and valuable feedback on earlier versions of the introduction.
\end{credits}

\printbibliography{}

\newpage

\appendix

\section{Omitted Proofs in Section \ref{sec:lemmas}}\label{sec:proofs}
\begin{proof}[Proof of Lemma \ref{lmm:I subseteq J}]
Because $\supp x\subseteq\br_1(y)\subseteq\supp y$. Part 2 follows from the best-response condition \cite{roughgarden2010algorithmic}.
\qed\end{proof}

\begin{proof}[Proof of Lemma \ref{lmm:game inflation}]
Given $A,B\in\Z^{n\times n}$, define
\begin{align*}
    A'=\left(\begin{array}{c|c}
    A&\bf1\\\hline
    {\bf0}&1
    \end{array}\right),B'=\left(\begin{array}{c|c}
    B&{\bf0}\\\hline
    \bf1&0
    \end{array}\right)
\end{align*}
Then $\NE(A,B)=\NE(A',B')$ follows (with abuse of notation regarding the equilibrium distributions).
\qed\end{proof}

\begin{proof}[Proof of Lemma \ref{lmm:1}]
Let $x=A^{-1}\bf1$. We have $n=\inp{Ax,1}=\sum_{i,j}A_{i,j}x_i$.
Because $A$ is binary and invertible, $\sum_jA_{i,j}\in[n]$ for all $i$. Using Cramer's rule \cite{cramer1750introduction},
\begin{align*}
    x_j=\frac{\det(A\stackrel{i}{\gets}\bf1)}{\det A}~\Rightarrow~\sum_jx_j=\frac{K(A)}{\det A}\ge0
\end{align*}
Therefore
\begin{align*}
    \frac{K(A)}{\det A}\le n=\sum_{i,j}A_{i,j}x_i\le n\frac{K(A)}{\det A}
\end{align*}
Hence $|\det A|\le|K(A)|\le n|\det A|$.
\qed\end{proof}

The proof also implies that under the conditions above (regardless of whether $A$ is binary), $\det A,K(A)$ have the same sign.
\begin{proof}[Proof of Lemma \ref{lmm:2}]
The main inequality in the previous proof becomes
\begin{align*}
    n=\sum_{i,j}A_{i,j}x_i\le M\frac{K(A)}{\det A}
\end{align*}
Hence $|K(A)|\ge\frac{n}{M}|\det A|$.
\qed\end{proof}

\begin{proof}[Proof of Theorem \ref{thm:K upper bound}]
Let $x=(B^T)^{-1}{\bf1},y=A^{-1}\bf1$. By the support enumeration algorithm (see \cite{roughgarden2010algorithmic}, Chapter 3), $x,y\ge\bf0$ and the unique fully-mixed NE of the game is $\p{\frac{x}{\norm{x}_1},\frac{y}{\norm{y}_1}}$. As before, we have
\begin{align*}
    \norm{x}_1=\sum_ix_i=\frac{K(B^T)}{\det B^T}=\frac{K(B)}{\det B}
\end{align*}
($K(B^T)=K(B)$ by definition). Therefore
\begin{align*}
    \frac{x_i}{\norm{x}_1}=\frac{\det(B^T\stackrel{i}{\gets}\bf1)}{K(B)}
\end{align*}
Since $B$ has integer entries, the numerator and denominator of the fraction above are integers, so we get $C_1(A,B)\le C\p{\frac{x}{\norm{x}_1}}\le K(B)$. An analogous argument shows the second inequality.
\qed\end{proof}

\section{Proof of Theorem \ref{thm:upper bound binary}}\label{appendix:proof upper bound}
\begin{proof}[Proof of Theorem \ref{thm:upper bound binary}]
Let $(x,y)\in\NE(A,B)$ with supports $I,J$. Let $k=|I|,l=|J|$.

Write $x=\p{\frac{p_1}{q},...,\frac{p_n}{q}}$ where $p_i,q\in\N$ and $\gcd(p_1,...,p_n)=1$. Consider $p_I:=(p_i)_{i\in I},B,J$. Then they satisfy the linear equalities and inequalities
\begin{align*}
    (B_{I,J})^Tp_I=u\cdot{\bf1}_l,~~~(B_{I,[n]\setminus J})^Tp_I\le u\cdot{\bf1}_{n-l},~~~p_I\ge{\bf1}_k,~~~(p_I,u)\in\N_{>0}^{k+1}
\end{align*}
In words, these equalities and inequalities mean that conditioned on player 1 playing $x$, the pure strategies in $J$ all yield the maximal utility for player 2, and that player 1 plays every pure strategy $i\in I$ with positive probability (see \cite{roughgarden2010algorithmic}, Chapter 3 for further details).
We reorganize terms to get the equivalent problem
\begin{align*}
    (B_{I,J})^Tp_I-u\cdot{\bf1}_l={\bf0}_l,~~\left(\begin{array}{c}u\cdot{\bf1}_{n-l}-(B_{I,[n]\setminus J})^Tp_I\\p_I\end{array}\right)\ge\left(\begin{array}{c}{\bf0}_{n-l}\\{\bf1}_k\end{array}\right),~~(p_I,u)\in\N_{>0}^{k+1}
\end{align*}
So to minimize $q+u=\sum_{i\in I}p_i+u$ where $(p_I,u)$ is the variable, \citet{von1978bound} show that there exists a solution $(p_I,u)$ where
\begin{align*}
    \sum_{i\in I}p_i+u\le n(n+1)M
\end{align*}
where $M$ is an upper bound on the absolute values of all subdeterminants of the matrix
\begin{align*}
    \left(\begin{array}{c|c}
        \begin{array}{c|c}(B_{I,J})^T&-{\bf1}_l\end{array}&{\bf0}_l\\\hline
        \left(\begin{array}{c}
        \begin{array}{c|c}-(B_{I,[n]\setminus J})^T&{\bf1}_{n-l}\end{array}\\\hline I_{k+1}\end{array}\right)&\left(\begin{array}{c}{\bf0}_{n-l}\\\hline{\bf1}_{k+1}
        \end{array}\right)
    \end{array}\right)
\end{align*}
This is an $n+k+1\times k+2$ matrix, so by \citet{hadamard1893resolution} we can bound $M\le\p{\max_{i,j}\abs{B_{i,j}}}^n\cdot(2n+1)^{\frac{2n+1}{2}}$ and then $q=\sum_ip_i\le\p{\max_{i,j}\abs{B_{i,j}}}^n\cdot2^{O(n\logn)}$. Therefore $C_1(A,B)\le q\le\p{\max_{i,j}\abs{B_{i,j}}}^n\cdot2^{O(n\logn)}$. A similar inequality follows for $C_2$.
\qed\end{proof}

\section{Omitted Proofs in Section \ref{subsec:2^sqrt n}}\label{appendix:proofs 2^sqrt n}
\begin{lemma}\label{lmm:det B^k}
$\det B^k=k$. Let $B'$ result from $B^k$ by replacing one of its 0-entries with a 1. Then $\det B'=1$.
\end{lemma}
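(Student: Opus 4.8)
The plan is to recognize $B^k$ as a rank-one perturbation of a permutation matrix and then read off both determinants by standard means. Write $J$ for the $(k+1)\times(k+1)$ all-ones matrix. By the defining rule, the zero entries of $B^k$ occupy exactly the positions $(j+1 \bmod (k+1),\, j)$, which is precisely the support of the permutation matrix $P$ of the $(k+1)$-cycle $\sigma\colon j\mapsto j+1 \bmod (k+1)$; hence $B^k = J - P$. Consequently, any matrix $B'$ obtained from $B^k$ by flipping a single zero to a $1$ has the form $B' = J - P'$, where $P' = P - e_a e_b^{T}$ is $P$ with exactly one of its $1$s deleted, $(a,b)$ being the flipped position.

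For $\det B^k$ I would apply the matrix-determinant lemma to $B^k = -P + \mathbf{1}\mathbf{1}^{T}$. Since $P\mathbf{1} = \mathbf{1}$, this gives $\det B^k = \det(-P)\bigl(1 + \mathbf{1}^{T}(-P)^{-1}\mathbf{1}\bigr) = \det(-P)\bigl(1 - (k+1)\bigr)$, and $\det(-P) = (-1)^{k+1}\sign(\sigma) = (-1)^{k+1}(-1)^{k} = -1$, so $\det B^k = (-1)\cdot(-k) = k$.

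For $\det B'$, the key observation is that deleting one edge from the single $(k+1)$-cycle $\sigma$ always leaves a Hamiltonian path on $[k+1]$; thus $P'$ is --- no matter which zero of $B^k$ was flipped --- the (partial) permutation matrix of such a path. Since $J$ is invariant under simultaneous row-and-column relabelings (which do not change the determinant and preserve the ``path'' shape of $P'$), I may assume $P'_{i,i-1} = 1$ for $2\le i\le k+1$ and all other entries of $P'$ vanish. Then $B' = J - P'$ has first row $\mathbf{1}^{T}$ and $i$-th row $\mathbf{1}^{T} - e_{i-1}^{T}$ for $i\ge 2$. Subtracting the first row from every other row replaces rows $2,\dots,k+1$ by $-e_1^{T},\dots,-e_k^{T}$; expanding the resulting determinant along the last column --- whose only nonzero entry, a $1$, sits in the first row --- leaves the $k\times k$ minor $-I_k$, and the signs multiply out to $\det B' = (-1)^{1+(k+1)}\det(-I_k) = 1$.

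The underlying computations are routine; the only spots needing care are the modular bookkeeping that matches the zero-support of $B^k$ with the cycle $\sigma$, and --- for the second part --- the remark that every individual zero of $B^k$ is an edge of that one cycle, so flipping any single zero produces the same ``all-ones minus Hamiltonian path'' structure, making the argument uniform over the choice of zero entry.
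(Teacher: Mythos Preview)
Your proof is correct. Both you and the paper recognize $B^k$ as the all-ones matrix minus a cyclic permutation matrix. For $\det B^k$, the paper applies $k$ row swaps to reduce $B^k$ to $J-I$ and then reads off $\det(J-I)$ from the characteristic polynomial of $J$; you instead apply the matrix-determinant lemma directly to $-P+\mathbf{1}\mathbf{1}^T$. For $\det B'$, the approaches diverge more: the paper uses multilinearity in the altered row to write $\det B' = \det B^k + (\text{cofactor})$ and identifies that cofactor's minor, after the same row swaps, with $J_k-I_k$, thereby reducing to the $(k{-}1)$ case; you instead use a simultaneous row/column relabeling to put $P'$ into the canonical Hamiltonian-path form and finish by explicit row reduction. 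The paper's recursive step is a bit slicker, while your argument is more self-contained and sidesteps the sign bookkeeping associated with the row swaps.
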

\begin{proof}
$B^k$ is a $k+1\times k+1$ matrix. Let $\chi_A$ be the characteristic polynomial of a matrix $A$. For the matrix ${\bf1}_{k+1\times k+1}$, we have $\chi_{{\bf1}_{k+1\times k+1}}(x)=(x-k-1)x^k$. Therefore
\begin{align*}
    \det(B^k)&=(-1)^k\det({\bf1}_{k+1\times k+1}-I_{k+1})=-\det(I_{k+1}-{\bf1}_{k+1\times k+1})\\
    &=-\chi_{{\bf1}_{k+1\times k+1}}(1)=k
\end{align*}
where the first equality is by applying $k$ row swaps.
Let $B'$. Apply the same row swaps that transform $B'$ into ${\bf1}_{k+1\times k+1}-I_{k+1}$ up to one main-diagonal entry (where the entry is 1). Suppose that that entry is $(i,i)$. Then by the above, when expanding the determinant along row $i$,
\begin{align*}
    \det B'&=k+(-1)^k(-1)^{2i}\det({\bf1}_{k+1\times k+1}-I_{k+1})[i|i]\\
    &=k+(-1)^k\det({\bf1}_{k\times k}-I_k)=k-(k-1)=1
\end{align*}
by definition.
\qed\end{proof}

\begin{proof}[Proof of Theorem \ref{thm:2^sqrt n lower bound}]
For $n$ let $p_1,...,p_n$ be the first $n$ primes. Let $N=\sum_{k\le n}(p_k+1)+1=\sum_kp_k+n+1$. By the prime number theorem \cite{apostol1998introduction}, $p_k=\Theta(k\log k)$, so $N=\Theta(n^2\logn)$. Define the matrix $B'=diag(B^{p_1},B^{p_2},...,B^{p_n})\in\Z_2^{N-1\times N-1}$, and then define
\begin{align}
    B=\left(\begin{array}{c|c}
        {\bf0}_{N-1\times1}&B'\\\hline
        1&{\bf0}_{1\times N-1}
    \end{array}\right)\in\Z_2^{N\times N}
\end{align}
We first prove that $(A,B)$ has only one NE and it is fully mixed. Let $(x,y)\in\NE$ with supports $I,J\subseteq[N]$. Suppose $I\ne[N]$. Then for some $u$, $x$ satisfies
\begin{align*}
    \forall j\in J,~~~\sum_{i\in I}x_iB_{ij}=\sum_{i,~B_{ij}=1}x_i=u
\end{align*}
Here we abuse notation and refer to $B^{p_k}$ also as its set of row indices in $B$. For $i\in B^{p_k}$ we also write $i+1\bmod B^{p_k}$ to refer to either $i+1$ if $i<\max B^{p_k}$ or $\min B^{p_k}$ otherwise.
First, suppose that $k\le n$ is such that $I\cap B^{p_k}\notin\s{\emptyset,B^{p_k}}$. If $|I\cap B^{p_k}|>1$, then linear algebra tells us that $x_i=0$ for every $i\in B^{p_k}$, contradicting the assumption on $I$. Otherwise, $I\cap B^{p_k}=\{i\}$. Then, $B_{ii}=0$, and because $i>1$ we get $(x^TB)_i=0$, while $(x^TB)_{i+1\bmod B^{p_k}}>0$. Therefore, $i\notin J$, so because $A=I_N$ we get $i\notin I$, a contradiction. Therefore $I\cap B^{p_k}\in\s{\emptyset,B^{p_k}}$.
Now focus on $i=N$. Suppose that $i\notin I$, and let $i_1=\min I$. By the definition of $B$ (specifically because $B'$ is placed off-diagonally), and using the previous paragraph, it follows that $(x^TB)_j=0$ for all $j\in\s{1,...,i_1-1,i_1,N}$, while $(x^TB)_j>0$ for some other $j$. Therefore $i_1\notin J$, so $i_1\notin I$ -- a contradiction. In addition, clearly there holds $I\setminus\{N\}\ne\emptyset$.
It is left to show that $B^{p_k}\subseteq I$ for all $k$. Suppose that $B^{p_k}\cap I=\emptyset$. Let $i=\max B^{p_k}$. Using a similar argument, we get $y_{i+1}=0$, so $x_{i+1}=0$, but $i+1\in I$, because $N\in I$. Hence $I=[N]$. By Lemma \ref{lmm:I subseteq J}, $J=[N]$.

Now we find the fully-mixed NE. Its uniqueness follows by the fact that $\det I_n,\det B\ne0$. Let $(x,y)$ be the NE. For $y$, because $A=I_N$ one immediately gets $y=U_N$. For some $u>0$, $x$ needs to satisfy $B^Tx=\bf1\cdot u$. Note that $B'$ is a block-diagonal matrix, and when expanding along the first column of $B$,
\begin{align*}
    \det B=(-1)^{N-1}\prod_{k=1}^n\det(B^{p_k})\stackrel{\text{Lemma \ref{lmm:det B^k}}}{=}(-1)^{N-1}\prod_{k=1}^np_k
\end{align*}
So by Cramer's rule we get
\begin{align*}
    \forall i\le N~~~x_i=u\frac{\det(B^T\stackrel{i}{\gets}{\bf1}_N)}{\det B^T}
\end{align*}
Let $k$ such that row $i$ goes through block $B^{p_k}$. After replacing the blocks we get an upper-triangular matrix, where the diagonal contains the blocks
$$B^{p_1},...,B^{p_{k-1}},B^{p_{k+1}},...,B^{p_n},$$
and the block $B^{p_k}$ with one of its 0-entries replaced by a 1. Call the last block $\beta$. So we get
\begin{align*}
    &\det((B\stackrel{i}{\gets}{\bf1}_N)^T)\\
    &=(-1)^{N-1}\cdot((-1)^{(p_k+1)\sum_{l<k}(p_l+1)}(-1)^{(p_k+1)\sum_{l<k}(p_l+1)})\det\beta\cdot\prod_{l\ne k}p_l\\
    &=(-1)^{N-1}\det\beta\cdot\prod_{l\ne k}p_l\stackrel{\text{Lemma \ref{lmm:det B^k}}}{=}(-1)^{N-1}\prod_{l\ne k}p_l=\frac{\det B}{p_k}
\end{align*}
Therefore, $x_i=\frac{u}{p_k}$. If $i=1$ we can define $p_k=1$ and everything still holds. To normalize, we compute $\inp{x,{\bf1}_N}$:
\begin{align*}
    \inp{x,{\bf1}_N}=u\p{1+\sum_{k=1}^n\frac{p_k+1}{p_k}}=u\p{n+1+\sum_{k=1}^n\frac{1}{p_k}}
\end{align*}
Hence $u=n+1+\sum_{k=1}^n\frac{1}{p_k}$, and the normalized $x$ becomes $x_i=\frac{1}{p_k\p{n+1+\sum_{l=1}^n\frac{1}{p_l}}}$.

Now, observe that $x_N=\frac{\prod_kp_k}{\p{\prod_kp_k}\p{n+1+\sum_{k=1}^n\frac{1}{p_k}}}$ is $x_N$ written as a fraction where the numerator and denominator are integers. The same holds for every $k$ and the fractions $x_i=\frac{\prod_lp_l}{p_k\prod_lp_l\p{n+1+\sum_{l=1}^n\frac{1}{p_l}}}$. Therefore $C_1(A,B)\le\p{\prod_kp_k}\p{n+1+\sum_{k=1}^n\frac{1}{p_k}}$. We now claim that this is the reduced form for $x_N$, establishing that $C_1(A,B)=\p{\prod_kp_k}\p{n+1+\sum_{k=1}^n\frac{1}{p_k}}$. If not, then because the fraction is $\le1$, it follows that there exists an $\alpha\in\N$ such that $\alpha\prod_kp_k=\p{\prod_kp_k}\p{n+1+\sum_k\frac{1}{p_k}}$, which means that $n+1+\sum_k\frac{1}{p_k}\in\N$, so $\sum_k\frac{1}{p_k}\in\N$. But this is false: for $n=1$, $\frac{1}{p_1}=\frac{1}{2}\notin\N$. If it was true for $n>1$ then it would imply that $\frac{1}{p_n}+\sum_{k<n}\frac{1}{p_k}\in\N$. Since $p_n$ is prime, this means that $p_n\mid\prod_{k<n}p_k$, which is a contradiction as they are all distinct primes.

Now we show the asymptotics of $C_1(A,B)$. First, $\log(\prod_kp_k)=\sum_k\log p_k=\Theta(p_n)=\Theta(n\logn)$ as this is the first Chebyshev function \cite{apostol1998introduction}   , and using the prime number theorem. Therefore $\prod_kp_k=2^{\Theta(n\logn)}=2^{\Tilde{\Theta}(n)}$. Since $\sum_k\frac{1}{p_k}=\Theta(\log\log p_n)=\Theta(\loglogn)$, we get $C_1(A,B)=2^{\Theta(n\logn)}=2^{\Tilde{\Theta}(\sqrt N)}$.
\qed\end{proof}

\begin{proof}[Proof of Corollary \ref{cor:x1 is balanced}]
By definition and the prime number theorem, it is $p_{\Tilde\Theta(\sqrt n)}=\Tilde\Theta(\sqrt n\log\sqrt n)=\Tilde\Theta(\sqrt n)$. This implies that all numerators are at least $\frac{2^{\Tilde\Theta(\sqrt n)}}{n\Tilde\Theta(\sqrt n)}$, so they would take up together $\Tilde\Theta(n\sqrt n)$ bits.
\qed\end{proof}

\section{Omitted Proofs in Section \ref{subsec:2^n}}\label{appendix:proofs 2^n}

\begin{lemma}\label{lmm:bn sign}
For $n\ge4$, $\sign(b_n)=(-1)^n$. In particular, $b_n\ne0$ for every $n\ge5$.
\end{lemma}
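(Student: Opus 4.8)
The plan is to establish $\sign(b_n)=(-1)^n$ for every $n\ge 4$ by strong induction on $n$, working directly from the defining recurrence $b_n=b_{n-2}-b_{n-3}+b_{n-4}$ rather than from the closed form \eqref{eq:bn}. One could in principle read the sign off \eqref{eq:bn} --- since $|z|<1$ and $w_1>0$, $\rho<-1$, the term $w_1\rho^n$ eventually dominates and carries the sign $(-1)^n$ --- but making this rigorous would require turning the approximate values of $\rho,z,w_1,w_2$ into honest bounds, whereas the recurrence argument is completely elementary.

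For the base of the induction I would unwind the recurrence from $b_1=0$, $b_2=1$, $b_3=0$, $b_4=1$, obtaining $b_5=-1$, $b_6=2$, $b_7=-2$; together with $b_4=1$ this shows $\sign(b_k)=(-1)^k$ for $k\in\{4,5,6,7\}$.

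For the inductive step, fix $n\ge 8$ and assume $\sign(b_k)=(-1)^k$ for all $k$ with $4\le k\le n-1$. Because $n\ge 8$, the indices $n-2$, $n-3$, $n-4$ all lie in $[4,n-1]$, so the inductive hypothesis applies to each of $b_{n-2},b_{n-3},b_{n-4}$; in particular they are all nonzero, and
\[
\sign(b_{n-2})=(-1)^{n-2}=(-1)^n,\qquad \sign(-b_{n-3})=-(-1)^{n-3}=(-1)^n,\qquad \sign(b_{n-4})=(-1)^{n-4}=(-1)^n.
\]
Thus $b_n=b_{n-2}+(-b_{n-3})+b_{n-4}$ is a sum of three nonzero reals all carrying the sign $(-1)^n$, so $b_n\ne 0$ and $\sign(b_n)=(-1)^n$. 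This closes the induction, and the ``in particular'' assertion follows at once since $(-1)^n\ne 0$ (in fact this already gives $b_n\ne 0$ for all $n\ge 4$).

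There is no real obstacle here; the one point that needs attention is that the induction cannot be started before $n=8$, since the exceptional initial values $b_1=b_3=0$ do not obey the claimed sign pattern. One therefore has to check enough base cases --- here up through $k=7$ --- that every index $n-4$ arising in the recursion for $n$ in the inductive range already satisfies the pattern, which is exactly the condition $n-4\ge 4$.
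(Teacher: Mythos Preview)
Your proof is correct and follows essentially the same approach as the paper: strong induction via the recurrence, observing that each of $b_{n-2}$, $-b_{n-3}$, $b_{n-4}$ carries the sign $(-1)^n$. The only difference is bookkeeping --- you verify $b_4,\dots,b_7$ explicitly and begin the inductive step at $n=8$, whereas the paper starts at $n=5$ and absorbs the exceptional early indices (where $b_1=b_3=0$) into a parenthetical ``or some are 0 but not all''.
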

\begin{proof}
$\sign(b_4)=\sign(1)=1$. By induction, for $n>4$, $\sign(b_n)=\sign(b_{n-2}-b_{n-3}+b_{n-4})=\sign(b_{n-2})=(-1)^n$ because all 3 signed terms have the same sign (or some are 0 but not all).
\qed\end{proof}

\begin{proof}[Proof of Lemma \ref{lmm:anbn}]
It is clear for $n\le4$. By induction, for $n\ge5$,
\begin{align*}
    a_n&=a_{n-2}-a_{n-3}+a_{n-4}=b_{n-2}+b_{n-1}-b_{n-3}-b_{n-2}+b_{n-4}+b_{n-3}\\
    &=b_n+b_{n+1}
\end{align*}
\qed\end{proof}

\begin{proof}[Proof of Lemma \ref{prop:ratio sequences to rho}]
Note that $|w_2|<\frac{1}{3}$. So
\begin{align*}
    |w_2z^n+\overline{w_2}\cdot\overline{z}^n|=|w_2z^n+\overline{w_2z^n}|=|2\Re(w_2z^n)|\le2|w_2z^n|<\frac{2}{3}|z|^n<|\rho|^n
\end{align*}
Therefore
\begin{align*}
    \lim_{n\to\infty}\frac{b_{n+1}}{b_n}=\lim_{n\to\infty}\frac{\frac{1}{3}+w_1\rho^{n+1}+2\Re(w_2z^{n+1})}{\frac{1}{3}+w_1\rho^n+2\Re(w_2z^n)}=\frac{w_1\rho}{w_1}=\rho
\end{align*}
In addition, one can see from the expression for $\frac{b_{n+1}}{b_n}$ that (for $n\ge5$) (i) its distance to $\rho$ is strictly decreasing; (ii) $\frac{b_{n+1}}{b_n}>\rho$ for $n\in2\N$ and $<\rho$ otherwise. The lemma statement now follows.
\qed\end{proof}

Lemma \ref{prop:ratio sequences to rho}, $|\rho|>1$ and Lemma \ref{lmm:anbn} imply that
\begin{proposition}
For every $n\ge6$, $a_n\ne0$.
\end{proposition}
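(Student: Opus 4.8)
The plan is to translate the claim into a statement about the ratio $b_{n+1}/b_n$ and read the conclusion off the three quoted facts. By Lemma~\ref{lmm:anbn}, $a_n = b_n + b_{n+1}$, and by Lemma~\ref{lmm:bn sign}, $b_n \neq 0$ for every $n \geq 5$; hence for $n \geq 5$ we have $a_n = b_n\bigl(1 + b_{n+1}/b_n\bigr)$, so that $a_n = 0$ is equivalent to $b_{n+1}/b_n = -1$. It therefore suffices to prove $b_{n+1}/b_n \neq -1$ in the claimed range.

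Now apply Lemma~\ref{prop:ratio sequences to rho}. It traps $b_{n+1}$ between $-\lambda_n b_n$ and $-\mu_n b_n$, with $\lambda_n \uparrow \rho$ and $\mu_n \downarrow \rho$; in particular $b_{n+1}/b_n \to \rho$, and from the proof of that lemma the ratio lies below $\rho$ for odd $n$ and above $\rho$ for even $n$. Since $\rho \approx -1.4656$ is negative and $|\rho| > 1$, we have $\rho < -1$. For odd $n$ this already gives $b_{n+1}/b_n < \rho < -1$, so the ratio is not $-1$. Equivalently and more uniformly: adding $b_n$ to the trap shows $a_n$ lies between $(1-\lambda_n)b_n$ and $(1-\mu_n)b_n$; here $1-\lambda_n > 1-\rho > 2 > 0$ always, and since $\mu_n$ decreases to $\rho < 1$ we have $1-\mu_n > 0$ for all $n$ past a small explicit threshold, so for those $n$ both endpoints are nonzero and carry the sign of $b_n \neq 0$, whence $a_n \neq 0$. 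The finitely many remaining small indices are dispatched by directly unrolling the recurrence (equivalently, reading them off the initial segment of $b_n$ via $a_n = b_n + b_{n+1}$).

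The step I expect to be the real work is the even-$n$ case: the bare convergence $b_{n+1}/b_n \to \rho$ does not keep the ratio away from $-1$, so one must quantify how quickly the trap of Lemma~\ref{prop:ratio sequences to rho} tightens — concretely, compare $\bigl|b_{n+1}/b_n - \rho\bigr|$ with the fixed gap $-1-\rho$ — in order to pin the ratio strictly between $\rho$ and $-1$ and thereby bound the number of base cases that must be checked by hand.
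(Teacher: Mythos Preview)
Your approach is correct and matches the paper's own: its proof is the single line ``Lemma~\ref{prop:ratio sequences to rho}, $|\rho|>1$ and Lemma~\ref{lmm:anbn} imply that,'' invoking exactly the three ingredients you assemble (the identity $a_n=b_n+b_{n+1}$, the fact $\rho<-1$, and the trap on $b_{n+1}/b_n$). The even-$n$ ``real work'' you anticipate is lighter than you suggest, since the proof of Lemma~\ref{prop:ratio sequences to rho} already asserts that $|b_{n+1}/b_n-\rho|$ is strictly decreasing, so once this distance falls below the fixed gap $|\rho+1|$ it stays there---no rate estimate is needed, only locating the threshold and a finite hand-check below it.
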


\begin{proof}[Proof of Lemma \ref{lmm:anbn linear independence}]
Let $6\le m<n$ and assume equality for contradiction. Then $a_m,b_m,a_n,b_n\ne0$ so $\alpha\ne0$. Equivalently we have $\frac{a_n}{a_m}=\frac{b_n}{b_m}$. So
\begin{align*}
    a_nb_m=a_mb_n&\iff(b_n+b_{n+1})b_m=(b_m+b_{m+1})b_n\iff b_{n+1}b_m=b_nb_{m+1}\\
    &\iff\frac{b_{n+1}}{b_n}=\frac{b_{m+1}}{b_m}
\end{align*}
Call the last two ratios $\rho_n,\rho_m$. If $n-m\notin2\N$ then either $\rho_n<\rho<\rho_m$ or $\rho_m<\rho<\rho_n$. If $n-m\in2\N$ then either $\rho_m<\rho_n<\rho$ or $\rho<\rho_n<\rho_m$. Both cases follow from Lemma \ref{prop:ratio sequences to rho}, and both imply a contradiction.

One can verify the correctness of the claim when $m\le 5$.
\qed\end{proof}

\begin{proof}[Proof of Lemma \ref{lmm:beta_n det}]
Parts 1 and 3 are immediate. For part 2:
\begin{align*}
    \det B_n&=\det B_{n-1}-\det(B_n[1|2])=\det B_{n-1}+\det((B[1|2])[2|1])\\
    &=\det B_{n-1}+\det(((B[1|2])[2|1])[1|1])=\det B_{n-1}+\det B_{n-3}
\end{align*}
by definition.
\qed\end{proof}

\subsubsection{Proof of Lemma \ref{lmm:beta_n nes}}
The lemma follows immediately from the following:
\begin{lemma}
For every $n\ge1$,
\begin{align*}
    \s{\supp x\mid(x,y)\in\NE(I_n,\beta_n)}&=\s{[n]\setminus\s{i\le n\mid\exists\alpha~(a_{n+1},b_{n+1})=\alpha\cdot(a_i,b_i)}},\\
    \s{\supp y\mid(x,y)\in\NE(I_n,\beta_n)}&=\s{J\supseteq\supp x\mid(x,y)\in\NE(I_n,\beta_n)}
\end{align*}
\end{lemma}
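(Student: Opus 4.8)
The plan is to reduce the whole statement to identifying the set of possible player-$1$ supports, and then to read off that set from a closed-form evaluation of the column-replaced determinants of $\beta_n$ in terms of the recurrences $a_n,b_n$. By Lemma \ref{lmm:I subseteq J}, every $(x,y)\in\NE(I_n,\beta_n)$ has $\supp x\subseteq\supp y$ and $y=U_{\supp y}$; and since $A=I_n$, a pure strategy $i$ is a best response to $U_J$ exactly for $i\in J$, so $(x,U_J)$ with $J\supseteq\supp x$ is a NE iff $(\beta_n^Tx)_j$ attains its maximum at every $j\in J$. Granting this, once we show (a) $\supp x$ equals one fixed set $I_0$ for \emph{every} NE, and (b) the corresponding $x$ makes $\beta_n^Tx$ a constant vector, the second displayed equality is immediate: player $2$ is then indifferent over all of $[n]$, so $(x,U_J)$ is a NE precisely for $I_0\subseteq J\subseteq[n]$, which together with Lemma \ref{lmm:I subseteq J} is exactly the claimed description of $\s{\supp y\mid(x,y)\in\NE}$.

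For (a) and (b), first note $\det\beta_n\ne0$: by Lemma \ref{lmm:beta_n det}, $\det\beta_n=(-1)^{n+1}\det B_n$ and $\det B_n$ obeys $\det B_n=\det B_{n-1}+\det B_{n-3}$ with positive seed values, hence $\det B_n>0$ for all $n$. For a NE with $\supp y=[n]$, indifference gives $\beta_n^Tx=u{\bf1}$ for some $u>0$, so $x$ is the unique scaling of $x^\ast:=(\beta_n^T)^{-1}{\bf1}$, and by Cramer's rule $x_i^\ast\propto\det(\beta_n^T\stackrel{i}{\gets}{\bf1})$. The heart of the proof is to evaluate this quantity. Expanding $\det(\beta_n^T\stackrel{i}{\gets}{\bf1})$ along the distinguished first column $e_n$ and last row $e_1^T$ of $\beta_n$, and then along the three-diagonal pattern of $B_{n-1}$, produces a recurrence in $n$ of exactly the shape $a_n=a_{n-2}-a_{n-3}+a_{n-4}$, $b_n=b_{n-2}-b_{n-3}+b_{n-4}$ (and a companion recurrence in $i$); matching four small minors with the initial data $a_1,\dots,a_4,b_1,\dots,b_4$ yields an identity of the form $\det(\beta_n^T\stackrel{i}{\gets}{\bf1})=\pm\p{a_ib_{n+1}-b_ia_{n+1}}$. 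Hence $x_i^\ast=0$ iff $a_ib_{n+1}=b_ia_{n+1}$, i.e. iff $(a_{n+1},b_{n+1})$ is a scalar multiple of $(a_i,b_i)$ — exactly the ``removed'' index set, which is a singleton or empty by Lemma \ref{lmm:anbn linear independence}. So the candidate support is $I_0=[n]\setminus\s{i\le n\mid\exists\alpha~(a_{n+1},b_{n+1})=\alpha(a_i,b_i)}$.

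Two loose ends remain. First, $x^\ast$ (suitably scaled) must actually be a distribution: the nonzero values $a_ib_{n+1}-b_ia_{n+1}$ must share a sign. Substituting $a_k=b_k+b_{k+1}$ (Lemma \ref{lmm:anbn}) rewrites each as $b_{i+1}b_{n+1}-b_ib_{n+2}$, whose sign is pinned down by the alternating-sign property of $b_n$ (Lemma \ref{lmm:bn sign}), the monotone ratio bounds $-\lambda_nb_n\le b_{n+1}\le-\mu_nb_n$ with $\lambda_n,\mu_n\to\rho$ (Lemma \ref{prop:ratio sequences to rho}), and $\abs\rho>1$; thus $(x^\ast/\norm{x^\ast}_1,U_n)$ is a genuine NE with support $I_0$, giving ``$\supseteq$'' in the first equality. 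Second, for ``$\subseteq$'' we must handle a NE $(x,U_J)$ with $J\subsetneq[n]$: here the partial indifference equations $(\beta_n^Tx)_j=u$ for $j\in J$, combined with $x$ vanishing off $\supp x\subseteq J$ and the banded/wrap structure of $\beta_n$ (so that $(\beta_n^Tx)_1=x_n$ and, for $j\ge2$, $(\beta_n^Tx)_j=x_{j-2}+x_{j-1}+x_{j+1}$ with the natural index conventions), let us propagate equalities along the diagonals exactly as in the proof of Theorem \ref{thm:2^sqrt n lower bound}, forcing $\beta_n^Tx=u{\bf1}$ after all, hence $x=x^\ast$ and $\supp x=I_0$; this also delivers (b). The small cases $n\le7$, where Lemmas \ref{prop:ratio sequences to rho} and \ref{lmm:anbn linear independence} are replaced by direct checks, are verified by hand. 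I expect the main obstacle to be the closed-form evaluation of $\det(\beta_n^T\stackrel{i}{\gets}{\bf1})$ — in particular pinning down the four initial conditions and the $n+1$ versus $n$ index bookkeeping so that it matches $a_ib_{n+1}-b_ia_{n+1}$ exactly; the propagation step is then a routine, if tedious, adaptation of the argument already used for Theorem \ref{thm:2^sqrt n lower bound}.
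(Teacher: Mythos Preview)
Your handling of the case $J\subsetneq[n]$ is the genuine gap. The claim that propagation along the band forces $\beta_n^Tx=u{\bf1}$ is false. For a concrete failure take $n=10$, $J=[10]\setminus\{5\}$, $x_5=0$; the nine equations $(\beta_{10}^Tx)_j=u$ for $j\ne5$ have a unique nonnegative solution ($x_1=x_2=u/6$, $x_3=5u/6$, $x_4=2u/3$, $x_6=x_7=x_9=u/3$, $x_8=2u/3$, $x_{10}=u$), and for it $(\beta_{10}^Tx)_5=x_3+x_4+x_6=11u/6\ne u$. So this vector is \emph{not} $x^\ast$; what actually rules out this $J$ is that the best-response \emph{inequality} $(\beta_n^Tx)_5\le u$ is violated. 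The obstruction is a strict inequality at the missing index, not an equality that propagates, and you give no mechanism to produce such inequalities. The appeal to Theorem~\ref{thm:2^sqrt n lower bound} is also misplaced: that argument exploits the block structure of $\mathrm{diag}(B^{p_1},\dots,B^{p_n})$ to show each block is all-or-nothing in the support; $\beta_n$ has no such blocks.

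This is exactly where the paper invests most of the work. Rather than propagating equalities, it first uses the best-response inequalities in a detailed case split to force $\{1,n-4,n-3,n-2,n-1,n\}\subseteq\supp x$; then, assuming $J=[n]$, runs the recurrence to get $I=[n]$ via Lemma~\ref{lmm:anbn linear independence}; and finally rules out $J\subsetneq[n]$ by reducing to the inequality $b_{l+2}+b_l-\tfrac{b_{l+1}}{b_l}(b_{l+1}+b_{l-1})\ge3$ for all $l\ge5$, which in turn rests on the asymptotic identity $\lim_l\bigl(b_l-\tfrac{b_{l+1}b_{l-1}}{b_l}\bigr)=-\tfrac{(\rho-1)^2}{3\rho}$. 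None of this is a routine adaptation of anything earlier. As smaller matters: the indices in your target determinant identity are shifted --- the paper's recurrence yields $x_{n-k}\propto a_kb_n-b_ka_n$, i.e.\ $x_i^\ast\propto a_{n-i}b_n-b_{n-i}a_n$, not $a_ib_{n+1}-b_ia_{n+1}$ --- and your banded formula $(\beta_n^Tx)_j=x_{j-2}+x_{j-1}+x_{j+1}$ also fails at $j=n-1$, where the correct value is $x_{n-3}+x_{n-2}$ with no $x_n$ term.
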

\begin{proof}
Since in this case, by Lemma \ref{lmm:anbn linear independence},
$$|\s{[n]\setminus\s{i\le n\mid\exists\alpha~(a_{n+1},b_{n+1})=\alpha\cdot(a_i,b_i)}}|=1$$
for all $n\ge8$, we assume that $n\ge10$ and simply show that there is one NE and it is fully mixed. The cases $n\le9$ can be verified by hand. Let $(x,y)\in\NE$ with supports $I,J$.
Observe that $I\subseteq J$ because $A=I_n$ and $x\in\Delta\br_1(y)$. 
We first show that $\s{1,n-4,n-3,...,n}\subseteq I$. Let $i\in\s{1,n-4,n-3,...,n}$, and suppose for contradiction that $i\notin I$. We refer to Equation \ref{eq:beta10} to help visualize this.

\paragraph{Case $i=1$:} Let $i_1=\min I$.
\begin{enumerate}
    \item If $i_1=2$, then $3\in I$ (otherwise $2\notin J$). Then, $(x^T\beta_n)_2<(x^T\beta_n)_4$, so $2\notin J$, which is a contradiction.
    \item If $3\le i_1\le n-2$, then $i_1+1\notin I$ (otherwise $i_1\notin J$).
    \item If $i_1=n-1$ then $J\subseteq\s{1,n-2,n}$ -- a contradiction.
    \item If $i_1=n$ then $J=\s{1}$ -- a contradiction.
\end{enumerate}

\paragraph{Case $i=n$:} Then $(x^T\beta_n)_1=0$, so $1\notin I$, so it is again a contradiction.

Therefore $1,n\in I\subseteq J$. To not contradict that we have $\s{n-2,n-1}\cap I\ne\emptyset$.

\paragraph{Case $i=n-1$:} Then $n-2\in I$. Then $\s{n-4,n-3}\cap I\ne\emptyset$. If $n-3\in I$ then $(x^T\beta_n)_n<(x^T\beta_n)_{n-1}$ and if $n-4\in I$ then $(x^T\beta_n)_n<(x^T\beta_n)_{n-3}$, so $n\notin J$ -- a contradiction.

\paragraph{Case $i=n-2$:} Then $n-3\in I$ (otherwise $(x^T\beta_n)_{n-1}=0$ and then $n-1\notin J$ but $n-1\in I$). Then, $(x^T\beta_n)_n<(x^T\beta_n)_{n-2}$, so $n\notin J$ -- a contradiction.

\paragraph{Case $i=n-3$:}
\begin{enumerate}
    \item If $n-2\notin I$ then $n-1\in I$ and this implies that $n-3\in I$ -- a contradiction.
    \item If $n-1\notin I$ then $n-2\in I$ and this implies that $n-4\in I$, but then $(x^T\beta_n)_n<(x^T\beta_n)_{n-3}$, so $n\notin J$ -- a contradiction.
    \item If $n-2,n-1\in I$ then $(x^T\beta_n)_{n-1}<(x^T\beta_n)_n$ so $n-1\notin J$ -- a contradiction.
\end{enumerate}

\paragraph{Case $i=n-4$:} By the above, $\s{n-3,n-2,n-1,n}\subseteq I$. With that, and looking at $(x^T\beta_n)_{n-2,n-1,n}$, we get that $x_{n-3}=x_{n-2}=x_{n-1}$. Then, because $n-3\in I\subseteq J$ (Lemma \ref{lmm:I subseteq J}), there holds $n-5\in I$ and $x_{n-5}=x_{n-3}=...=x_{n-1}$. Then, we also get $n-6\notin I$, as otherwise $(x^T\beta_n)_{n-4}>(x^T\beta_n)_n$.
Similarly, $n-5\in I$ implies that $n-7\in I$. Then, $(x^T\beta_n)_{n-5}=(x^T\beta_n)_{n-3}$, namely $x_{n-7}=2x_{n-5}$. But then $(x^T\beta_n)_{n-6}\ge x_{n-5}+x_{n-7}=3x_{n-5}>(x^T\beta_n)_n$, in contradiction to $n\in J$.
\\\\
Now suppose that $J=[n]$. We show that $I=[n]$.
$n-4,...,n\in I$, so $x_{n-4},...,x_n>0$ and $n-4,...,n\in J$, so $(x^T\beta_n)_{n-4}=...=(x^T\beta_n)_n$.
For every $k$ express each $x_{n-k}$ as a tuple $(a'_k,b'_k)$ of $x_{n-1},x_{n-4}$, so that $x_{n-k}=a'_kx_{n-1}+b'_kx_{n-4}$. Write $(a'_k,b'_k)=0$ to mean that $a'_kx_{n-1}+b'_kx_{n-4}=0$.
We have
\begin{align*}
    (x^T\beta_n)_n=x_{n-1}+x_{n-2}=x_{n-2}+x_{n-3}=(x^T\beta_n)_{n-1}\Rightarrow x_{n-1}=x_{n-3}
\end{align*}
Also,
\begin{align*}
    &(x^T\beta_n)_n=x_{n-1}+x_{n-2}=x_{n-1}+x_{n-3}+x_{n-4}=(x^T\beta_n)_{n-2}\\
    &\Rightarrow x_{n-2}=x_{n-1}+x_{n-4}
\end{align*}
So we get the matching $x_{n-1},...,x_{n-4}=(1,0),(1,1),(1,0),(0,1)$ so $(a'_k,b'_k)=(a_k,b_k)$ for $k\le4$.
For every $5\le k\le n-1$, because $J=[n]$ we get
\begin{align*}
    &(x^T\beta_n)_{n-k+2}=x_{n-k}+x_{n-k+1}+x_{n-k+3}\\
    &=x_{n-k+1}+x_{n-k+2}+x_{n-k+4}=(x^T\beta_n)_{n-k+3}\\
    &\Rightarrow x_{n-k}=x_{n-k+2}-x_{n-k+3}+x_{n-k+4}
\end{align*}
So $(a'_k,b'_k)=(a'_{k-2}-a'_{k-3}+a'_{k-4},b'_{k-2}-b'_{k-3}+b'_{k-4})$. Therefore $(a'_k,b'_k)=(a_k,b_k)$ for every $k\in[n-1]$. Then, comparing columns $2,3$ of $\beta_n$, we get
\begin{align*}
    &(x^T\beta_n)_2=x_1+x_3=x_1+x_2+x_4=(x^T\beta_n)_3\\
    &\Rightarrow0=x_2-x_3+x_4
\end{align*}
Or in terms of $a_k,b_k$, we have $(a_{n-2}-a_{n-3}+a_{n-4},b_{n-2}-b_{n-3}+b_{n-4})=(a_n,b_n)=0$ by definition. Therefore, because $a_n,b_n\ne0$, $\frac{x_{n-1}}{x_{n-4}}=\frac{-b_n}{a_n}$.

Assume for contradiction that $I\ne[n]$. Let $l$ be the minimal $l$ such that $n-l\notin I$. Let $k<l$. Then
\begin{align*}
    &(x^T\beta_n)_{n-l+2}=x_{n-l}+x_{n-l+1}+x_{n-l+3}\\
    &=x_{n-l+1}+x_{n-l+2}+x_{n-l+4}=(x^T\beta_n)_{n-l+3}\\
    &\Rightarrow0=x_{n-l}=x_{n-l+2}-x_{n-l+3}+x_{n-l+4}
\end{align*}
Using the vector expressions for the above and the recursion equations, we get $a_lx_{n-1}+b_lx_{n-4}=0$ so $\frac{x_{n-1}}{x_{n-4}}=-\frac{b_l}{a_l}$.

In total, $\frac{b_n}{a_n}=\frac{b_l}{a_l}$, in contradiction to Lemma \ref{lmm:anbn linear independence}.\\\\
Now suppose for contradiction that $J\subset[n]$. Let $n-l=\min[n]\setminus J$. If $n-l\in\br_2(x)$, then the argument above can be repeated to reach a contradiction to Lemma \ref{lmm:anbn linear independence}, so suppose that $n-l\notin\br_2(x)$. $I\subseteq J$ therefore $x_{n-l}=0$. If $x_{n-l-1}=0$, then $(x^T\beta_n)_{n-l+1}<(x^T\beta_n)_{n-l+3}$, so $n-l+1\notin J$ which is a contradiction. Therefore $n-l-1\in I\subseteq J$.
Now, the same argument as above shows that $l\ge5$ and that for all $k\le l+1$, $x_{n-k}=(a_k,b_k)$. We have the inequality $(x^T\beta_n)_{n-l}<(x^T\beta_n)_{n-l+1}=(2,1)$. Hence
\begin{align*}
    (a_{l+1}+a_{l-1})x_{n-1}+(b_{l+1}+b_{l-1})x_{n-4}+x_{n-l-2}&<2x_{n-1}+x_{n-4},\\
    a_lx_{n-1}+b_lx_{n-4}&=0
\end{align*}
We show that $(a_{l+1}+a_{l-1}-2)x_{n-1}+(b_{l+1}+b_{l-1}-1)x_{n-4}\ge0$ in contradiction.
Using Lemma \ref{lmm:anbn}, this equals $(b_{l+1}+b_{l+2}+b_{l-1}+b_l-2)x_{n-1}+(b_{l+1}+b_{l-1}-1)x_{n-4}$.
The equation above gives us $(b_l+b_{l+1})x_{n-1}=-b_lx_{n-4}$, so using that and the definition of $b_l$ we get
\begin{align*}
    &=(b_{l-1}+b_{l+2}-2)x_{n-1}+(b_{l+3}-1)x_{n-4}\\
    &=(b_{l-1}+b_{l+2}-2)x_{n-1}-\frac{b_l+b_{l+1}}{b_l}(b_{l+3}-1)x_{n-1}\ge0\\
    \iff&b_{l-1}+b_{l+2}-2-\frac{b_l+b_{l+1}}{b_l}(b_{l+3}-1)\\
    &=b_{l-1}+b_{l+2}-1-b_{l+3}-\frac{b_{l+1}}{b_l}(b_{l+3}-1)\ge0
\end{align*}
Expanding $b_{l+3}$ back yields
\begin{align*}
    &b_{l-1}+b_{l+2}-1-(b_{l+1}-b_l+b_{l-1})-\frac{b_{l+1}}{b_l}(b_{l+1}-b_l+b_{l-1}-1)\\
    &=b_{l+2}+b_l-\frac{b_{l+1}}{b_l}(b_{l+1}+b_{l-1})-1+\frac{b_{l+1}}{b_l}\ge0\\
    &\iff b_{l+2}+b_l-\frac{b_{l+1}}{b_l}(b_{l+1}+b_{l-1})\ge1-\frac{b_{l+1}}{b_l}
\end{align*}
Because $l\ge5$, it is sufficient to show that $b_{l+2}+b_l-\frac{b_{l+1}}{b_l}(b_{l+1}+b_{l-1})\ge3$ for all $l\ge5$. We deduce this from the following claim (see proof in Appendix \ref{sec:proofs}).

\begin{claim}\label{claim}
$\lim_{l\to\infty}b_l-\frac{b_{l+1}b_{l-1}}{b_l}=-\frac{(\rho-1)^2}{3\rho}\approx1.38263$.
\end{claim}
\begin{proof}
$b_l-\frac{b_{l+1}b_{l-1}}{b_l}=\frac{b_l^2-b_{l+1}b_{l-1}}{b_l}$. Let $t_l=2\Re(w_2z^l)$, so that $b_l=\frac{1}{3}+w_1\rho^n+t_l$. Using Equation \ref{eq:bn},
\begin{align*}
    b_l^2&=\frac{1}{9}+w_1^2\rho^{2l}+t_l^2+\frac{2w_1}{3}\rho^l+\frac{2}{3}t_l+2w_1\rho^lt_l,\\
    b_{l+1}b_{l-1}&=\frac{1}{9}+w_1^2\rho^{2l}+t_{l+1}t_{l-1}+\frac{w_1}{3}(\rho^{l+1}+\rho^{l-1})+\frac{1}{3}(t_{l+1}+t_{l-1})+w_1(\rho^{l+1}t_{l-1}+\rho^{l-1}t_{l+1})
\end{align*}
So
\begin{align*}
    b_l^2-b_{l+1}b_{l-1}=t_l^2-t_{l+1}t_{l-1}-\frac{w_1}{3}\rho^{l-1}(\rho-1)^2+\frac{1}{3}(2t_l-t_{l+1}-t_{l-1})+w_1\rho^{l-1}(2\rho t_l-\rho^2t_{l-1}-t_{l+1})
\end{align*}
By what we have shown, $|t_l|\le\frac{2}{3}|z|^l$, and $|z|<0.9$. Therefore $t_l\to0$ as $l\to\infty$. Furthermore,
\begin{align*}
    |w_1\rho^{l-1}(2\rho t_l-\rho^2t_{l-1}-t_{l+1})|\le|w_1\rho^{l-1}z^{l-1}|(|\rho|+|z|)^2=:u|\rho^{l-1}||z^{l-1}|
\end{align*}
Also $\frac{u|\rho^{l-1}||z^{l-1}|}{|\rho^{l-1}|}=\frac{u\rho^{l-1}|z^{l-1}|}{\rho^{l-1}}\to0$ when $l\to\infty$. In total, after omitting the negligible terms from the numerator and denominator, we get
\begin{align*}
    \lim_{l\to\infty}\frac{b_l^2-b_{l+1}b_{l-1}}{b_l}=\lim_{l\to\infty}\frac{-\frac{w_1}{3}\rho^{l-1}(\rho-1)^2}{w_1\rho^l}=-\frac{(\rho-1)^2}{3\rho}
\end{align*}
\qed\end{proof}

As a result,
\begin{align*}
    \lim_{l\to\infty}b_{l+2}-\frac{b_{l+1}^2}{b_l}=\lim_{l\to\infty}\frac{\frac{w_1}{3}\rho^l(\rho-1)^2}{w_1\rho^l}=\frac{(\rho-1)^2}{3}\approx2.02635
\end{align*}

It can be now easily verified that $\p{b_l-\frac{b_{l+1}b_{l-1}}{b_l}}+\p{b_{l+2}-\frac{b_{l+1}^2}{b_l}}\ge3$ for all $l\ge5$. Therefore we reach a contradiction, hence $J=[n]$, concluding the proof.
\qed\end{proof}

\begin{corollary}
For every $n\ne3,6,7$, the game $(I_n,\beta_n)$ has a unique NE and it is fully mixed.
\end{corollary}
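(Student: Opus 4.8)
The corollary is essentially the assembly of the two lemmas just proved, so the plan is short. First I would appeal to the support-characterization lemma. Its case analysis reduces the existence of a NE of $(I_n,\beta_n)$ with a non-full support — or of a second NE — to the existence of a collinearity $(a_n,b_n)=\alpha\,(a_l,b_l)$ with $l<n$. By Lemma~\ref{lmm:anbn linear independence} the only such collinear pairs are $(l,n)\in\s{(1,3),(4,6),(5,7)}$, so this obstruction occurs exactly when $n\in\s{3,6,7}$ and for no other $n$; consequently, for $n\notin\s{3,6,7}$ player~1's equilibrium support is forced to be all of $[n]$. The finitely many small $n\le 9$ with $n\notin\s{3,6,7}$ lie outside the range where the asymptotic collinearity bookkeeping of Lemma~\ref{prop:ratio sequences to rho} applies cleanly, and I would settle them by the same direct verification used inside the proof of the support lemma.

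Next, since the game is $(I_n,\beta_n)$, Lemma~\ref{lmm:I subseteq J} gives $\supp x\subseteq\supp y$ for every NE $(x,y)$; combined with $\supp x=[n]$ this forces $\supp y=[n]$, so every NE is fully mixed. It then remains to upgrade "unique support" to "unique NE". In a fully-mixed NE, player~2 is indifferent across all of $[n]$, which reads $\beta_n^{T}x=u\,\mathbf 1$ for some scalar $u>0$, while player~1's indifference across all of $[n]$ forces $y=U_n$. By Lemma~\ref{lmm:beta_n det}, $\det\beta_n=\pm\det B_n$, and $\det B_n$ obeys $\det B_n=\det B_{n-1}+\det B_{n-3}$ from positive initial data, hence is never zero; so $\beta_n^{T}$ is invertible, $x=u\,(\beta_n^{T})^{-1}\mathbf 1$ is determined up to the scalar $u$, and normalization $\sum_i x_i=1$ pins down $u$. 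Since a NE exists by Nash's theorem, every NE solves this system, and the system has a unique solution, there is exactly one NE, and by the previous paragraph it is fully mixed.

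I do not expect a genuine obstacle here: the analytic heart — the case analysis characterizing the equilibrium supports and the collinearity dichotomy for $(a_n,b_n)$ — has already been carried out in the lemmas above. The only points needing care are (i) lining up the index shift between the recurrences $a_\bullet,b_\bullet$ and the matrices $\beta_n$ so that the three collinear pairs of Lemma~\ref{lmm:anbn linear independence} map precisely onto the excluded dimensions $\s{3,6,7}$, and (ii) dispatching the handful of small $n$ by hand. The step from a forced full support to uniqueness of the NE is then immediate, because a full support together with an invertible payoff matrix $\beta_n$ leaves no remaining freedom in the equilibrium strategy.
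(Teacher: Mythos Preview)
Your proposal is correct and follows essentially the same route as the paper: the corollary is stated immediately after the support-characterization lemma (and is in fact a restatement of Lemma~\ref{lmm:beta_n nes}), so its content is exactly the assembly you describe --- the collinearity dichotomy of Lemma~\ref{lmm:anbn linear independence} pins the exceptional dimensions to $\s{3,6,7}$, Lemma~\ref{lmm:I subseteq J} forces $\supp y=[n]$, and the paper notes (just after the proof) that uniqueness follows from $\det\beta_n,\det I_n\ne0$, which is precisely your invertibility argument. Your caveat (i) about the index shift is well placed: the paper's stated characterization uses $(a_{n+1},b_{n+1})$ while the body of the proof works with $(a_n,b_n)$, and it is the latter that lines up with the excluded set $\s{3,6,7}$.
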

Note that the proof above also tells us how to construct a fully-mixed NE of $(I_n,\beta_n)$: First, $y=U_n$. As for $x$, first write $x'_{n-1}=|b_n|,x'_{n-4}=|a_n|$. Then define $x'_{n-k}=a_k|b_n|+b_k|a_n|$ for every $k\in\s{2,3,5,6,7,...,n-1}$. Then define $x'_n=2|b_n|+|a_n|$. Then define $x=\frac{x'}{\inp{x,{\bf1}}}$. The uniqueness of the NE is due to $\det\beta_n,\det I_n\ne0$.

\subsubsection{Proof of Lemma \ref{lmm:beta_n complexity}}
\begin{proof}[Proof of Lemma \ref{lmm:K(beta_n) bounds}]
Follows from Lemma \ref{lmm:K det inequality strengthened}.
\qed\end{proof}

\begin{lemma}
$\rho\notin\Q$.
\end{lemma}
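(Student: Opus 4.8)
The plan is to invoke the rational root theorem on the defining polynomial of $\rho$. By construction, $\rho$ is the real root of $x^3+x^2+1$, so $\rho^3+\rho^2+1=0$; hence it suffices to show that this monic cubic with integer coefficients has no rational root. First I would recall that any rational root $p/q$ (in lowest terms) of a monic polynomial in $\Z[x]$ must be an integer dividing the constant term. Here the constant term is $1$, so the only candidates are $\pm 1$. Then I would just evaluate: $1^3+1^2+1=3\neq 0$ and $(-1)^3+(-1)^2+1=1\neq 0$. Since neither candidate is a root, $x^3+x^2+1$ has no rational root, and therefore $\rho\notin\Q$.

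There is essentially no substantive obstacle here. The only point requiring a word of care is that $\rho$ must be treated as an exact root of $x^3+x^2+1$ rather than as the displayed decimal approximation $\approx -1.4656$ — but this is exactly how $\rho$ was introduced in the excerpt, so $\rho^3+\rho^2+1=0$ holds on the nose. If one wanted a slightly stronger conclusion for later use, the same computation shows that $x^3+x^2+1$ (a cubic over $\Q$ with no rational root) is irreducible over $\Q$, so it is the minimal polynomial of $\rho$ and $[\Q(\rho):\Q]=3$; an entirely analogous argument applies to $x^3-x^2-1$, whose real root is $-\rho$. For the present lemma, however, the two-line rational-root argument is all that is needed.
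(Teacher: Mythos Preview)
Your proof is correct. It differs from the paper's argument: the paper observes that the polynomial $x^3+x^2+1$ evaluated at $2$ gives $13$, which is prime, and then invokes Cohn's irreducibility criterion to conclude that the polynomial is irreducible over $\Q$, hence has no rational root. Your rational-root-theorem argument is more elementary and self-contained---it avoids citing an external irreducibility criterion and instead checks the two integer candidates $\pm1$ directly. Both routes in fact yield irreducibility (yours because a cubic over $\Q$ with no rational root is automatically irreducible), so neither approach buys anything the other does not; yours is simply the more direct one for the bare statement $\rho\notin\Q$.
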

\begin{proof}
$-\rho$ is the real root of the polynomial $x^3+x^2+1$. The polynomial evaluated at $2$ is $13$ which is prime, so by Cohn's irreducibility criterion \cite{brillhart1981irreducibility} the polynomial is irreducible in $\Q[x]$, so $\rho\notin\Q$.
\qed\end{proof}

\begin{lemma}\label{lmm:det beta_n bn identity}
For all $n\ge8$, $|\det\beta_n|=2|b_n|+|a_n|$.
\end{lemma}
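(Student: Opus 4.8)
The plan is to prove the identity by showing that both sides, viewed as sequences in $n$, satisfy the same linear recurrence for all large $n$, and then matching four consecutive values.

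For the left-hand side: expanding $\det\beta_n$ along its first column --- whose only nonzero entry is the $1$ in position $(n,1)$ --- gives $\det\beta_n=(-1)^{n+1}\det B_{n-1}$ (cf.\ Lemma \ref{lmm:beta_n det}), so $|\det\beta_n|=\det B_{n-1}$, and $\det B_m>0$ for every $m\ge1$ by the base values together with $\det B_m=\det B_{m-1}+\det B_{m-3}$. Substituting $\det B_{m-1}=\det B_{m-2}+\det B_{m-4}$ into that recursion gives $\det B_m=\det B_{m-2}+\det B_{m-3}+\det B_{m-4}$, hence $|\det\beta_n|=|\det\beta_{n-2}|+|\det\beta_{n-3}|+|\det\beta_{n-4}|$ for all large $n$.

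For the right-hand side, put $v_n:=2|b_n|+|a_n|$. First I would invoke Lemma \ref{lmm:bn sign} (so $|b_m|=(-1)^mb_m$ for $m\ge4$) to rewrite the defining recursion $b_m=b_{m-2}-b_{m-3}+b_{m-4}$ as the all-nonnegative recursion $|b_m|=|b_{m-2}|+|b_{m-3}|+|b_{m-4}|$, valid for $m\ge8$; a short induction off this ($|b_{n+1}|-|b_n|=|b_{n-1}|-|b_{n-4}|\ge0$) shows $|b_{n+1}|\ge|b_n|$ for all $n\ge8$. Since $a_n=b_n+b_{n+1}$ (Lemma \ref{lmm:anbn}) and $b_n,b_{n+1}$ have opposite signs, this monotonicity means there is no cancellation, so $|a_n|=|b_{n+1}|-|b_n|$ and therefore $v_n=2|b_n|+|a_n|=|b_n|+|b_{n+1}|$ for $n\ge8$. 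Substituting the $|b_m|$-recursion into $v_n=|b_n|+|b_{n+1}|$ and regrouping the six resulting terms in consecutive pairs yields $v_n=v_{n-2}+v_{n-3}+v_{n-4}$ for $n\ge12$ --- the same recurrence found for $|\det\beta_n|$.

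It then remains to check agreement on four consecutive indices: for $n=8,9,10,11$ a direct computation gives $|\det\beta_n|=\det B_{n-1}\in\{9,13,19,28\}$ and $2|b_n|+|a_n|\in\{9,13,19,28\}$, and a strong induction on the common recurrence extends the equality $|\det\beta_n|=2|b_n|+|a_n|$ to all $n\ge8$ (in fact to all $n\ge4$, though only $n\ge8$ is claimed). I expect the one genuinely non-mechanical ingredient to be the sign/monotonicity bookkeeping for $b_n$ and $a_n$ --- getting $|b_{n+1}|\ge|b_n|$ and, from it, the collapse $v_n=|b_n|+|b_{n+1}|$; everything else is routine. If preferred, $v_n=|b_n|+|b_{n+1}|$ can instead be read off directly from the explicit fully-mixed equilibrium of $(I_n,\beta_n)$ recorded just after Lemma \ref{lmm:beta_n nes}.
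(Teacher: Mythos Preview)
Your proof is correct and is essentially the paper's argument: both run an induction driven by the linear recurrence for $\det B_m$, and both hinge on the sign alternation of $b_n$ together with $a_n=b_n+b_{n+1}$ to collapse $2|b_n|+|a_n|$ into $|b_n|+|b_{n+1}|$. The only cosmetic difference is that the paper works with the two-term form $|\det\beta_n|=|\det\beta_{n-1}|+|\det\beta_{n-3}|$ and verifies $v_{n-1}+v_{n-3}=v_n$ by a direct algebraic chain, whereas you pass to the equivalent three-term recurrence $x_n=x_{n-2}+x_{n-3}+x_{n-4}$ on both sides and match four initial values.
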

\begin{proof}
$|\det\beta_n|=\det B_{n-1}=\det B_{n-2}+\det B_{n-4}$. So by induction (with the cases $n\le11$ easy to verify),
\begin{align*}
    \det B_{n-2}+\det B_{n-4}=2|b_{n-1}|+|a_{n-1}|+2|b_{n-3}|+|a_{n-3}|
\end{align*}
Using Lemma \ref{lmm:anbn}, and assuming wlog that $b_{n-1}>0$,
\begin{align*}
    &=2b_{n-1}-(b_{n-1}+b_n)+2b_{n-3}-(b_{n-3}+b_{n-2})\\
    &=b_{n-1}-b_n+b_{n-3}-b_{n-2}=b_{n+1}-b_n\\
    &=a_n-2b_n=|a_n|+2|b_n|
\end{align*}
proving the lemma for all $n\ge8$.
\qed\end{proof}

\begin{proof}[Proof of Lemma \ref{lmm:complexity of equilibrium}]
Assume wlog that $b_n>0$. Write $x=\p{\frac{p_1}{q},...,\frac{p_n}{q}}$ that satisfy $p_1,...,p_n,q\in\N$ and $\mathrm{lcm}(p_1,...,p_n)=q$. Consider the proof of Lemma \ref{lmm:beta_n nes}. We showed that $\frac{x_{n-1}}{x_{n-4}}=-\frac{b_n}{a_n}$, and that $x_{n-k}=a_kx_{n-1}+b_kx_{n-4}$. So by picking $b_n,-a_n$ and dividing by $g_n$ we get
\begin{align*}
    p_{n-1}=\frac{b_n}{g_n},~p_{n-4}=\frac{-a_n}{g_n}
\end{align*}
Then, applying the equality $x_{n-k}=a_kx_{n-1}+b_kx_{n-4}$ to the numerators of the components of $x$, we have
\begin{align*}
    p_{n-k}=\frac{a_kb_n-b_ka_n}{g_n}>0
\end{align*}
and $p_n=2p_{n-1}+p_{n-4}$ and $q=\sum p_i$.
Then,
$$1\le\gcd(p_1,...,p_n)\le\gcd(p_{n-1},p_{n-4})=1$$
so by definition $C(x)=q$. Let $p=(p_i)_{i\in[n]}$. Also, by definition of $\beta_n$ and that $(x,y)\in\NE$ we get for every $j\in[n]$ that $(p^T\beta_n)_j=(p^T\beta_n)_n=2p_{n-1}+p_{n-4}$, which by Lemma \ref{lmm:det beta_n bn identity} equals $\frac{|\det\beta_n|}{g_n}$. So,
\begin{align*}
    \p{\frac{g_n}{|\det\beta_n|}p}^T\beta_n=\bf1
\end{align*}
Therefore, by Cramer's rule,
\begin{align*}
    \frac{|K(\beta_n)|}{|\det\beta_n|}=\sum_ip_i=\frac{qg_n}{|\det\beta_n|}\Rightarrow C(x)=q=\frac{|K(\beta_n)|}{g_n}
\end{align*}
\qed\end{proof}

\begin{proof}[Proof of Lemma \ref{lmm:gcd upper bound}]
For convenience, abuse notation and set $b_n$ to be $|b_n|$ and set $\rho$ to be $|\rho|$ (so that $b_n,\rho>0$ for 
$n\ge3$). Let $\rho_n=\frac{b_{n+1}}{b_n}\in\Q$ and $g_n=\gcd(b_n,b_{n+1})=\gcd(b_n,\rho_nb_n)$. Asymptotically, $b_n\approx\rho^n$, so consider the computation of $\gcd(\rho^n,\rho^{n+1})$ using the Euclidean algorithm. The first iterations yield
\begin{align*}
    \gcd(\rho^n,&\rho\cdot\rho^n)\\
    \gcd(\rho^n,&(\rho-1)\rho^n)\\
    \gcd((3-2\rho)\rho^n,&(\rho-1)\rho^n)\\
    \gcd((3-2\rho)\rho^n,&(13\rho-19)\rho^n)\\
    \vdots
\end{align*}
As $\rho\notin\Q$, the algorithm will not terminate. Therefore we get a pair of degree-1 polynomial sequences: $p_k=p_k^1\rho-p_k^0,q_k=q_k^1\rho-q_k^0\in\Z[\rho]$ such that the $k$th iteration of the Euclidean algorithm yields $\gcd(\rho^n,\rho\cdot\rho^n)=\gcd(p_k\rho^n,q_k\rho^n)$. Define $p_1=1,q_1=\rho$. Again, as $\rho\notin\Q$, we get $p_kq_k>0$ for all $k\ge1$. In particular,
\begin{align*}
    \forall k\in\N~~~0<p_{2k-1}=p_{2k}<q_{2k-1},~0<q_{2k}=q_{2k+1}<p_{2k}
\end{align*}
Let $M_k$ be $p_k$ if $k$ is odd, and $q_k$ if even.

\begin{claim}
$0<M_k<2^{-\frac{k-1}{2}}$.
\end{claim}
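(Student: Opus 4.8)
The plan is to analyze the degree-one polynomials $p_k, q_k \in \Z[\rho]$ that track the successive remainders in the Euclidean algorithm run symbolically on the pair $(\rho^n, \rho \cdot \rho^n)$. Writing $M_k$ for the "active" polynomial at step $k$ (i.e. $p_k$ when $k$ is odd, $q_k$ when $k$ is even), I want to show $0 < M_k < 2^{-(k-1)/2}$. The inequality $M_k > 0$ is already in hand from the remark that $p_k q_k > 0$ for all $k \ge 1$ (a consequence of $\rho \notin \Q$, which forces every evaluated polynomial $p_k^1 \rho - p_k^0$ to be nonzero, and a sign/induction argument). So the real content is the geometric-decay upper bound.

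First I would set up the recursion for the $M_k$'s explicitly. Since the Euclidean step replaces the pair $(a,b)$ with $a > b$ by $(b, a \bmod b)$, and here we can ignore the common factor $\rho^n$, the polynomials satisfy $M_{k+1} = M_{k-1} - c_k M_k$ where $c_k = \lfloor M_{k-1}/M_k \rfloor \ge 1$ is the quotient digit at step $k$ (with the degenerate first steps, $q_1 = \rho$, $p_1 = 1$, $q_1 \bmod p_1$ corresponding to $\rho - 1$, handled by hand). From $M_{k+1} = M_{k-1} - c_k M_k$ and $c_k \ge 1$ together with the inequalities $0 < M_{k+1} < M_k$ (remainders strictly decrease in absolute value, and positivity is already established), I get $M_{k+1} = M_{k-1} - c_k M_k \le M_{k-1} - M_k < M_{k-1}$, but more usefully I want a two-step contraction: I'd argue $M_{k+1} + M_k \le M_{k-1}$, hence $M_{k+1} < M_{k-1} - M_k < M_{k-1}$, and crucially $M_{k+1} < M_{k-1}/2$ would follow if I can show $M_k \ge M_{k+1}$, i.e. that consecutive active values are decreasing — but that needs a touch more care because $M_k$ and $M_{k+1}$ are the remainders at consecutive steps and remainders do strictly decrease. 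Concretely: $M_{k+1} = M_{k-1} \bmod M_k$ so $M_{k+1} < M_k$, and therefore $2 M_{k+1} < M_{k+1} + M_k \le M_{k-1}$, giving $M_{k+1} < \tfrac12 M_{k-1}$.

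With that two-step halving, induction finishes it: from $M_1 = 1$ (taking $p_1 = 1$) and $M_2 = \rho - 1 \approx 0.4656 < 1 = 2^{-1/2}\cdot\sqrt2$... actually I should check the base cases $M_1 = 1 < 2^0 = 1$ fails as a strict inequality, so I'd need to be slightly careful about whether the claim is $\le$ or $<$ and where the indexing starts — likely the claim intends $k \ge 1$ with $M_1 = p_1 = 1$ and the bound $2^{-(1-1)/2} = 1$, so it should read with $\le$ at $k=1$ or the first index should be shifted; I'd reconcile this with the surrounding text, and for $k \ge 2$ use $M_2 < 1$ from the explicit value. Then $M_{k+2} < \tfrac12 M_k$ for all $k$ propagates $M_k < 2^{-(k-1)/2}$ (splitting into even and odd $k$ and peeling off two base cases each).

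The main obstacle I anticipate is not the contraction estimate itself but pinning down the exact sign/ordering facts about the symbolic remainders $p_k, q_k$ — in particular verifying that the symbolic Euclidean algorithm genuinely mimics the numerical one (the quotient digits $c_k = \lfloor M_{k-1}/M_k\rfloor$ are well-defined positive integers and $M_{k+1} = M_{k-1} - c_k M_k$ really is the next remainder, with $0 < M_{k+1} < M_k$), which is where the irrationality of $\rho$ does the work of preventing early termination and degenerate zero remainders. Once that bookkeeping is secured, the rest is the clean two-step geometric decay argument above, and the claim follows. This bound on $M_k$ then feeds back, via $b_n \approx \rho^n$ and $g_n = \gcd(b_n, b_{n+1})$ dividing $M_k \rho^n$ for the largest feasible $k$, into the proof that $g_n \le \gamma^n$ for $\gamma < \rho$ outside a sparse subsequence.
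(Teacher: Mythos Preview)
Your approach is essentially the same as the paper's: both use the Euclidean-remainder inequalities $M_{k+1}<M_k$ and $M_{k+1}\le M_{k-1}-M_k$ to obtain the two-step halving $M_{k+1}<\tfrac12 M_{k-1}$, then induct along the odd and even subsequences from $M_1=1$ and $M_2<1$. You even caught the $k=1$ boundary nit (the strict inequality becomes equality there), which the paper's proof glosses over.
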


\begin{proof}
The inequality to 0 is clear. Let $k\ge3$ be odd. We show that $p_k<\frac{1}{2}p_{k-1}$.
We have $p_k<p_{k-1}$, and $p_k=p_{k-1}-uq_{k-1}$ for the largest integer $u>0$ such that $p_k>0$. Therefore $p_k\le\frac{1}{2}p_{k-1}$. Also, $\rho\notin\Q$ so $p_k<\frac{1}{2}p_{k-1}$.
Therefore, $M_k=p_k<\frac{1}{2}p_{k-2}=\frac{1}{2}M_{k-2}<...<2^{-\frac{k-1}{2}}M_1=2^{-\frac{k-1}{2}}$.
For an even $k$ we similarly get $q_k<\frac{1}{2}q_{k-1}$. Then, $M_k=q_k<\frac{1}{2}q_{k-2}M_{k-2}<...<2^{-\frac{k-1}{2}}M_2<2^{-\frac{k-1}{2}}$.
\qed\end{proof}

Now we consider the evaluations $M_k(\rho_n)$.
Lemma \ref{lmm:anbn linear independence} implies that $\frac{b_{l+1}}{b_l}\ne\frac{b_{m+1}}{b_m}$ for every $l,m\ge7$. Since the root of $M_k$ is rational, there is at most one $l\ge7$ such that $M_k(\rho_l)=0$, and for all other $l$s $M_k(\rho_l)>0$. So given $n$, let $K_M(7)=1$ and for $n>7$ let
\begin{align*}
    K_M(n)=\begin{cases}
        \min\s{k\mid M_k(\rho_n)=0}&\s{k\mid M_k(\rho_n)=0}\ne\emptyset\\
        \max_{n'<n}K_M(n')+1&else
    \end{cases}
\end{align*}
Then by definition, $M_k(\rho_n)>0$ for all $k<K_M(n)$. Also, $K_M$ is an injective $\s{7,...}\to\N$ function. We use the following claim:
\begin{claim}
Except on a subsequence $m_n=\omega(n)$, $K_M(n)=\Omega(n)$.
\end{claim}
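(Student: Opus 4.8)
The plan is to pin the exceptional set down exactly and then show it is sparse. Split $\{7,8,\dots\}$ into the set $A$ of indices $n$ for which the first clause in the definition of $K_M$ applies (so $M_k(\rho_n)=0$ for some $k$) and its complement $A^c$. For $n\in A^c$ one has $K_M(n)=\max_{n'<n}K_M(n')+1$; since $K_M$ is injective, its values on $\{7,\dots,n-1\}$ are $n-7$ distinct positive integers, so their maximum is at least $n-7$ and $K_M(n)\ge n-6=\Omega(n)$. Thus the lower bound is automatic off $A$, and for any fixed $c<1$ and all large $n$ the inequality $K_M(n)<cn$ forces $n\in A$. So the claim reduces to showing $A$ has density zero, i.e.\ its increasing enumeration $m_1<m_2<\cdots$ satisfies $m_j=\omega(j)$.

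I would then reinterpret membership in $A$ as a Diophantine-approximation condition. For $n\in A$, $\rho_n=r_{K_M(n)}$, where $r_k:=M_k^0/M_k^1$ is the rational root of the linear polynomial $M_k=M_k^1x-M_k^0$. Since the $M_k$ are the successive remainders of the division Euclidean algorithm run on $(1,\rho)$, the $r_k$ are exactly the convergents of the continued fraction of $\rho$: their denominators $|M_k^1|$ grow at least as fast as Fibonacci numbers, and the bound $M_k(\rho)<2^{-(k-1)/2}$ combined with the standard estimate $|M_k^1|\,|M_k(\rho)|<1$ gives $|M_k^1|<2^{(k-1)/2}$ and $|r_k-\rho|<2^{-(k-1)/2}/|M_k^1|$. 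So $n\in A$ says precisely that the rational $\rho_n=|b_{n+1}|/|b_n|$ is a convergent of $\rho$.

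Next, the explicit formula \eqref{eq:bn} supplies a quantitative obstruction. It gives $|b_n|=\Theta(\rho^n)$, and it shows that $|b_{n+1}|-\rho|b_n|$ differs from a nonzero constant by a $\Theta(|z|^n)$ error (the $\rho^n$-terms cancel), so $|\rho_n-\rho|=\Theta(\rho^{-n})$, while the lowest-terms denominator of $\rho_n$ equals $|b_n|/g_n=\Theta(\rho^n)/g_n$. As every convergent of denominator $Q$ lies within $Q^{-2}$ of $\rho$, the previous step yields $\Theta(\rho^{-n})\le(\Theta(\rho^n)/g_n)^{-2}$, i.e.\ $g_n=\Omega(\rho^{n/2})$; hence $A\subseteq\{n:\gcd(b_n,b_{n+1})\ge c\,\rho^{n/2}\}$ for a suitable constant $c>0$. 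It then remains to show this set has density zero. The arithmetic leverage available is $\gcd(b_n,b_{n+1},b_{n+2},b_{n+3})=1$ for all $n$ (immediate from $b_{n+4}=b_{n+2}-b_{n+1}+b_n$ and the initial values $0,1,0,1$), which forces $\gcd(g_n,g_{n+2})=\gcd(g_n,g_{n+4})=1$; combining the resulting ``local coprimality'' of the $g_n$ with the lower bound $g_n=\Omega(\rho^{n/2})$ on $A$ and with the fact that each convergent of $\rho$ equals $\rho_n$ for at most one $n$ (the $\rho_n$ are distinct for $n\ge7$ by Lemma \ref{lmm:anbn linear independence}), a counting argument over the convergent denominators gives $|A\cap[1,N]|=o(N)$.

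The main obstacle is exactly this last sparsity estimate. Injectivity of $K_M$ by itself only delivers $|A\cap[1,N]|=O(N)$ — bounded density, not density zero — because there are $\Theta(N)$ convergents of $\rho$ whose denominator is small enough to be a possible value of $\rho_n$ for some $n\le N$, and a priori every one of them could be attained. Pushing down to $o(N)$ requires genuinely using the arithmetic of the recurrence $b_n$, i.e.\ that $\gcd(b_n,b_{n+1})$ can reach $\rho^{n/2}$ for only a vanishing proportion of $n$; everything else (the reduction to $A$, the identification of $A$ with the convergent condition, and the approximation computation) is routine.
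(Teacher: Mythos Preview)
Your proposal takes a completely different route from the paper's and, by your own admission, does not close. The paper's proof is a two-line pigeonhole argument using only the injectivity of $K_M$: assume for contradiction that there is some $c>0$, a subsequence $(m_l)$ with $m_l\le cl$ for all $l$, and a function $f(l)=o(l)$ with $K_M(m_l)\le f(l)$; then for any large $N$ the values $K_M(m_1),\dots,K_M(m_{\lfloor N/c\rfloor})$ are $\Omega(N)$ distinct positive integers all bounded by $\max_{l\le N/c}f(l)=o(N)$, which is impossible. No Diophantine approximation, no convergents, no analysis of $g_n$.

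Your decomposition into $A$ and $A^c$ is correct and useful (the definition of $K_M$ is set up precisely so that Case~2 forces $K_M(n)\ge n-6$), and your identification of $A$ with the set of $n$ for which $\rho_n$ is a continued-fraction convergent of $\rho$ is essentially right. But you then set yourself the harder target of showing that $A$ itself has density zero, and you correctly note that injectivity alone only yields $|A\cap[1,N]|=O(N)$. Your proposed fix --- combining the ``local coprimality'' of the $g_n$ with the convergent bound $g_n=\Omega(\rho^{n/2})$ on $A$ --- is left as a sketch, and as you concede, that final counting step is exactly where the difficulty sits. So the proposal has a genuine gap at its last step.

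The point you are missing is that the paper never tries to bound $A$ directly. By folding the hypothesis $K_M(m_l)=o(l)$ into the contradiction assumption, the paper's counting runs against an $o(N)$ ceiling rather than a $cN$ ceiling, and then injectivity is already enough. Your reduction to ``$A$ has density zero'' is valid but strictly stronger than what the paper's formulation of the contradiction needs, which is why you end up needing arithmetic input that the paper does not.
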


\begin{proof}
Suppose for contradiction that there is a $c>0$ and a subsequence satisfying $m_n\le cn$ for all $n$ and $K_M(m_n)\le f(n)=o(n)$. $c\ge1$ because $m_n\ge n$. Consider the first $\frac{n}{c}$ terms of $K_M$. At least $n$ of them will be of the form $K_M(m_l)$, and they satisfy $\max_lK_M(m_l)\le f\p{\frac{n}{c}}$, so because $K_M$ is injective and positive, this means that there are at most $f\p{\frac{n}{c}}$ terms of the form $K_M(m_l)$ in the first $\frac{n}{c}$ terms of $K_M$. But $f\p{\frac{n}{c}}<n$ for sufficiently large $n$, so we get a contradiction. Therefore $m_n=\omega(n)$.
\qed\end{proof}

Let $m_n$ be the subsequence from the claim, and suppose $K_M(n)\ge cn$ for some $c>0$ and for $n\ne m_l$. When running the Euclidean algorithm on $b_n,b_{n+1}$ (but not optimally, only so as to get the intermediate gcds $\gcd(p_ib_n,q_ib_n)$), we get for all sufficiently large $n\ne m_l$ that
\begin{align*}
    g_n=\gcd(b_n,\rho_nb_n)=\gcd(p_{K_M(n)}b_n,q_{K_M(n)}b_n)\le M_{K_M(n)}(\rho_n)b_n<2\cdot2^{-\frac{cn-1}{2}}
\end{align*}
where the last inequality is due to $M_{K_M(n)}$'s continuity, and that $\rho_n\to\rho$.
The lemma now follows for $\gamma=\sqrt2+0.01$.
\qed\end{proof}

\begin{proof}[Proof of Corollary \ref{cor:x2 is balanced}]
Let $p_1,...,p_n$ be the numerators of $x$, chosen so that $p_{n-1}=|b_n|,p_{n-4}=|a_n|,p_n=2p_{n-1}+p_{n-4}$, and for all $k\ne0,1,4$, $p_{n-k}=a_kp_{n-1}+b_kp_{n-4}$. Suppose wlog that $b_n>0$. Then by Lemma \ref{lmm:anbn}
\begin{align*}
    p_{n-k}=a_kb_n-b_ka_n=(b_k+b_{k+1})b_n-b_k(b_n+b_{n+1})=b_nb_{k+1}-b_{n+1}b_k
\end{align*}
Consider Equation \ref{eq:bn}. $t_k=2\Re(w_2z^k)$, so that $b_k=\frac{1}{3}+w_1\rho^n+t_l$. Then similarly to the proof of Lemma \ref{lmm:beta_n nes}, we have
\begin{align*}
    &b_nb_{k+1}-b_{n+1}b_k\\
    &=t_nt_{k+1}-t_{n+1}t_k+\frac{w_1}{3}(\rho-1)(\rho^k-\rho^n)+\frac{1}{3}(t_n+t_{k+1}-t_{n+1}-t_k)+\\
    &~~~~w_1(\rho^nt_{k+1}+\rho^{k+1}t_n-\rho^{n+1}t_k-\rho^kt_{n+1})\\
    &=\frac{w_1}{3}(\rho-1)(\rho^k-\rho^n)+\frac{1}{3}(t_{k+1}-t_k)+w_1(\rho^nt_{k+1}+\rho^{k+1}t_n-\rho^{n+1}t_k-\rho^kt_{n+1})+o_n(1)
\end{align*}
We further have
\begin{align*}
    \frac{b_nb_{k+1}-b_{n+1}b_k}{b_n}&=\frac{\frac{w_1}{3}(\rho-1)(\rho^k-\rho^n)+w_1(\rho^nt_{k+1}-\rho^{n+1}t_k)}{b_n}+o_n(1)\\
    &=\frac{1}{3}(\rho-1)(\rho^{k-n}-1)+t_{k+1}-\rho t_k+o_n(1)
\end{align*}
So for every $1<k<n$:
\begin{align*}
    \frac{x_{n-k}}{x_1}&=\frac{p_{n-k}}{p_1}=\frac{\frac{p_{n-k}}{b_n}}{\frac{p_1}{b_n}}=\frac{\frac{1}{3}(\rho-1)(\rho^{k-n}-1)+t_{k+1}-\rho t_k+o_n(1)}{\frac{1}{3}(\rho-1)(\rho^{1-n}-1)+t_2-\rho t_1+o_n(1)}\\
    &\le\frac{c(\rho-1)(\rho^{k-n}-1)}{(\rho-1)(\rho^{1-n}-1)+t_2-\rho t_1}+o_n(1)\\
    &\le c\p{1-\frac{t_2-\rho t_1}{(\rho-1)(\rho^{1-n}-1)+t_2-\rho t_1}}+o_n(1)\\
    &=O(1)
\end{align*}
For some $c=\Theta(1)$, and this is because $t_{k+1}-\rho t_k=O(|z|^k)=o_k(1)$ and $\frac{1}{3}(\rho-1)(\rho^{k-n}-1)=\Theta(1)$. For $x_n$ we have by definition, for some $c'=O(1)$:
\begin{align*}
    x_n=x_1+x_3\le c'x_1\Rightarrow\frac{x_n}{x_1}\le c'=O(1)
\end{align*}
Hence $\max_{i,j}\frac{x_i}{x_j}=O(1)$.
Therefore, all the numerators are at least $\frac{2^{\Theta(n)}}{\Theta(n)}$, so they would require in total $\Theta(n^2)$ bits.
\qed\end{proof}

\section{Omitted Proofs in Section \ref{subsec:constant}}\label{appendix:constant}
\begin{lemma}\label{lmm:pi tau matrix property}
Let $y$ and let $x=\pi(y)$. Then $B^Tx=\tau(By)$.
\end{lemma}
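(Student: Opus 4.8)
The plan is to verify the identity coordinate by coordinate, using only the defining property of $(\pi,\tau)$-symmetry and the fact that a permutation acts on $\R^n$ as an orthogonal map. Fix $i\in[n]$. The $i$-th coordinate of $B^Tx$ is the dot product of the $i$-th row of $B^T$ with $x$, that is $(B^Tx)_i=\inp{(B^T)_i,x}$. By $(\pi,\tau)$-symmetry this row is $(B^T)_i=\pi(B_{\tau(i)})$, so $(B^Tx)_i=\inp{\pi(B_{\tau(i)}),x}$.

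Next I would substitute $x=\pi(y)$, obtaining $(B^Tx)_i=\inp{\pi(B_{\tau(i)}),\pi(y)}$. Since $\pi$ only permutes coordinates, its matrix is a permutation matrix, hence orthogonal, so it preserves inner products: $\inp{\pi(u),\pi(v)}=\inp{u,v}$ for all $u,v\in\R^n$. Applying this with $u=B_{\tau(i)}$ and $v=y$ gives $(B^Tx)_i=\inp{B_{\tau(i)},y}=(By)_{\tau(i)}$. By the convention for the action of a permutation on a vector, $(By)_{\tau(i)}=(\tau(By))_i$, and since $i\in[n]$ was arbitrary we conclude $B^Tx=\tau(By)$.

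I do not expect any genuine obstacle here: this is essentially a one-line linear-algebra computation once unwound. The only point requiring care is the bookkeeping of directions — making sure that ``the $i$-th row of $B^T$ is column $i$ of $B$'', that the action of $\pi$ on a row vector used in the definition of $(\pi,\tau)$-symmetry, and the action of $\tau$ on the vector $By$ appearing in the statement, are all taken with the same (consistent) convention, so that the final step $(By)_{\tau(i)}=(\tau(By))_i$ is literally the definition rather than an off-by-one inverse. This consistency is exactly what is needed for the lemma to feed correctly into Lemma~\ref{lmm:2p0s unique fully mixed ne lemma}.
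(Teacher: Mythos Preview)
Your proof is correct and essentially identical to the paper's own argument: both compute $(B^Tx)_i$, invoke the defining relation $(B^T)_i=\pi(B_{\tau(i)})$, and then use that permuting both factors of an inner product leaves it unchanged (the paper writes this out as an explicit sum and reindexing, you phrase it as orthogonality of the permutation action). Your caveat about keeping the conventions for $\pi$ acting on vectors consistent is well placed and is exactly the bookkeeping the paper leaves implicit.
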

\begin{proof}
Let $i\in[n]$.
\begin{align*}
    (B^Tx)_i=\sum_jB_{j,i}y_{\pi(j)}=\sum_jB_{\tau(i),\pi(j)}y_{\pi(j)}=(By)_{\tau(j)}
\end{align*}
so $B^Tx=\tau(By)$.
\qed\end{proof}

A property of zero-sum games mentioned in \citet{avis2010enumeration} will also be used. For completeness, we state it here in terms of constant-sum games and prove it.
\begin{lemma}\label{lmm:2p0s ne set is a rect}
Let $(A,B)$ be a constant-sum game with $A+B={\bf1}_{n\times n}$, and let $\pi_i(\NE(A,B))$ be the equilibrium strategies of player $i$. Then $\NE(A,B)=\pi_1(\NE(A,B))\times\pi_2(\NE(A,B))$.
\end{lemma}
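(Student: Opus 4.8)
The plan is to prove the two inclusions separately. The inclusion $\NE(A,B)\subseteq\pi_1(\NE(A,B))\times\pi_2(\NE(A,B))$ is immediate from the definition of $\pi_1,\pi_2$. The content is the reverse inclusion, which I would obtain from the classical \emph{interchangeability} of equilibria in constant-sum games, proved directly from the best-response condition (\cite{roughgarden2010algorithmic}, Chapter 3) without appealing separately to von Neumann's minimax theorem.

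First I would record the constant-sum identity: for every mixed profile $(x,y)$,
\[
u_1(x,y)+u_2(x,y)=x^T(A+B)y=x^T{\bf1}_{n\times n}y=1,
\]
so that maximizing $u_2$ against a fixed $x$ is the same as minimizing $u_1$ against $x$. Hence for any $(x,y)\in\NE(A,B)$ the best-response condition yields both $u_1(x,y)=\max_{x'}u_1(x',y)$ (player $1$ best-responds) and $u_1(x,y)=\min_{y'}u_1(x,y')$ (player $2$ best-responds, equivalently minimizes $u_1$ by the identity above).

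Next, take arbitrary $(x_1,y_1),(x_2,y_2)\in\NE(A,B)$ (such equilibria exist by Nash's theorem; if $\NE(A,B)=\emptyset$ both sides of the claim are empty). Chaining the two identities above,
\[
u_1(x_1,y_1)=\max_{x'}u_1(x',y_1)\ge u_1(x_2,y_1)\ge\min_{y'}u_1(x_2,y')=u_1(x_2,y_2),
\]
and the symmetric chain with the two equilibria swapped gives $u_1(x_2,y_2)\ge u_1(x_1,y_1)$. Therefore all equilibria share a common value $v:=u_1(x_1,y_1)=u_1(x_2,y_2)$, and the inequalities in the chain collapse to equalities, so in particular $u_1(x_1,y_2)=v$. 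I would then check that $(x_1,y_2)$ is itself an equilibrium: for every $x'$, $u_1(x',y_2)\le\max_{x'}u_1(x',y_2)=u_1(x_2,y_2)=v=u_1(x_1,y_2)$, so player $1$ has no profitable deviation; and for every $y'$, using the constant-sum identity, $u_2(x_1,y')=1-u_1(x_1,y')\le 1-\min_{y'}u_1(x_1,y')=1-v=u_2(x_1,y_2)$, so player $2$ has no profitable deviation. Hence $(x_1,y_2)\in\NE(A,B)$. Finally, given $(x,y)\in\pi_1(\NE(A,B))\times\pi_2(\NE(A,B))$, choose $y_1$ with $(x,y_1)\in\NE(A,B)$ and $x_2$ with $(x_2,y)\in\NE(A,B)$, and apply interchangeability with $(x_1,y_1)=(x,y_1)$, $(x_2,y_2)=(x_2,y)$ to conclude $(x,y)\in\NE(A,B)$.

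The argument is essentially routine; the only point requiring a little care is translating player $2$'s best-response condition into a statement about $u_1$ via the identity $u_1+u_2=1$, which is what lets the common value (and hence interchangeability) drop out of a single chain of inequalities applied to two arbitrary equilibria, so no separate invocation of the minimax theorem is needed.
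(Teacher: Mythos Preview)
Your proof is correct but takes a different route from the paper. The paper's argument is a two-line appeal to von Neumann's minimax theorem: by minimax, $(x,y)\in\NE$ iff $x\in\argmax_s\min_t s^TAt$ and $y\in\argmax_t\min_s s^TBt$, and since these two conditions are independent, the product structure follows immediately. You instead establish the classical interchangeability property directly from the best-response condition and the constant-sum identity $u_1+u_2=1$: chaining best-response inequalities for two arbitrary equilibria forces a common value $v$, the chain collapses, and then you verify by hand that the mixed pair $(x_1,y_2)$ satisfies the best-response condition for both players. Your approach is more self-contained (it does not rely on minimax as a black box, and in effect re-derives the relevant consequence of it), while the paper's is shorter but leans on a heavier external result. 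One small cosmetic point: the equality $u_1(x_1,y_2)=v$ that you use comes from the \emph{symmetric} chain rather than the displayed one, which you do invoke but only implicitly; making this explicit would tighten the exposition.
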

\begin{proof}
By the minimax theorem \cite{v1928theorie} $(x,y)\in\NE$ iff both $x\in\argmax_s\min_ts^TAt$ and $y\in\argmax_t\min_ss^TBt$. The lemma now follows.
\qed\end{proof}

\begin{proof}[Proof of Lemma \ref{lmm:2p0s unique fully mixed ne lemma}]
First, we show that $(\Tilde{x},\pi^{-1}(\Tilde{x}))$ is a NE. We have $B^T\Tilde{x}={\bf1}_nv$ for some $v>0$. Since $(\Tilde{x},U_n)$ is fully mixed and $\det B\ne0$, we get that $B$ contains no all-1 columns. Therefore, $v=(B^T\Tilde{x})_i<\inp{{\bf1},\Tilde{x}}=1$. For $\pi^{-1}(\Tilde{x})$, we have
\begin{align*}
    ({\bf1}_{n\times n}-B)\pi^{-1}(\Tilde{x})={\bf1}_{n\times n}\pi^{-1}(\Tilde{x})-B\pi^{-1}(\Tilde{x})={\bf1}_n-\tau^{-1}(B^T\Tilde{x})=(1-v){\bf1}_n
\end{align*}
(by Lemma \ref{lmm:pi tau matrix property}). Therefore by the best-response condition \cite{roughgarden2010algorithmic}, it is a NE of the constant-sum game.
To see that it is unique, by the well-known matrix determinant lemma and the given,
\begin{align*}
    \det({\bf1}_{n\times n}-B)=\p{1-{\bf1}^TB^{-1}{\bf1}}\det(-B)=(-1)^n\p{1-\frac{K(B)}{\det B}}\det B\ne0
\end{align*}
Now let $(x,y)\in\NE({\bf1}_{n\times n}-B,B)$ with supports $I,J$. By Lemma \ref{lmm:2p0s ne set is a rect}, $(x,\pi^{-1}(\Tilde{x}))\in\NE$. $x$ satisfies
\begin{align*}
    B^Tx=v'{\bf1}_n,~~~\inp{x,{\bf1}_n}=1
\end{align*}
Since $B^T$ is invertible, $x=(B^T)^{-1}v'{\bf1}_n$, and $v'$ is such that $\inp{x,{\bf1}_n}=1$. By invertibility and uniqueness, we get $x=\Tilde{x}$. Similarly, we get $y=\pi^{-1}(\Tilde{x})$. Hence $(\Tilde{x},\pi^{-1}(\Tilde{x}))$ is the unique NE of the constant-sum game.
\qed\end{proof}

\begin{proof}[Proof of Theorem \ref{thm:2.3.2 formal}]
Let $p_1,...,p_n$ be the first $n$ primes, and let $N,B$ be the integer and the matrix from the proof of Theorem \ref{thm:2^sqrt n lower bound}.
$\sum_jB_{i,j}\le p_n=\Tilde\Theta(\sqrt{N})$. Therefore, by Lemma \ref{lmm:K det inequality strengthened}, $|K(B)|\ge\Tilde\Theta(\sqrt{N})|\det B|$. Also, $\det B=(-1)^{N-1}\prod_kp_k\ne0$ as was shown in Theorem \ref{thm:2^sqrt n lower bound}. Again, by Theorem \ref{thm:2^sqrt n lower bound}, $(I_N,B)$ has a unique NE and it is fully mixed. Also, by the construction of $B$, it is $(\pi,\pi)$-symmetric for the cycle $\pi=(n~n-1~...~1)$. Now we can use Lemma \ref{lmm:2p0s unique fully mixed ne lemma} to conclude that $({\bf1}_{N\times N}-B,B)$ has a unique NE $(\Tilde{x},\pi^{-1}(\Tilde{x})$ and it is fully mixed. Furthermore, by Theorem \ref{thm:2^sqrt n lower bound}, there holds
\begin{align*}
    C_1({\bf1}_{N\times N}-B,B)&=C_2({\bf1}_{N\times N}-B,B)=C(\Tilde{x})=C(\pi^{-1}(\Tilde{x}))\\
    &=\p{\prod_kp_k}\p{n+1+\sum_{k=1}^n\frac{1}{p_k}}=2^{\Tilde{\Theta}(\sqrt N)}
\end{align*}
Then, a similar argument to Corollary \ref{cor:sqrt n lower bound} establishes this lower bound for all positive integers.
\qed\end{proof}

\section{Special Cases}\label{sec:special}
\subsection{$2\times2$ Games}
\label{sec:2x2}
Let $A,B\in\Z^{2\times2}$. If $\PNE(A,B)\ne\emptyset$, then $C_i(A,B)=1$. Suppose that $\PNE(A,B)=\emptyset$.

Wlog $A_{11}>A_{21}$. This implies that $A_{12}<A_{22}$, and $B_{11}<B_{12}$ and $B_{21}>B_{22}$. By definition, the only equilibria of the game are given by $p,q\in(0,1)$ that satisfy
\begin{align*}
    qA_{11}+(1-q)A_{12}=qA_{21}+(1-q)A_{22},\\
    pB_{11}+(1-p)B_{21}=pB_{12}+(1-p)B_{22}
\end{align*}
Solving this yields
\begin{align*}
    p=\frac{B_{21}-B_{22}}{B_{12}-B_{11}+B_{21}-B_{22}},~q=\frac{A_{22}-A_{12}}{A_{11}-A_{21}+A_{22}-A_{12}}
\end{align*}
after reducing these 2 fractions, using the equality $\gcd(a,a+b)=\gcd(a,b)$, we get

\begin{theorem}
If $\PNE(A,B)=\emptyset$, then
\begin{align*}
    C_1(A,B)=\frac{B_{12}-B_{11}+B_{21}-B_{22}}{\gcd(B_{12}-B_{11},B_{21}-B_{22})},~C_2(A,B)=\frac{A_{11}-A_{21}+A_{22}-A_{12}}{\gcd(A_{11}-A_{21},A_{22}-A_{12})}
\end{align*}
\end{theorem}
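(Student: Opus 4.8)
The plan is to run the standard best-response/indifference analysis for $2\times2$ games (which is essentially already carried out in the discussion preceding the statement) and then justify the final reduction to lowest terms. Everything takes place under the wlog normalization $A_{11}>A_{21}$ made just above the statement, which forces the four strict inequalities $A_{12}<A_{22}$, $B_{11}<B_{12}$, $B_{21}>B_{22}$.

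First I would show that every $(x,y)\in\NE(A,B)$ is completely mixed. If player $1$ placed all mass on a pure strategy, the relevant strict inequality in a column of $B$ would force a unique pure best response for player $2$, and the relevant strict inequality in the corresponding row of $A$ would then force a unique pure best response for player $1$; chasing these two implications yields a pure strategy profile, contradicting $\PNE(A,B)=\emptyset$. The symmetric argument rules out player $2$ playing pure. Hence any NE has the form $((p,1-p),(q,1-q))$ with $p,q\in(0,1)$, and by the indifference condition such a profile is a NE iff $qA_{11}+(1-q)A_{12}=qA_{21}+(1-q)A_{22}$ and $pB_{11}+(1-p)B_{21}=pB_{12}+(1-p)B_{22}$. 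Solving these two linear equations gives $p=\tfrac{B_{21}-B_{22}}{(B_{12}-B_{11})+(B_{21}-B_{22})}$ and $q=\tfrac{A_{22}-A_{12}}{(A_{11}-A_{21})+(A_{22}-A_{12})}$; the strict inequalities make all of $B_{12}-B_{11}$, $B_{21}-B_{22}$, $A_{11}-A_{21}$, $A_{22}-A_{12}$ strictly positive, so $p,q\in(0,1)$ and, the linear system having a unique solution, this is the unique NE. Therefore $C_1(A,B)=C((p,1-p))$ and $C_2(A,B)=C((q,1-q))$.

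It remains to compute these complexities; I do $C_1$, the argument for $C_2$ being identical with $A$ in place of $B$. Writing $k=B_{12}-B_{11}>0$ and $m=B_{21}-B_{22}>0$, player $1$'s NE distribution is $\bigl(\tfrac{m}{m+k},\tfrac{k}{m+k}\bigr)$, so by Definition~\ref{def:complexity} we have $C((p,1-p))=\tfrac{m+k}{\gcd(m,k,m+k)}$. Using the identity $\gcd(a,a+b)=\gcd(a,b)$, one has $\gcd(m,k)\mid m+k$, hence $\gcd(m,k,m+k)=\gcd(\gcd(m,k),m+k)=\gcd(m,k)$, and thus $C_1(A,B)=\tfrac{(B_{12}-B_{11})+(B_{21}-B_{22})}{\gcd(B_{12}-B_{11},\,B_{21}-B_{22})}$, as claimed.

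I do not expect a genuine obstacle: the mixed-equilibrium computation is routine, and the only points needing an explicit line are (i) that the strict payoff inequalities exclude boundary equilibria, so the completely mixed profile really is the unique NE, and (ii) the gcd simplification $\gcd(m,k,m+k)=\gcd(m,k)$ that turns the raw fraction into its reduced form. The statement as written implicitly uses the wlog ordering; in the unnormalized form one would simply replace the differences by their absolute values, which follows by relabeling the pure strategies.
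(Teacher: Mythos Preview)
Your proposal is correct and follows essentially the same route as the paper: fix the wlog ordering, derive the unique completely mixed NE from the two indifference equations, and reduce the resulting fractions via $\gcd(a,a+b)=\gcd(a,b)$. The only difference is that you spell out explicitly why no boundary equilibria exist, which the paper leaves implicit in the phrase ``the only equilibria of the game are given by $p,q\in(0,1)$.''
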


\subsection{Permutation Games}
\label{sec:perm}
Let $\sym_n$ be the symmetric group of size $n$. Given $\pi\in\sym_n$, let $P_\pi$ be its permutation matrix, so that $P_\pi e_i=e_{\pi(i)}$. Then $P_\pi^T=P_{\pi}$. Write $\pi=\pi_1\cdots\pi_a$ and $\tau=\tau_1\cdots\tau_b$ the permutations in cycle form. Also write $\pi^{-1}\tau=(\pi^{-1}\tau)_1\cdots(\pi^{-1}\tau)_d$. Let $|(\pi^{-1}\tau)_k|$ be the length of the $k$th cycle, and let $J_k$ be the set of indices in that cycle (e.g. $\s{1,3,4}$ for the first cycle of $(143)(25)$). Let $\pi,\tau\in\sym_n$, and let $A=P_\pi,B=P_\tau\in\Z_2^{n\times n}$. Consider the game $(A,B)$.
We prove the following:

\begin{theorem}
$\NE(P_\pi,P_\tau)=\s{(U_{\pi(J)},U_J)\mid D\subseteq[d],~J=\bigcup_{k\in D}U_{J_k}}$, and therefore
$$C_1(A,B)=C_2(A,B)=\min_{k\le d}|(\pi^{-1}\tau)|_k$$
\end{theorem}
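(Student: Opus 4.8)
The plan is to exploit the fact that in a permutation game the best-response correspondence is itself governed by a permutation. First I would record the two payoff identities: against a mixed strategy $y$, player 1's payoff from pure strategy $i$ equals $(P_\pi y)_i=y_{\pi^{-1}(i)}$, and against $x$, player 2's payoff from pure strategy $j$ equals $(x^TP_\tau)_j=x_{\tau(j)}$. Hence $\br_1(y)=\pi(\argmax y)$ and $\br_2(x)=\tau^{-1}(\argmax x)$, where $\argmax$ of a distribution denotes the (nonempty) set of coordinates attaining the maximum value; note that for any distribution this set is contained in its support, since the maximal coordinate is positive.

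The core step is to extract the uniform structure. Let $(x,y)\in\NE(P_\pi,P_\tau)$ with $S=\supp x$ and $T=\supp y$. The best-response condition for mixed equilibria gives $S\subseteq\br_1(y)=\pi(\argmax y)\subseteq\pi(T)$ and $T\subseteq\br_2(x)=\tau^{-1}(\argmax x)\subseteq\tau^{-1}(S)$. Since $\pi,\tau$ are bijections this forces $|S|\le|T|\le|S|$, so every inclusion in both chains is an equality; in particular $\argmax y=T$ and $\argmax x=S$, which means $x=U_S$ and $y=U_T$, and moreover $S=\pi(T)$ and $T=\tau^{-1}(S)$. Combining the last two equalities gives $\pi(T)=\tau(T)$, i.e.\ $\sigma(T)=T$ for $\sigma:=\pi^{-1}\tau$, and a subset of $[n]$ is $\sigma$-invariant precisely when it is a union of cycles of $\sigma$. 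Thus $T=\bigcup_{k\in D}J_k$ for some nonempty $D\subseteq[d]$ and $x=U_{\pi(T)}$, which is exactly the stated form $(U_{\pi(J)},U_J)$. For the reverse inclusion I would check directly that each such pair is an equilibrium: if $J=\bigcup_{k\in D}J_k$ then $\pi(J)=\tau(J)$ by $\sigma$-invariance, player 1's payoff against $U_J$ from strategy $i$ equals $1/|J|$ when $i\in\pi(J)$ and $0$ otherwise, while player 2's payoff against $U_{\pi(J)}$ from strategy $j$ equals $1/|J|$ when $j\in\tau^{-1}(\pi(J))=\tau^{-1}(\tau(J))=J$ and $0$ otherwise, so in each case the player's strategy is supported exactly on the set of best responses.

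Finally, the complexity formula follows from $C(U_M)=|M|$ for any nonempty $M\subseteq[n]$ (the all-ones numerator vector has $\gcd$ equal to $1$), so that $C_1(P_\pi,P_\tau)=\min_{\emptyset\ne D\subseteq[d]}\bigl|\pi\bigl(\bigcup_{k\in D}J_k\bigr)\bigr|=\min_{\emptyset\ne D\subseteq[d]}\sum_{k\in D}|J_k|$, which is attained by a singleton $D$ and hence equals $\min_{k\le d}|(\pi^{-1}\tau)_k|$; the computation for $C_2$ is identical, since the equilibrium support of player 2 is also exactly $\bigcup_{k\in D}J_k$.

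I do not expect a genuine obstacle here. The one place that needs care is the support-chasing in the core step: the cyclic chain of inclusions must be used to conclude simultaneously that each support coincides with the corresponding $\argmax$ set (forcing the strategies to be uniform) \emph{and} that $S=\pi(T)$ holds with equality rather than mere inclusion. Everything else is a routine best-response verification together with a bijection count.
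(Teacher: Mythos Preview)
Your proposal is correct and follows essentially the same route as the paper's proof: both use the best-response condition to obtain a chain of inclusions between the supports (you via $S\subseteq\pi(\argmax y)\subseteq\pi(T)$ and $T\subseteq\tau^{-1}(\argmax x)\subseteq\tau^{-1}(S)$, the paper via $\tau(J)\subseteq I\subseteq\pi(J)$), then invoke the bijectivity of $\pi,\tau$ to collapse the chain to equalities, yielding simultaneously the uniformity of $x,y$ and the $\pi^{-1}\tau$-invariance of the support. Your write-up is slightly more explicit in verifying the reverse inclusion and in deriving the $C_i$ formula, but the argument is the same.
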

\begin{proof}
Take $(x,y)\in\NE(P_\pi,P_\tau)$ with supports $I,J$. Then $P_\tau^Tx=u{\bf1}_J+s^1,P_\pi y=v{\bf1}_I+s^2$ for some $u,v>0$, where $0\le s^1\le{\bf1}_{[n]\setminus J}\cdot u$ and $0\le s^2\le{\bf1}_{[n]\setminus I}\cdot v$. Looking more closely,
\begin{align*}
    P_\tau^Tx=P_{\tau^{-1}}x=\sum_{i\in I}x_ie_{\tau^{-1}(i)}=u{\bf1}_J+s^1
\end{align*}
So because $u>0$ we get $J\subseteq\tau^{-1}(I)$. Similarly, $I\subseteq\pi(J)$. So $\tau(J)\subseteq I\subseteq\pi(J)$, and these are all equalities because $\pi,\tau$ are permutations. In particular $s^1=s^2={\bf0}$, so $x=U_I,y=U_J$. Also, $J=\pi^{-1}\tau(J)$, so $J$ is a union of cycles of $\pi^{-1}\tau$, so $J=\bigcup_{k\in D}U_{J_k}$ for some $D\subseteq[d]$. Then, $I=\pi(J)$.
On the other hand, given any such $J$ and $I=\pi(J)$, the calculations above verify that $(U_{\pi(J)},U_J)\in\NE(P_\pi,P_\tau)$.
\qed\end{proof}

As a corollary, we also obtain $\NE(P_\pi,P_\tau)=\NE(I_n,P_{\pi^{-1}\tau})$.

\end{document}